\long\def\ca#1\cb{} %Use for commenting out: \ca...\cb
 \def\outl#1{} \def\np{} \def\xa{} \def\xb{}  
 \def\outl#1{\par{\medskip\noindent\hspace*{.5cm}\bf
      \mathversion{bold}#1\mathversion{normal}\smallskip} }
 \long\def\xa#1\xb{}
 \def\np{\newpage }
 \def\outl#1{\par{\medskip\noindent\hspace*{.5cm}\bf
      \mathversion{bold}#1\mathversion{normal}\smallskip} }
\def\np{} \def\xa{} \def\xb{}  
\newcommand{\avg}[1]{\langle #1\rangle }
\newcommand{\ket}[1]{|#1\rangle}               %ket
\newcommand{\colo}{\,\hbox{:}\,}              %colon in math with less space
\newcommand{\bra}[1]{\langle #1|}              %bra
\newcommand{\dya}[1]{\ket{#1}\bra{#1}}
\newcommand{\dyad}[2]{\ket{#1}\bra{#2}}        %dyad
\newcommand{\ip}[2]{\langle #1|#2\rangle}      %the inner product
\newcommand{\EC}{\mathcal{E}}
\newcommand{\FC}{\mathcal{F}}
\newcommand{\GC}{\mathcal{G}}
\newcommand{\HC}{\mathcal{H}}
\newcommand{\IC}{\mathcal{I}}
\newcommand{\LC}{\mathcal{L}}
\newcommand{\Tr}{{\rm Tr}}
\renewcommand{\geq}{\geqslant}
\renewcommand{\leq}{\leqslant}
\newcommand{\ot}{\otimes}
\newcommand{\ad}{^\dagger}
\newcommand{\trp}{^\textup{T}}
\newcommand{\al}{\alpha }
\newcommand{\Dl}{\Delta}
\newcommand{\lm}{\lambda }
\newcommand{\sg}{\sigma }
\newcommand{\Om}{\Omega }
\newtheoremstyle{example}{\topsep}{\topsep}%
{}%         Body font
{}%         Indent amount (empty = no indent, \parindent = para indent)
{\bfseries}% Thm head font
{.}%        Punctuation after thm head
{   }%     Space after thm head (\newline = linebreak)
{\thmname{#1}\thmnumber{ #2}}%\thmnote{ #3}}%         Thm head spec
\theoremstyle{example}
\newtheorem{example}{Example}%[subsection]
\newtheorem{theorem}{Theorem}
\newtheorem{lemma}[theorem]{Lemma}
\newtheorem{corollary}[theorem]{Corollary}
\begin{document}

%isp92.tex 12/4/10

\title{Information-theoretic treatment of tripartite systems and quantum channels}
\author{Patrick J. Coles}
\email{pcoles@andrew.cmu.edu}
\author{Li Yu}
\email{liy@andrew.cmu.edu}
\author{Vlad Gheorghiu}
\email{vgheorgh@andrew.cmu.edu}
\author{Robert B. Griffiths}
\email{rgrif@andrew.cmu.edu}
\affiliation{Department of Physics, Carnegie Mellon University, Pittsburgh, Pennsylvania 15213, USA}

\begin{abstract}
A Holevo measure is used to discuss how much information about a given POVM on system $a$ is present in another system $b$, and how this influences the presence or absence of information about a different POVM on $a$ in a third system $c$. The main goal is to extend information theorems for mutually unbiased bases or general bases to arbitrary POVMs, and especially to generalize ``all-or-nothing" theorems about information located in tripartite systems to the case of \emph{partial information}, in the form of quantitative inequalities. Some of the inequalities can be viewed as entropic uncertainty relations that apply in the presence of quantum side information, as in recent work by Berta et al.\ [Nature Physics 6, 659 (2010)]. All of the results also apply to quantum channels: e.g., if $\EC$ accurately transmits certain POVMs, the complementary channel $\FC$ will necessarily be noisy for certain other POVMs. While the inequalities are valid for mixed states of tripartite systems, restricting to pure states leads to the basis-invariance of the difference between the information about $a$ contained in $b$ and $c$.
\end{abstract}
\pacs{03.67.-a, 03.67.Hk}

\maketitle

\xa

\xb
\np
\section{Introduction}\label{sct1}
\xa

A significant part of current quantum information research can be understood
as an attempt to find answers to the following question: How much of what kind
of information about what is located where?  In this paper we provide specific
answers to these questions in the case of a general tripartite quantum system:
subsystems $a$, $b$, and $c$ are described by some sort of quantum state
(pre-probability) that induces a joint probability distribution on different
properties of these systems.  Appropriate statistical correlations can then be
thought of in terms of, for example, system $b$ containing information of some
sort about certain physical properties of system $a$.  To discuss how much
information of this kind is contained in or can be found in $b$ requires
some sort of quantitative measure, and it is natural to look for something
resembling the well-known Shannon measures in classical information; see
\cite{CvTh06} for a modern introduction to this subject.

Although it is rather natural to treat systems $a$, $b$ and $c$ on an equal
footing---and that is the perspective of this paper---one can also think of
the properties as existing at different times.  For example, $a$ might be the
entrance to a quantum channel with $b$, possibly but not necessarily the same
physical system, the output of the channel and $c$ the ``environment'' at this
later point in time.  Such a \emph{dynamical} perspective is well-known in
classical information theory as it applies to a noisy channel, where it can be
discussed using the same information measures, e.g., the mutual information
$H(X\colo Y)$, that apply to statistically-correlated systems (think of a
shared key used for cryptographic purposes) at the same time, a \emph{static}
perspective.  Both perspectives are also possible for problems in
quantum information theory, though this has not received as much attention as
we think it deserves, and viewing expressions which are formally the same (or
closely related) from distinct points of view can make a valuable contribution
to one's intuitive understanding of a situation. 

Of course, quantum information theory is more general than classical
information theory, so the conceptual ideas provided by the latter are
insufficient for discussing the quantum world.  In this paper we take the
perspective that a valuable way to think about the quantum case is to
distinguish different \emph{types} or \emph{species} of quantum information
\cite{Gri07}.  For example, if $a$ is a single qubit the distinction between
$\ket{0}$ and $\ket{1}$ constitutes the ``$z$'' type of information, whereas
the distinction between $\ket{+}$ and $\ket{-}$, with $\ket{\pm} =
(\ket{0}\pm\ket{1})/\sqrt{2}$, is the ``$x$ type.''
Each type by itself, even when it refers to
microscopic (thus ``quantum'') properties, follows the usual rules of
classical information theory.  This allows one to immediately transfer a large
body of mathematical formalism and associated physical intuition from the
classical to the quantum domain without risk of falling prey to
inconsistencies and paradoxes.  The quantum nature of the microscopic world
then manifests itself through the fact that incompatible types of information,
corresponding to non-commuting projective decompositions of the identity,
cannot be combined: this is the single framework rule (see, e.g., Ch.~16 of
\cite{CQT}) that allows a fully-consistent use of probabilities in the quantum
domain.\footnote{It is important to note that a type of information as defined here refers
primarily to a \emph{microscopic} quantum property rather than the outcome of
a measurement.  A correctly constructed measurement apparatus can reveal the
property of a microscopic system, so that, for example a Stern-Gerlach
apparatus followed by detectors can determine if the spin-half particle
entering the apparatus had $S_z=+\hbar/2$ or $-\hbar/2$, corresponding to the
qubit states $\ket{0}$ or $\ket{1}$, and in this case the $z$ information
initially possessed by the particle is translated into distinct macroscopic
apparatus states, making the $z$ information ``visible'' or ``classical.''
(For an important application to quantum information theory of the idea that a
macroscopic quantum outcome reveals a prior microscopic state, see \cite{GriNiu96}; for a detailed discussion of the measurement process in
fully quantum terms, see Chs.~17 and 18 of \cite{CQT}.)  However, the concept of $z$ information can also be used in
situations, such as when a qubit is just entering a quantum channel, where
trying to relate it to a measurement, at least as a physical process occurring
at that point in time, is not very helpful.}

In this paper we generalize the notion of a type of quantum information so
that it includes not only a projective decomposition of the identity, a set of
projectors that sum to the identity, but also a general POVM, a collection of
positive operators that sum to the identity. The idea, discussed in Sec.~\ref{sbct2.1}, is that while the operators in a POVM are in general not orthogonal, each corresponds to a projector on a larger Hilbert space, the Naimark extension (which is not unique), and the collection of such projectors sums to the identity on the larger space thus constituting a particular type of quantum information in the sense previously discussed.

The question ``How much?'' has motivated an ongoing search for measures that
extend the very useful idea of entanglement beyond bipartite pure states where
it was first introduced.  Despite a great deal of effort and a large number of
intriguing results \cite{HHHH09}, it seems fair to say that there remain a
large number of unanswered questions even for bipartite mixed states, not to
mention the multipartite case.  It is not obvious that a single number
representing the entanglement, or even a small collection of numbers, will
suffice to embody the physical insights needed for a better understanding of
such systems.  In this paper we introduce measures of information that depend
explicitly on the (quantum) type of information one is considering, so we can
address the question of, for example, how well a noisy quantum channel
performs for different types of input.  The definitions and a detailed
discussion of these measures will be found in Sec.~\ref{sct3}; at this point it
suffices to note that they are of the Holevo form using the quantum von
Neumann entropy, though in some cases they can be generalized using other
types of entropy.  

As well as direct quantitative measures of information certain
\emph{differences} in information measures, e.g., the amount of information of
a given type that is in $b$ minus how much is in $c$, are of interest.  We
refer to these as entropy or information \emph{biases}.  It is not without
interest that the \emph{coherent information}
\cite{NieChu00} when expressed in the language of tripartite
systems is (or least can be thought of as) such a bias; see
Sec.~\ref{sbct3.3}.  In Sec.~\ref{sct4} we show that under appropriate
circumstances an information bias will be independent of the type of
information under consideration.

One of the most striking features of quantum information is that if
information of a particular type corresponding to some orthonormal basis $w$
of system $a$ is perfectly present (perfect correlation, no noise) in system
$b$ for the quantum state under discussion, this prevents or excludes a type
of information $v$ corresponding to a basis mutually unbiased (MU) with respect to
$w$---that is, $v$ and $w$ are mutually-unbiased bases (MUBs)---from being
present in a third system $c$.  In Sec.~\ref{sct5} of this paper we
present quantitative generalizations of this and some other ``all-or-nothing''
theorems to situations in which, for example, almost all information of the
$w$ type of information about $a$ is in $b$ and one wants to bound how much
$v$ information, where $v$ is only approximately MU with respect to $w$, can
be present in $c$.

In particular, Theorem~\ref{thm5} in Sec.~\ref{sbct5.1} presents a bound of
this form.  It extends to POVMs an important inequality proved in
\cite{BertaEtAl}, earlier conjectured in \cite{RenesBoileau}, using a
somewhat simpler proof. This extension was also recently proven in \cite{TomRen2010} using smooth entropies; in contrast our proof approach is based on the relative entropy.
  Various consequences, including the application to a
channel and its complementary channel, are worked out in various corollaries.
As well as thinking of this result as a bound on the amounts of two strongly
incompatible (in the sense of almost MUB) types of information about $a$
present in different locations, Theorem~\ref{thm5} constitutes a generalized
entropic uncertainty relation for system $a$ when the coupling to another
system or systems is taken into account (``quantum side information'' in the
sense discussed in \cite{RenesBoileau, DevWin03}).

Several additional quantitative generalizations of all-or-nothing results are given in Secs.~\ref{sbct5.0}, \ref{sbct5.2}, and \ref{sbct5.3}. The all-or-nothing results can be succinctly stated as follows for orthonormal bases $u$, $v$, and $w$ of $a$, where $u$ and $v$ are MU relative to
$w$ (but not necessarily to each other). If the $w$ type of information is perfectly present in $b$, then (1) $\rho_{ac}$ is block diagonal in the $w$ basis (Lemma~\ref{thm4}), (2) the amount of $u$ information in $b$ is equal to the amount of $v$ information in $b$ (Theorem~\ref{thm8}), (3) if the $v$ information is perfectly present in $b$ then there is a perfect quantum channel from $a$ to $b$ (Theorem~\ref{thm10}), (4) if the $w$ information is completely absent from $c$, then no information about $a$ is in $c$: the two are decoupled (Theorem~\ref{thm11}).

The remainder of this paper is organized as follows.  Section~\ref{sct2} is an
introduction to tripartite systems, including the connection with quantum
channels and their complements, and provides details of what we mean by
different types of quantum information.  Various quantitative measures of
information are introduced, and some of their properties discussed, in
Sec.~\ref{sct3}. Our main results, which, as indicated above, provide
quantitative bounds on the location of various types of information in
different systems, occupy Secs.~\ref{sct4} and \ref{sct5}.  Section~\ref{sct6}
relates our work to various other approaches and publications.  A summary,
which provides an overview of how the different theorems are related to each
other, is in Sec.~\ref{sbct7.1}, followed by an indication of issues worth
further exploration in Sec.~\ref{sbct7.2}.  To make the main presentation
compact and easier to follow, all but the very shortest proofs have been
relegated to appendices.

\xb
\section{Systems with three parts}
\label{sct2}
\xa

\xb
\subsection{POVMs and types of information}
\label{sbct2.1}
\xa

Much work in contemporary quantum information theory is devoted to particular
instances of what may be called the \emph{tripartite system problem} defined
in the following way. Let $\HC_{abc}=\HC_a\ot\HC_b\ot\HC_c$ be a tensor
product of Hilbert spaces of dimensions $d_a$, $d_b$, $d_c$, all assumed to be
finite, and let
\begin{equation}
  I_a = \sum_j P_{aj},\quad I_b = \sum_k Q_{bk},\quad
I_c = \sum_l R_{cl}
\label{eqn1}
\end{equation}
be three POVMs, decompositions of their respective identities into \emph{finite} sets of positive operators, hereafter referred to as $P_a$, etc.\footnote{It is sometimes helpful to imagine the three parts as residing in three different places, say
three different laboratories where Alice, Bob, and Carol can carry out
separate preparations and measurements on them.}  
[Note that we use the symbols $a$, $b$, and $c$ as subscripts (but
occasionally on line) to label subsystems, and indices $j$, $k$, $l$, etc.\ to
label the POVM elements.]  What can be said about the joint probability
distribution
\begin{equation}
  \Pr(P_{aj},Q_{bk},R_{cl}) 
=\Tr(P_{aj} Q_{bk} R_{cl} \rho_{abc}),
\label{eqn2}
\end{equation}
where $\rho_{abc}$ is a density operator acting as a pre-probability
(generator of probabilities in the terminology of Sec.~9.4 of \cite{CQT}),
perhaps but not necessarily a projector $\dya{\Om}$ on the pure state
$\ket{\Om}$?  In particular, what is its
information-theoretic significance?  One is, of course, interested in how
these probabilities, and the corresponding marginal distributions such as 
\begin{align}
  \Pr(P_{aj},Q_{bk}) &= \sum_l  \Pr(P_{aj},Q_{bk},R_{cl})=  \Tr_{ab}(P_{aj}Q_{bk}\rho_{ab}),
\label{eqn3}
\end{align}
with $\rho_{ab}$ the partial trace over $\HC_c$ of $\rho_{abc}$, depend upon
the indices $j$, $k$, and $l$. But of equal, or even greater
interest is their dependence upon the \emph{choice of POVMs}
in \eqref{eqn1}. Here quantum theory, in contrast to classical
physics, allows an enormous number of possibilities.

In what follows we shall want to distinguish various different types of POVM.
A \emph{rank-1} POVM is one in which all the positive operators are of rank
1, which is to say proportional to projectors on one-dimensional spaces; we will employ symbols $L$, $M$, $N$ to denote such POVMs. When
all the POVM elements are projectors (orthogonal projection operators) we have
a \emph{projective decomposition} of the identity.  A rank-1 projective
decomposition is associated with an \emph{orthonormal basis}; e.g., the
orthonormal basis $w=\{\ket{w_j}\}$ of $\HC_a$ gives rise to the decomposition
\begin{equation}
  P_{aj} = \dyad{w_j}{w_j}.
\label{eqn4}
\end{equation}
In what follows we use the lower case letters $u$, $v$, and $w$ to denote
orthonormal bases, and where useful add a subscript, e.g., $w_a$, to indicate
the corresponding system or Hilbert space. A second basis $v=\{\ket{v_j}\}$ is
\emph{mutually unbiased} (MU) relative to $w$---the terms \emph{complementary}
or \emph{conjugate} are also in use---thus $v$ and $w$ are mutually unbiased
bases (MUBs), when $|\ip{v_j}{w_k}|=1/\sqrt{d_a}$ is independent of $j$ and
$k$.

Unlike a general POVM, a projective decomposition can be given a simple
physical interpretation: the projectors, or the subspaces onto which they
project, form a quantum sample space: a collection of mutually exclusive
physical properties one and only one of which is true; see Ch.~5 of
\cite{CQT}.  In previous work \cite{Gri07} such a projective decomposition was
called a \emph{type of information}: e.g., $\Pi_a=\{\Pi_{aj}\}$ is a type of
information about the system $a$.  Two types of information $\Pi_a$ and $\Phi_a$
about the same system are \emph{compatible} provided every projector in one
set commutes with every projector in the other set: $\Pi_{aj} \Phi_{ak} = \Phi_{ak}
\Pi_{aj}$ for every $j$ and $k$; otherwise they are \emph{incompatible}.  Two
distinct rank-1 projective decompositions, or the corresponding orthonormal
bases, are necessarily incompatible if they differ by more than simply
relabeling the projectors, and two MUBs are incompatible to the maximum extent
possible. Probabilistic arguments in quantum mechanics cannot combine results
from incompatible decompositions---the \emph{single framework} rule, see
Ch.~16 of \cite{CQT}---without risk of generating contradictions and
paradoxes.

However, in the present paper we generalize the notion of a type of
information about (say) system $a$ to include any POVM $P_a$ when interpreted
using a \emph{Naimark extension}; see \cite{Prs90b,JozsaEtAl03} or Sec.~9-6 of
\cite{Prs93}. Assume that the Hilbert space $\HC_a$ is a subspace of a larger
Hilbert space $\HC_A$, with $E_a$ the operator on $\HC_A$ that projects onto
$\HC_a$.  If $\HC_A$ has been appropriately chosen there is a projective
decomposition $\{\Pi_{Aj}\}$ of its identity $I_A$ such that
\begin{equation}
  P_{aj} = E_a \Pi_{Aj} E_a.
\label{eqn5}
\end{equation}
In addition, one can always arrange that for each $j$ the rank of $\Pi_{Aj}$ is
the same as the rank of $P_{aj}$, though one may need an additional projector,
call it $\Pi_{A0}$, which is orthogonal to $E_a$, so the corresponding $P_{a0}$
is the zero operator.  (It is possible to set things up so that the rank of
$\Pi_{Aj}$ exceeds that of $P_{aj}$, but in light of \eqref{eqn5} the reverse
is impossible.) An important special case used in proving later results is that any rank-1 POVM $N$ on $a$ is equivalent to some rank-1 projective decomposition (orthonormal basis) on $A$ \cite{JozsaEtAl03}. 
One can if one wishes think of $\HC_A$ as a tensor product
$\HC_a\ot \HC_e$, where $\HC_e$ is the Hilbert space of some reference system,
and $\HC_a$ is itself (isomorphic to) the subspace of kets of the form
$\ket{\psi}\ot\ket{e_0}$, with $\ket{e_0}$ a fixed, normalized ket in $\HC_e$.
In this case the density operator on $A$ is $\rho_A=\rho_{ae}=\rho_a\ot\dya{e_0}$, and $E_a$ in \eqref{eqn5} is simply $I_a\ot\dya{e_0}$. Starting with the projective decomposition $\Pi_A$ for the larger system $A$, one can
think of the corresponding positive operators defined in \eqref{eqn5} as
convenient mathematical tools for computing probabilities in those cases in
which the density operator $\rho_A$ has support in the subspace $\HC_a$ onto
which $E_a$ projects.  From this perspective, and using the corresponding
Naimark extensions for $b$ and $c$, one could reformulate the results given in
later sections of this paper in terms of projective decompositions on the
larger Hilbert spaces.  However, the use of POVMs provides in many cases a
simpler mathematical form, and a theorem that refers to an arbitrary POVM
obviously includes projective decompositions as particular cases.
Nonetheless, when thinking in physical or operational terms about the type of
information represented by a general POVM $P_a$ it is helpful to employ its
Naimark counterpart.

The information about a POVM $P_a = \{P_{aj}\}$ is said to be
\emph{completely} or \emph{perfectly present} in system $b$ provided the
conditional density operators
\begin{equation}
  p_j\rho_{bj}=\Tr_a(P_{aj} \rho_{ab});\quad p_j:=\Pr(P_{aj})=\Tr(P_{aj}\rho_a),
\label{eqn6}
\end{equation}
on $\HC_b$ are mutually orthogonal: $\rho_{bj}\rho_{bj'}=0$ for $j\neq j'$.
Conversely, this type of information is (completely) \emph{absent} from $b$
when the conditional density operators $\rho_{bj}$ are identical. One can
visualize this in terms of measurements as follows. Suppose a POVM $\{P_{aj}\}$ measurement is
carried out on system $a$. Can the value of $j$ be deduced by carrying out an appropriate
sort of measurement on system $b$? If the $\rho_{bj}$ are orthogonal to each
other this is clearly possible using (projective) measurements corresponding
to a suitable decomposition $\{Q_{bk}\}$ of $I_b$. But in the other extreme in
which the $\rho_{bj}$ are identical it is clear that no measurement on $b$
will provide any information about $j$. It is worth noting that one obtains the same values in \eqref{eqn6} by replacing the formulas with $p_j\rho_{bj}=\Tr_A(\Pi_{Aj} \rho_{Ab})$ and $p_j=\Tr(\Pi_{Aj}\rho_A)$, so all of our measures in Sec.~\ref{sbct3.2} quantifying the presence of the $P_a$ information in $b$, depending only on $\{p_j,\rho_{bj}\}$, will be unaffected by replacing $P_a$ with its Naimark extension $\Pi_A$.

\xb
\subsection{Quantum channels}
\label{sbct2.2}
\xa

In some sense the most natural way to state the various results given below in
Secs.~\ref{sct4} and \ref{sct5} is in terms of correlations in which all three
parts $a$, $b$, and $c$ are treated, at least formally, in a symmetrical
fashion.  But some of the more interesting applications are to quantum
channels and complementary channels, in which the channel entrance is not
treated in the same way, either formally or intuitively, as the channel
output.   Hence in order to facilitate application of our
results to the case of channels, we provide a brief explanation, using ideas
in \cite{Gri05, BenZyc06}, of why the ``tripartite'' and the ``channel'' problem are not only closely related to
each other, but in some sense identical problems in the case where one
restricts attention to a pure-state pre-probability $\ket{\Om}\in \HC_{abc}$.

\begin{figure}[h]
\begin{center}
\includegraphics{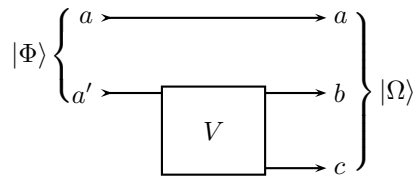}
\caption{%
  How $\ket{\Om}$ is produced by applying the isometry $V$ to an entangled
  state $\ket{\Phi}$.\label{fgr1}}
\end{center}
\end{figure}

Consider the situation shown in Fig.~\ref{fgr1} where
\begin{equation}
\ket{\Om} = (I_a\ot V)\ket{\Phi}.
\label{eqn7}
\end{equation}
is the result of applying an isometry 
\begin{equation}
  V = \sum_j\ket{s_j}\bra{a'_j}
\label{eqn8}
\end{equation}
to the $a'$ part of an entangled state $\ket{\Phi}\in\HC_a\ot\HC_{a'}$, with
$\HC_{a'}$ a copy (i.e., the same dimension) of $\HC_a$.  Here
$\{\ket{a'_j}\}$ is some orthonormal basis of $\HC_{a'}$
held fixed during the following discussion, we are assuming that $d_a\leq d_b
d_c$, and the requirement that $V$ be an isometry, which is to say $V\ad
V=I_a$ is equivalent to the assumption that the kets $\{\ket{s_j}\}$ form an
orthonormal collection spanning the subspace $\HC_s=V\HC_{a'}$ of $\HC_{bc}$.

If, in particular, $\ket{\Phi}$ is the fully-entangled state
\begin{equation}
  \ket{\Phi} =(1/\sqrt{d_a}) \sum_j \ket{a_j}\ot\ket{a'_j},
\label{eqn9}
\end{equation}
then
\begin{equation}
 \ket{\Om} =(1/\sqrt{d_a})  \sum_j \ket{a_j}\ot\ket{s_j},
\label{eqn10}
\end{equation} 
is an example of what in \cite{Gri05} is called a \emph{channel ket},
characterized by the property that
\begin{equation}
  \rho_a =\Tr_{bc}(\dyad{\Om}{\Om}) = I_a/d_a.
\label{eqn11}
\end{equation}
Indeed, given a pre-probability $\ket{\Om}$ such that \eqref{eqn11} holds,
it is necessarily a fully-entangled state on $\HC_a\ot\HC_{bc}$, so it will
have a Schmidt form \eqref{eqn10} for $\{\ket{a_j}\}$ a given orthonormal basis
of $\HC_a$, and using the orthonormal collection of states $\{\ket{s_j}\}$ 
corresponding to this Schmidt decomposition one can define a corresponding
isometry $V$ by means of \eqref{eqn8}.  Thus by employing map-state duality 
(see e.g. \cite{Gri05} or Ch. 11 of \cite{BenZyc06}) one can move from 
a channel ket $\ket{\Om}$ satisfying \eqref{eqn11} to an isometry $V$ or the
reverse. 

From an information-theoretic perspective the isometry $V$ corresponds to
saying that all information about the system $a$ is in the system $bc$, and in
fact in the subspace $\HC_s$ of $\HC_{bc}$ spanned by the $\ket{s_j}$ in
\eqref{eqn10}.  The partial traces onto $\HC_b$ and $\HC_c$ in a sense
``project down'' parts of this information onto these subsystems.  Thus, not
surprisingly, the projector $\Upsilon$ onto $\HC_s$ along with its partial traces
down to $\HC_b$ and $\HC_c$,
\begin{align}
\label{eqn12}
\Upsilon&=V V^\dag=\sum_j\dyad{s_j}{s_j},\nonumber \\
 \Upsilon_b&=\Tr_c(\Upsilon),\quad \Upsilon_c=\Tr_b(\Upsilon),
\end{align}
play useful roles in our thinking about these problems, as they in a sense
describe, in a basis-independent way, how the subspace $\HC_s$ is ``oriented''
relative to the factor spaces $\HC_b$ and $\HC_c$. Note that while $\Upsilon$ is a
projector, $\Upsilon_b$ and $\Upsilon_c$ are positive operators but (in general) not
projectors.

The isometry $V$ in \eqref{eqn8} can be used to define a \emph{quantum
  channel} from $a$ to $b$ through the superoperator
\begin{equation}
  \EC(A) = \Tr_c(VAV\ad) = \sum_l K^{}_l A K_l\ad
\label{eqn13}
\end{equation}
that maps the space $\LC(\HC_a)$ of operators on $\HC_a$ to the corresponding
space $\LC(\HC_b)$ of operators on $\HC_b$. Here the Kraus operators are maps
from $\HC_a$ to $\HC_b$ of the form
\begin{equation}
  K_l = \bra{c_l}V=\sum_j \ip{c_l}{s_j}\bra{a_j},
\label{eqn14}
\end{equation}
where $\{\ket{c_l}\}$ is an orthonormal basis of $\HC_c$, and $\ip{c_l}{s_j}$
is a ket in $\HC_b$, defined in an obvious way, not just a complex number.
Because $V$ is an isometry the Kraus operators satisfy the usual closure
condition
\begin{equation}
  \sum K_l\ad K^{}_l = I_a.
\label{eqn15}
\end{equation}
The \emph{complementary} channel from $a$ to $c$, 
\begin{equation}
  \FC(A) = \Tr_b(V A V\ad) = \sum_m L^{}_m A L_m\ad,
\label{eqn16}
\end{equation}
 is defined in a similar way with Kraus operators
\begin{equation}
  L_m = \bra{b_m}V=\sum_j \ip{b_m}{s_j}\bra{a_j},
\label{eqn17}
\end{equation}
for $\{\ket{b_m}\}$ some orthonormal basis of $\HC_b$, and these again satisfy
the closure condition analogous to \eqref{eqn15}.

The superoperators $\EC$ and $\FC$, and their adjoints $\EC\ad$ and $\FC\ad$
relative to the usual Frobenius inner product $\avg{P,\,Q}=\Tr (P\ad Q)$, can
be expressed directly in terms of $\rho_{abc}=\dyad{\Om}{\Om}$, or its partial
traces such as $\rho_{ab}$, using formulas such as
\begin{align}
  \EC(A) &= d_a\Tr_a[(A\trp\ot I_b)\rho_{ab}],\nonumber \\
  [\EC\ad(B)]\trp &= d_a \Tr_b[(I_a\ot B) \rho_{ab}],
\label{eqn18}
\end{align}
where $\trp$ denotes the transpose relative to the
basis $\{\ket{a_j}\}$ employed in \eqref{eqn9} and \eqref{eqn10}.  The
complete positivity of $\EC$ is equivalent to the requirement that $\rho_{ab}$
be a positive operator; in some respects this is  simpler and more compact
than the traditional definition.  For it to be trace
preserving it is necessary that $\rho_a$ be $I_a/d_a$, \eqref{eqn11}.  Since in
general neither $\rho_b$ nor $\rho_c$ is proportional to the corresponding
identity, the adjoints $\EC\ad$ and $\FC\ad$ are not (in general) trace
preserving, and in this sense are not quantum channels. This is one respect in
which ``tripartite'' language is more flexible than ``channel'' language.

It is also worth observing that the superoperator $\EC$ uniquely determines
$\ket{\Om}$ up to local unitaries on $\HC_a$ and $\HC_c$ for a fixed $d_c$.
This is because a set of Kraus operators is generated, \eqref{eqn14}, using an
orthonormal basis $\{\ket{c_l}\}$ of $\HC_c$, and one can invert the process
by writing $V=\sum_l\ket{c_l}K_l$, where of course the result depends on the
choice of basis $\{\ket{c_l}\}$.  Different orthonormal bases on $\HC_c$, as
is well-known, simply give rise to different collections of Kraus operators
which represent the same quantum channel or operation. In this sense a channel
completely determines its complementary channel for a fixed $d_c$, and vice
versa, up to local unitaries. However, different insights may emerge by
considering one rather than the other, or by thinking about the two together.

We say there exists a \emph{perfect quantum channel} from $a$ to $b$ when all
types of information about $a$ are perfectly present in $b$.  This by itself
implies that $\rho_a=I_a/d_a$ (see Theorem 3 in \cite{Gri05}), and thus $\EC$
in \eqref{eqn18} is trace-preserving. It obviously suffices to check that the
information associated with every orthonormal basis is present in $b$, but
there are also weaker conditions that ensure the presence of a perfect
quantum channel; e.g. see \cite{Gri07, ChristWinterIEEE2005} and the discussion in Sec.~\ref{sbct5.3}.

A more general relationship is possible between an isometry $V$ and a
tripartite pure state, by starting with  \eqref{eqn7}, the circuit in
Fig.~\ref{fgr1}, but assuming that $\ket{\Phi}$, while no longer fully
entangled, has full Schmidt rank:
\begin{equation}
  \ket{\Phi} = \sum_k\sqrt{\pi_k}\, \ket{a_k}\ot\ket{a'_k},
\label{eqn19}
\end{equation}
with $\pi_k>0$ for every $k$.   With 
$V$ an isometry of the form \eqref{eqn8}, $j$ replaced by $k$, and 
\begin{equation}
  \rho_a = \sum_k \pi_k\dya{a_k}
\label{eqn22}
\end{equation}
the partial trace of $\dya{\Phi}$ down to $\HC_a$, one has
\begin{equation}
  \rho_b = \Tr_{ac}(\dya{\Om}) = \EC(\rho_a),
\label{eqn20}
\end{equation}
where $\EC$ is the superoperator corresponding to $V$ through \eqref{eqn13}.
A similar result holds for the complementary $a$ to $c$ channel.
The ket $\ket{\Om}$ determines the projector $\Upsilon=VV\ad$ uniquely, but $V$
itself only up to a unitary transformation on $\HC_a$. Conversely, two
isometries $V$
and $\widetilde{V}$ giving rise to the same $\Upsilon$ can be used to generate the
same $\ket{\Om}$ by using two different entangled states $\ket{\Phi}$ and
$\ket{\widetilde{\Phi}}$.

The partially entangled $\ket{\Phi}$ \eqref{eqn19} is useful when addressing the following
question: Suppose an ensemble $\{p_j,\rho_j\}$ of states is sent through the
quantum channel $\EC$.  How can one relate the outputs $\EC(\rho_j)$ of the
channel to corresponding outcomes of a POVM measurement $P_a$ on the
tripartite state $\ket{\Om}$?  Suppose the density operator $\rho_a=\sum_j
p_j\rho_j$ for the ensemble is of the form \eqref{eqn22}, i.e., choose
$\ket{\Phi}$ in \eqref{eqn19} such that this is the case. Then define
$P_a$ through
\begin{equation}
  P\trp_{aj} = p_j W\rho_j W\ad,
\label{eqn24}
\end{equation}
where $\trp$ denotes the transpose in the basis $\{\ket{a_k}\}$, and 
\begin{equation}
  W = \sum_k(1/\sqrt{\pi_k})\dyad{a_k}{a'_k}.
\label{eqn25}
\end{equation}
It is straightforward to show that $P_{aj}$ is a positive operator with the
same rank as $\rho_j$ (since $W$ is nonsingular), and $\sum_jP_{aj} =
I_a$. The probability of outcome $j$ for the POVM is
$p_j$, and the corresponding conditional density operator is
\begin{equation}
  \rho_{bcj} = \Tr_a(P_{aj}\rho_{abc}) = V\rho_j V\ad,
\label{eqn26}
\end{equation}
with $\rho_{abc} = \dya{\Om}$. 
Tracing this down to $b$ yields $\EC(\rho_j)$, the outcome when $\rho_j$ is
sent through the channel.

The preceding discussion requires some fairly obvious changes if some of the
$\pi_k$ in \eqref{eqn19} are zero.  First, $\Upsilon=VV\ad$ is not uniquely
determined by $\ket{\Om}$, since the $\ket{s_k}$ in \eqref{eqn8} corresponding
to zero $\pi_k$ are unknown.  Second, the sum in \eqref{eqn25} must be
restricted to the $k$ with $\pi_k > 0$, whereas \eqref{eqn24} remains the same.

\xb
\section{Information measures}
\label{sct3}
\xa

\xb
\subsection{Entropies}
\label{sbct3.1}
\xa

All the information measures that we will introduce are based on some sort of
entropy.  In classical information theory \cite{CvTh06} the usual
starting point is the Shannon entropy
\begin{equation}
  H(P)= H(\{p_j\}) = -\sum_j p_j\log p_j,
\label{eqn27}
\end{equation}
where $P$ denotes a random variable or its corresponding probability
distribution. Given two random variables $P$ and $Q$ the entropy $H(P,Q)$ is
obtained by replacing $p_j$ in \eqref{eqn27} by the joint probability
distribution $p_{jk}=\Pr(P_j,Q_k)$ and summing over both $j$ and $k$.  The
\emph{conditional entropy} and \emph{mutual information} are then defined by:
\begin{align}
  H(P|Q) &= H(P,Q) - H(Q),\nonumber\\
   H(P\colo Q) &= H(P)+H(Q) - H(P,Q).
\label{eqn28}
\end{align}

The quantum entropy most closely analogous to Shannon's $H$ is the von Neumann
entropy 
\begin{equation}
  S(\rho)=-\Tr (\rho \log \rho),
\label{eqn29}
\end{equation}
but we have also studied some other possibilities:
\begin{align}
S_R(\rho)&=\frac{1}{1-q} \log \Tr (\rho^q),
 \quad 0< q\leqslant 1,
\notag\\
S_T(\rho)&=\frac{1}{1-q} [\Tr (\rho^q)-1],
 \quad 0 < q\leqslant \infty,
\notag\\
S_Q(\rho)&=1-\Tr (\rho^2).
\label{eqn30}
\end{align} 
Here  $S_R$, $S_T$, and $S_Q$ are the \emph{Renyi, Tsallis},
and \emph{quadratic} (often misleadingly called \emph{linear}) entropies,
respectively.  Some of our results are valid for all these entropies, in which
case they are stated for $S_K$, where $K$ denotes either no subscript (von
Neumann) or else one of the three symbols $R,T,Q$.

All of these entropies are strictly concave, $S_K(\sum p_j\rho_j)\geqslant
\sum p_jS_K(\rho_j)$ for $0<p_j<1$ and $\sum p_j=1$, with equality if and only
if all $\rho_j$'s are equal, provided the parameter $q$ in the case of $S_R$
and $S_T$ is in range specified in \eqref{eqn30}.  Both $S_R$ and $S_T$ are
equal to $S$ in the limit $q=1$, and $S_T$ interpolates between $S$ and $S_Q$
as $q$ goes from 1 to 2.\footnote{For this remark (that $S_T=S$ for $q=1$) to be true, one should use base $e$ for the log appearing in $S$; however, we note that all other remarks and results in this article are valid for arbitrary base of the log.} The entropies $S$, $S_Q$, and $S_T$ for $q\geqslant
1$, are subadditive \cite{SubaddivityQ} in the sense that
$S_K(\rho_a)+S_K(\rho_b)\geqslant S_K(\rho_{ab})$, but only the von Neumann
$S$ has the property of strong subadditivity on a tripartite system (p. 519 of
\cite{NieChu00}):
\begin{equation}
\label{eqn31}
S(\rho_{ab})+S(\rho_{bc})\geqslant S(\rho_{abc})+S(\rho_b).
\end{equation}

Given a bipartite quantum system with a density operator $\rho_{ab}$, partial
traces $\rho_a$ and $\rho_b$, the \emph{quantum conditional entropy} and the
\emph{quantum mutual information} are defined as (p. 514 of \cite{NieChu00})
\begin{align}
  S(a|b) &= S(\rho_{ab}) - S(\rho_b),\nonumber\\ 
 S(a\colo b) &= S(\rho_a) + S(\rho_b)  - S(\rho_{ab}),  
\label{eqn32}
\end{align}
which are formally analogous to the quantities in \eqref{eqn28}. Note that $S(a|b)$ can be negative. 
On the other hand, $S(a\colo b)$ is nonnegative and vanishes for a product state
$\rho_{ab} = \rho_a\ot\rho_b$, and thus can be regarded in some sense as a measure of
how much information about $a$ is in $b$ or vice versa. Thought of in this way
it has the property that for a tripartite system $abc$,
\begin{equation}
\label{eqn33}
S(a\colo bc) \geqslant S(a\colo b),
\end{equation}
i.e., there is less information about $a$ in $b$, a subsystem of $bc$, than
there is in $bc$, which seems a reasonable requirement for a measure of
information. Note that \eqref{eqn33} is equivalent to \eqref{eqn31}, a property not shared (in general) by
the other entropies defined in \eqref{eqn30}.

We shall later prove our main result using the relative entropy:
\begin{equation}
\label{eqn33aa}
S(\rho||\sg)=\Tr(\rho\log \rho)-\Tr(\rho\log \sg),
\end{equation}
which has the useful property \cite{VedralReview02} that it is non-increasing under the action of a quantum channel $\EC$,
\begin{equation}
\label{eqn33bb}
S(\rho||\sg)\geq S(\EC(\rho)||\EC(\sg)).
\end{equation}
The extension of \eqref{eqn33aa} to general positive operators is natural, and \cite{OhPe93} for any positive operators $A$, $B$, and $C$, if $C\geq B$ (i.e.\ $C-B$ is a positive operator),
\begin{equation}
\label{eqn33cc}
S(A||B)\geq S(A||C).
\end{equation}

\xb
\subsection{Distinguishability measures}
\label{sbct3.2}
\xa

While $S(a\colo b)$ can serve as an overall indication of how much information
about $a$ is in $b$, or vice versa, it is not a measure that depends on the
\emph{type} of information, so cannot be used to compare how well different
types of information about $a$ are found in, or transmitted to $b$.  For this
purpose one could use a \emph{fidelity} measure: how closely a state on
$\HC_b$ resembles its counterpart on $\HC_a$. However, this requires making
some identification between the two Hilbert spaces, which is not easy to do if
they are of different dimension, or else one needs an additional map or
channel to carry $\HC_b$ back to $\HC_a$.  For this and other reasons we
prefer to use a \emph{distinguishability} measure.  Thus suppose
$P_a=\{P_{aj}\}$ is a decomposition of the identity $I_a$ of $\HC_a$,
\eqref{eqn1}, and $\{p_j,\rho_{bj}\}$ is the ensemble of conditional states on
$\HC_b$ defined in \eqref{eqn6}.  Two extreme cases were discussed in
Sec.~\ref{sbct2.1}: that in which the $P_a$ type of information is perfectly
present in $b$, which means $\rho_{bj}\rho_{bk}=0$ for $j\neq k$, thus
conditional density operators perfectly distinguishable; and the $P_a$ type of
information (completely) absent from $b$, meaning the $\rho_{bj}$ are
identical for all $j$ and thus indistinguishable. Our goal is to assign
numerical values to situations lying between these extremes.

Ideally one might use some collection of numbers referring to the
distinguishability of every pair of conditional density operators $\rho_{bj}$, see \cite{FuchsThesis} for an overview of distinguishability measures for two density operators. However, we shall employ a much coarser but still useful characterization in
which a single number, in some sense an ``average'' distinguishability, is
assigned to each information type, thereby allowing us to focus on our primary
goal: elucidating how the amount of information depends upon the type
considered, for a given pre-probability (density operator or channel). As is
customary in information theory we want a measure that is nonnegative, that
is (formally) invariant under local unitary operations, and, naturally, we prefer
simple mathematical expressions that have a clear intuitive
interpretation.  This still leaves many possibilities, but among them we have
found that measures based on the Holevo function 
\begin{equation}
\chi_K(\{p_j,\rho_j\}) = S_K(\sum_j p_j\rho_j)-\sum_j p_j S_K(\rho_j)
\label{eqn34}
\end{equation}
are particularly useful, where $\{p_j,\rho_j\}$ denotes an \emph{ensemble}
associated with a particular Hilbert space $\HC$: each $\rho_j$ a density
operator on this space, and the $\{p_j\}$ a probability distribution.  Here
$S_K$ could be any of the entropies defined in \eqref{eqn29} or
\eqref{eqn30}; $S$ without a subscript refers to the von Neumann entropy, and
the corresponding $\chi$ has no subscript.  Because each of these entropies is
a strictly concave function (for $q$ in the appropriate range indicated in
\eqref{eqn30}), $\chi_K$ is nonnegative and equal to zero if and only if the
$\rho_j$ are identical.

When \eqref{eqn34} is applied to the ensemble
$\{p_j,\rho_{bj}\}$ of \eqref{eqn6}, states in $\HC_b$ conditional on the
decomposition $P_a = \{P_{aj}\}$ in \eqref{eqn1}, the result is
\begin{equation}
\chi_K(P_a,b) := S_K(\rho_b)-\sum_j p_j S_K(\rho_{bj}),
\label{eqn35}
\end{equation}
a measure of the amount of information of type $P_a$ in $b$. This is also a numerical measure of what is sometimes called quantum side information \cite{RenesBoileau, DevWin03}.

While $P_a$ can refer to a general projective decomposition of $I_a$ or a POVM, we will often be interested in an orthonormal basis $\{\ket{w_j}\}$, projectors $\dya{w_j}$, of $\HC_a$, in which case we will write $\chi_K(w,b)$, omitting the $a$ subscript when it is obvious from the context. One can easily show using the concavity of $S_K$ that
\begin{equation}
\label{eqn36}
\chi_K(P_a,b) \geq \chi_K(\widetilde P_a,b),
\end{equation}
where $P_a$ and $\widetilde P_a$ are POVMs, and $\widetilde P_a$ is a coarse-graining of $P_a$ formed by summing some of the $P_{aj}$ elements. Also, as a consequence of \eqref{eqn31}, see \cite{SWW96}, 
\begin{equation}
\label{eqn37}
\chi(P_a,bc) \geq \chi(P_a,b),
\end{equation}
so a subsystem $b$ of $bc$ cannot contain more information than $bc$ itself. (This does not hold for $\chi_K$ with $K=R$, $T$ or $Q$.)

In the case of a quantum channel $\EC$ \eqref{eqn13} from $a$ to $b$ associated with isometry $V$ from $a$ to $bc$, we define
\begin{equation}
\label{eqn38}
\chi_K(P_a,\EC):=S_K[\EC(\sum p_j \rho_{aj})]-\sum p_j S_K[\EC(\rho_{aj})],
\end{equation}
where $P_a$ is a POVM, $I_a=\sum P_{aj}=d_a \sum p_j\rho_{aj}$, with 
\begin{equation}
\rho_{aj}=P_{aj}/\Tr(P_{aj}),\quad p_j=\Tr (P_{aj})/d_a.
\label{eqn39}
\end{equation}
Note that $\EC(\sum p_j \rho_{aj})=\Tr_c(VV\ad)/d_a=\Upsilon_b/d_a$ [see \eqref{eqn12}] in the first term of \eqref{eqn38} is independent of the POVM $P_a$. Equation \eqref{eqn38} is some measure for how well $\EC$ preserves the distinguishability of the $P_a$ ensemble; e.g. if $\EC$ perfectly preserves the orthogonality of an input orthonormal basis $w$ then $\chi(w,\EC)=\log d_a$, otherwise $\chi(w,\EC)< \log d_a$ (see Lemma~\ref{thm1} below). 

In contrast to $\chi(P_a,b)$, the quantity \cite{DevWin03}
\begin{equation}
H(P_a|b) := H(P_a) - \chi(P_a,b)
\label{eqn40}
\end{equation}
is a measure of \emph{absence} of the $P_a$ type of information from $b$, where $H(P_a)$ is the Shannon entropy \eqref{eqn27} associated with the probabilities defined in \eqref{eqn6}.\footnote{Following \cite{NieChu00}, we use $H$ for classical entropy and $S$ for quantum entropy. For conditional entropy, we use $H$ if the first argument is classical as in \eqref{eqn40}, and $S$ if the first argument is more general (quantum) as in \eqref{eqn32}.} One can also think of $H(P_a|b)$ as the \emph{missing information} about $P_a$ given the quantum system $b$, and it is a natural quantum analog of $H(P_a|Q_b)=H(P_a)-H(P_a\colo Q_b)$ [see \eqref{eqn28}], where one identifies $\chi(P_a,b)$ as a quantum analog of $H(P_a\colo Q_b)$.\footnote{It is straightforward to show that $\chi(P_a,b)$ becomes $H(P_a\colo Q_b)$ if one replaces the conditional density operators $\rho_{bj}$ in \eqref{eqn35} with conditional probability distributions $\Pr(Q_b|P_a=P_{aj})$, and also replaces $S()$ with $H()$.} In contrast to $S(a|b)$ defined in \eqref{eqn32}, $H(P_a|b)$ is non-negative (see Lemma~\ref{thm1}); it equals the Shannon missing information $H(P_a)$ in the case when $b$ provides no information about $P_a$, and it equals zero only when $b$ perfectly contains the $P_a$ information.

We remark that an alternative way of defining $H(P_a|b)$, similar to that employed in \cite{RenesBoileau, BertaEtAl, TomRen2010}, is to introduce the quantum channel $\EC_P$ from $ab\to eb$ defined by
\begin{equation}
\label{eqn45aaa} 
\EC_P(\rho_{ab})= \sum_j \dya{e_j}\ot\Tr_a(P_{aj}\rho_{ab}),
\end{equation}
where $\{\ket{e_j}\}$ is an orthonormal basis for an auxiliary system $e$. Then $H(P_a|b)$ is the von Neumann conditional entropy $S(e|b)$ of the state $\EC_P(\rho_{ab})$.

\begin{lemma}
\label{thm1}
This lemma summarizes some useful properties of the $\chi(P_a,b)$ and $H(P_a|b)$
measures.

(i) For any ensemble $\{p_j,\rho_j\}$
\begin{equation}
\label{eqn41}
\chi(\{p_j,\rho_j\})=S(\sum_j p_j\rho_j)-\sum_j p_jS(\rho_j)\leq H(\{p_j\}),
\end{equation}
with equality if and only if the $\rho_j$ are mutually orthogonal. 

(ii) Let $P_a$ and $Q_b$ be any two POVMs on $a$ and $b$ respectively,
  and $\rho_{ab}$ any state on $\HC_{ab}$.  Then
\begin{align}
\label{eqn42}
H(P_a \colo Q_b) &\leq \chi(P_a,b) \leq \nonumber \\
& \min\{S(\rho_a),S(\rho_b),S(a\colo b)\},
\end{align}
and hence by \eqref{eqn40}, \eqref{eqn41}, and \eqref{eqn42},
\begin{equation}
\label{eqn43}
0\leq H(P_a|b)\leq H(P_a|Q_b).
\end{equation}
 \openbox
\end{lemma}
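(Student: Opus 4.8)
The plan is to get every inequality from just two properties of the von Neumann relative entropy that are already recorded in the excerpt---monotonicity under channels, \eqref{eqn33bb}, and monotonicity in the second argument, \eqref{eqn33cc}---together with the elementary identity
\[
\chi(\{p_j,\rho_j\}) \;=\; \sum_j p_j\, S\!\left(\rho_j\,\|\,\bar\rho\right),\qquad \bar\rho:=\sum_j p_j\rho_j ,
\]
which one obtains by writing $S(\rho_j\|\bar\rho)=\Tr(\rho_j\log\rho_j)-\Tr(\rho_j\log\bar\rho)$, multiplying by $p_j$, and summing, since the $-\Tr(\rho_j\log\bar\rho)$ terms add up to $S(\bar\rho)$.

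For part (i), note that for each $j$ one has the operator inequality $p_j\rho_j\le\bar\rho$, because $\bar\rho-p_j\rho_j=\sum_{k\neq j}p_k\rho_k\ge0$. Feeding $A=\rho_j$, $B=p_j\rho_j$, $C=\bar\rho$ into \eqref{eqn33cc} gives $S(\rho_j\|\bar\rho)\le S(\rho_j\|p_j\rho_j)$, and the right-hand side equals $-\log p_j$ since $\log(p_j\rho_j)=(\log p_j)\,I+\log\rho_j$ on the support of $\rho_j$. Summing with weights $p_j$ turns the displayed identity into $\chi\le-\sum_j p_j\log p_j=H(\{p_j\})$. The ``if'' half of the equality claim is a one-line block-diagonal computation: mutually orthogonal $\rho_j$ make $S(\bar\rho)=H(\{p_j\})+\sum_j p_jS(\rho_j)$. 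The ``only if'' half, which I expect to be the fussiest point of the lemma, uses the equality case of \eqref{eqn33cc}: equality for every $j$ forces $\log\bar\rho$ to agree with $\log(p_j\rho_j)$ on the range of $\rho_j$ (by operator monotonicity of the logarithm), and unwinding this forces the $\rho_j$ to have mutually orthogonal supports; alternatively one may simply invoke the standard equality condition for the Holevo bound.

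For part (ii) the three upper bounds are separate. The bound $\chi(P_a,b)\le S(\rho_b)$ is immediate from \eqref{eqn35} since $S(\rho_{bj})\ge0$. For $\chi(P_a,b)\le S(a\colo b)$, apply the channel $\EC_P$ of \eqref{eqn45aaa} to $\rho_{ab}$: a short computation shows $\EC_P(\rho_{ab})=\sum_j p_j\dya{e_j}\ot\rho_{bj}$ and $\EC_P(\rho_a\ot\rho_b)=(\sum_j p_j\dya{e_j})\ot\rho_b$, that $S(\EC_P(\rho_{ab})\,\|\,\EC_P(\rho_a\ot\rho_b))=\chi(P_a,b)$, and that $S(\rho_{ab}\|\rho_a\ot\rho_b)=S(a\colo b)$; then \eqref{eqn33bb} gives the inequality. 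For $\chi(P_a,b)\le S(\rho_a)$, take a purification $\ket{\Om}_{abc}$ of $\rho_{ab}$ and use \eqref{eqn37}: $\chi(P_a,b)\le\chi(P_a,bc)=S(\rho_{bc})-\sum_j p_jS(\rho_{bcj})\le S(\rho_{bc})=S(\rho_a)$, the last step because $\ket{\Om}$ is pure---this route sidesteps any appeal to a Naimark extension. For the lower bound, one checks $\Pr(P_{aj},Q_{bk})=p_j\Tr(Q_{bk}\rho_{bj})$, whence $H(P_a\colo Q_b)=\chi(\{p_j,\QC(\rho_{bj})\})$ with $\QC(\sg)=\sum_k\dya{f_k}\Tr(Q_{bk}\sg)$ the measurement channel for $Q_b$; since applying \eqref{eqn33bb} term by term in the identity above shows that $\chi$ of an ensemble cannot increase when one fixed channel is applied to every state, $\chi(\{p_j,\QC(\rho_{bj})\})\le\chi(\{p_j,\rho_{bj}\})=\chi(P_a,b)$. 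Finally \eqref{eqn43} follows by combining this lower bound with part (i) ($H(P_a|b)=H(P_a)-\chi(P_a,b)\ge0$) and with the classical identity $H(P_a|Q_b)=H(P_a)-H(P_a\colo Q_b)$.
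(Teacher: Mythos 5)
Your proposal is correct, and for the two nontrivial inequalities it takes a genuinely different route from the paper. The paper does not prove part (i) at all (it cites Theorem 11.10 of \cite{NieChu00}); you derive it from the identity $\chi=\sum_j p_jS(\rho_j\|\bar\rho)$ together with \eqref{eqn33cc} applied to $p_j\rho_j\leq\bar\rho$, which is a clean, self-contained argument --- the only soft spot is the ``only if'' half of the equality claim, where the equality case of \eqref{eqn33cc} is not actually stated in the paper, but your fallback to the standard Holevo-bound equality condition covers this. For part (ii), the bounds $\chi(P_a,b)\leq S(\rho_b)$ and $\chi(P_a,b)\leq S(\rho_a)$ (via purification and \eqref{eqn37}) coincide with the paper's. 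The real divergence is the bound $\chi(P_a,b)\leq S(a\colo b)$: the paper replaces $P_a$ by a coarse-grained orthonormal basis $\widetilde w_{ae}$ on a Naimark extension, purifies, and then chains together the basis-invariance results of Sec.~\ref{sct4} with \eqref{eqn37} and \eqref{eqn36}, so its argument leans on Theorems \ref{thm2} and \ref{thm3}; you instead observe that $\chi(P_a,b)=S(\EC_P(\rho_{ab})\,\|\,\EC_P(\rho_a\ot\rho_b))$ and $S(a\colo b)=S(\rho_{ab}\|\rho_a\ot\rho_b)$ and apply monotonicity \eqref{eqn33bb} once. Your route is more self-contained and is arguably more in the spirit of the paper's own relative-entropy proofs of Lemma~\ref{thm4} and Theorem~\ref{thm5}, whereas the paper's route reuses machinery it has already built and avoids writing down any relative entropies. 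Your derivation of the Holevo lower bound $H(P_a\colo Q_b)\leq\chi(P_a,b)$ from monotonicity of $\chi$ under the measurement channel $\QC$ is likewise standard and valid, where the paper simply cites Holevo's bound.
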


Part (i) is from \cite{NieChu00} (Theorem~11.10, p.~518). The
left-hand-side of \eqref{eqn42} is Holevo's bound (p. 531 of \cite{NieChu00}),
and the right-hand-side of \eqref{eqn42} is similar to Proposition~1 of
\cite{WuEtAl2009} though we prove it in Appendix~\ref{aps2.1} since we have
explicitly inserted the bound on $\chi$.

\xb
\subsection{Entropy biases and coherent information}
\label{sbct3.3}
\xa

In addition to quantitative measures of information about one system present
in another it is useful to have measures of \emph{information differences}.
In what follows we shall make use of two quantities of this type.  When
considering two systems $b$ and $c$, 
\begin{equation}
  \Dl S_K(b,c) := S^{}_K(\rho_b)- S^{}_K(\rho_c),
\label{eqn46}
\end{equation}
is the \emph{entropy bias}, while for information type $P_a$,
\begin{equation}
\Dl \chi_K(P_a;b,c):=\chi_K(P_a,b)-\chi_K(P_a,c)
\label{eqn47}
\end{equation}
is the \emph{information bias}. Analogous quantities for the complementary channels $\EC$ and $\FC$ (to $b$ and $c$ respectively) arising from isometry $V$ are:
\begin{align}
\Dl S_K(\EC,\FC) &:= S^{}_K(\Upsilon_b/d_a)- S^{}_K(\Upsilon_c/d_a),\nonumber\\
\Dl \chi_K(P_a;\EC,\FC) &:= \chi_K(P_a,\EC)- \chi_K(P_a,\FC).
\label{eqn48}
\end{align}
Unlike our information measures these quantities can (obviously) be negative. When using the von Neumann entropy we omit the subscript $K$ and denote these quantities, e.g., by $\Dl S(b,c)$ and $\Dl \chi(P_a;b,c)$.

The \emph{coherent information} $I_{\text{coh}}$ (Sec.~12.4.2 of \cite{NieChu00}) is a particular instance of the entropy bias for the tripartite pure state $\ket{\Om}$:
\begin{equation}
I_{\text{coh}} (\rho_{a'},\EC) = \Dl S(b,c) 
\label{eqn49}
\end{equation}
where, see the discussion in Sec.~\ref{sbct2.2} associated with \eqref{eqn19}, the quantum channel $\EC$ corresponds to an isometry $V$ which yields $\ket{\Om}$ when applied to an entangled state $\ket{\Phi}$ chosen so that the partial trace of $\dya{\Phi}$ down to $a'$ yields the density operator $\rho_{a'}$.  The density operators $\rho_b$ and $\rho_c$ needed to define the entropy bias, \eqref{eqn46}, on the right side of \eqref{eqn49} are the partial traces of $\dya{\Om}$ down to systems $b$ and $c$, respectively. It can also be seen more directly, for the maximally-mixed input state, that $I_{\text{coh}}(I_{a'}/d_{a'},\EC) = \Dl S(\EC,\FC)$. 

Despite the connection in \eqref{eqn49}, the entropy bias in \eqref{eqn46} seems more natural in the state or static point of view, which lacks the notion of inputs and outputs, than $I_{\text{coh}}$. The latter has always been thought of as a function of a trace-preserving superoperator $\EC$ and an input state $\rho_{a'}$ to a channel, whereas the biases in \eqref{eqn46} and \eqref{eqn47} are simply functions of the tripartite state $\ket{\Om}$, without making reference to how it may have been generated by the combination of an isometry and a partially-entangled state.

\xb
\section{Basis Invariance}
\label{sct4}
\xa

We begin our discussion of how the amount of information about system $a$ in
some other system(s) depends on the type of information with two cases in
which certain quantities are actually independent of type.  In both of them, a pure-state
pre-probability is assumed. 

\begin{theorem}
\label{thm2}
Consider a bipartite system with a pure-state pre-probability $\rho_{ab}=\dya{\Psi}$. Let $N$ be a rank-1 POVM on $a$, let $w$ be an orthonormal basis (thus also a rank-1 POVM) on $a$, then
\begin{equation} 
\label{eqn50}
\chi_K(w,b)= \chi_K(N,b) =S_K(\rho_a)
\end{equation}
is \emph{independent} of the basis $w$ or rank-1 POVM $N$.
\end{theorem}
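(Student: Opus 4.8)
The plan is to reduce the claim to two elementary facts: (a) for a bipartite pure state, conditioning on \emph{any} rank-1 positive operator on $a$ leaves a \emph{pure} conditional state on $b$, so that the subtracted term in \eqref{eqn35} vanishes; and (b) for a pure bipartite state $S_K(\rho_a)=S_K(\rho_b)$, and $S_K$ of any pure state equals $0$, for every entropy in \eqref{eqn29}--\eqref{eqn30}. Fact (b) follows from the Schmidt decomposition $\ket{\Psi}=\sum_k\sqrt{\lm_k}\,\ket{\al_k}\ot\ket{\gm_k}$: then $\rho_a$ and $\rho_b$ share the multiset of nonzero eigenvalues $\{\lm_k\}$, and since each of $S$, $S_R$, $S_T$, $S_Q$ is a function of the eigenvalue multiset alone (through $-\sum\lm\log\lm$ or $\Tr\rho^q$), $S_K(\rho_a)=S_K(\rho_b)$ follows; and a pure state has a single unit eigenvalue (equivalently $\Tr\rho^q=1$), so $S_K=0$ on pure states.

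Next I would compute the conditional states from \eqref{eqn6}. For the orthonormal basis $w$, $P_{aj}=\dya{w_j}$ gives $p_j\rho_{bj}=\Tr_a\!\big(\dya{w_j}\,\dya{\Psi}\big)=\ip{w_j}{\Psi}\ip{\Psi}{w_j}$, where $\ip{w_j}{\Psi}$ is the (unnormalized) vector in $\HC_b$ obtained from the partial inner product; hence $\rho_{bj}$ is proportional to a one-dimensional projector and $S_K(\rho_{bj})=0$ whenever $p_j>0$ (the $p_j=0$ terms drop out of \eqref{eqn35} under the $0\cdot S_K=0$ convention). Writing a general rank-1 POVM element as $N_k=\mu_k\dya{\nu_k}$ with $\mu_k>0$ and $\ket{\nu_k}$ normalized, the identical computation yields $p_k\rho_{bk}=\mu_k\,\ip{\nu_k}{\Psi}\ip{\Psi}{\nu_k}$, again a pure conditional state. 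Substituting into \eqref{eqn35} and using fact (b), $\chi_K(w,b)=S_K(\rho_b)-0=S_K(\rho_a)$ and likewise $\chi_K(N,b)=S_K(\rho_a)$, a quantity with no dependence on the choice of $w$ or $N$, which is exactly the asserted invariance.

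There is no real obstacle here; the only points needing care are making facts (a) and (b) uniform across all four entropy types rather than just the von Neumann case, and discarding zero-probability outcomes of $N$. One could alternatively phrase step (a) via map-state duality, noting that $\ip{w_j}{\Psi}$ is the image under the map associated with $\ket{\Psi}$ of the conjugate ket of $\ket{w_j}$, but the direct partial-trace computation above is the shortest route.
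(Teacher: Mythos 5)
Your proof is correct, and the first half (the orthonormal basis $w$) is essentially identical to the paper's: $S_K(\rho_b)=S_K(\rho_a)$ from the Schmidt decomposition, and the subtracted term in \eqref{eqn35} vanishes because each conditional state $\rho_{bj}$ is pure. Where you genuinely diverge is the rank-1 POVM case. The paper invokes the Naimark extension: $N$ is equivalent to an orthonormal basis $v_{ae}$ on $\HC_a\ot\HC_e$ with $\rho_{ae}=\rho_a\ot\dya{e_0}$, so the POVM case reduces to the already-proved basis case on the enlarged space, plus the observation that $S_K(\rho_{ae})=S_K(\rho_a)$. You instead compute directly: writing $N_k=\mu_k\dya{\nu_k}$ gives $p_k\rho_{bk}=\mu_k\,\ip{\nu_k}{\Psi}\ip{\Psi}{\nu_k}$, again a rank-1 operator, so the conditional states are pure and the same cancellation occurs with no extension needed. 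Your route is shorter and self-contained --- it avoids relying on the fact (cited in Sec.~\ref{sbct2.1} to \cite{JozsaEtAl03}) that a rank-1 POVM admits a rank-1 projective Naimark extension, and it sidesteps the extra check that enlarging the space by a pure factor leaves every $S_K$ unchanged. What the paper's route buys is consistency with the framework it uses throughout: the Naimark reduction of rank-1 POVMs to bases is the workhorse in the proofs of Theorem~\ref{thm3} and Lemma~\ref{thm1}, so establishing the pattern here makes those later arguments uniform. Your attention to the zero-probability outcomes and to making the eigenvalue argument uniform across $S$, $S_R$, $S_T$, $S_Q$ is appropriate and slightly more explicit than the paper's own treatment.
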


\begin{proof}
Apply \eqref{eqn35} to $w$, setting $S_K(\rho_b)=S_K(\rho_a)$ and the second term in \eqref{eqn35} to zero because each $\rho_{bj}$ is a pure state, proving $\chi_K(w,b)=S_K(\rho_a)$. From Sec.~\ref{sbct2.1}, $N$ is equivalent to an orthonormal basis $v_{ae}$ on $\HC_a\ot\HC_e$ assuming the state on $ae$ is $\rho_{ae}=\rho_a\ot\dya{e_0}$, where $\ket{e_0}$ is some pure state on $e$. Thus, $\chi_K(N,b)=\chi_K(v_{ae},b)=S_K(\rho_{ae})$, but $S_K(\rho_{ae})=S_K(\rho_a)$ for all entropy functions under consideration.
\end{proof}

This implies that if the $w$ information about $a$ is absent from $b$, $\chi_K(w,b)=0$, \emph{all} types are absent and $\ket{\Psi}$ is a product state, which is one form of the Absence theorem of \cite{Gri07}.  And it generalizes in that if the $w$ information is \emph{almost} absent from $b$, then by \eqref{eqn36} $\chi(w,b)\geq \chi(P,b)$, any other type $P$ is \emph{almost} absent from $b$. On the other hand, one can read \eqref{eqn50} as a statement that all (rank-1) types of information are equally present; the only problem is interpreting the common value of $\chi_K(w,b)=S_K(\rho_a)$. In the case of the von Neumann entropy, $\chi(w,b)=S(\rho_a)$ is the usual entanglement measure of $\ket{\Psi}$, and is an upper bound on the Shannon mutual information (Lemma~\ref{thm1}) that can be achieved by performing measurements in the Schmidt bases on $a$ and $b$. Note that reading \eqref{eqn50} in reverse provides a natural interpretation for $S_K(\rho_a)$; it is the amount of information about any rank-1 type $N$ contained in a system $b$ that purifies $\rho_a$, as measured by $\chi_K(N,b)$.

The following useful result for tripartite pure states and complementary channels (see Sec.~\ref{sbct6.1}) is proved in Appendix \ref{aps2.2}.
\begin{theorem}
\label{thm3}
Let $M$ and $N$ be rank-1 POVMs on $a$, and let $v$ and $w$ be orthonormal bases (thus also rank-1 POVMs) on $a$.

(i) Consider a tripartite system with a pure-state pre-probability $\rho_{abc}=\dya{\Om}$.  Then the information bias defined in 
\eqref{eqn47},
\begin{equation}
\Dl \chi_K(w;b,c)=\Dl \chi_K(N;b,c)=\Dl S_K(b,c),
\label{eqn51}
\end{equation}
where $K$ denotes any of the entropies defined in \eqref{eqn29} or
\eqref{eqn30}, is equal to the corresponding entropy bias, and thus
\emph{independent} of the choice of orthonormal basis or rank-1 POVM. It follows that the difference:
\begin{equation}
\chi_K(M,b)-\chi_K(N,b)=\chi_K(M,c)-\chi_K(N,c),
\label{eqn52}
\end{equation}
is the same for $b$ and $c$, which obviously holds if $M$ and $N$ are replaced by $v$ and $w$. 

(ii) Likewise, for complementary quantum channels $\EC$ and $\FC$, the information bias defined in \eqref{eqn48},
\begin{align}
\label{eqn53}
\Dl \chi_K(w;\EC,\FC)=\Dl \chi_K(N;\EC,\FC)=\Dl S_K(\EC,\FC)
\end{align}
is invariant to the choice of orthonormal basis $w$ or rank-1 POVM $N$, and
\begin{equation}
\label{eqn54}
 \chi_K(M,\EC)-\chi_K(N,\EC)= \chi_K(M,\FC)-\chi_K(N,\FC).
\end{equation} \openbox
\end{theorem}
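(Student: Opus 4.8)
The plan is to reduce both parts to a single fact about pure states: if a rank-1 projective measurement $\{\dya{w_j}\}$ is performed on the $a$-part of a pure state $\ket{\Om}\in\HC_{abc}$, then $\langle w_j|\Om\rangle=\sqrt{p_j}\,\ket{\phi_j}$ with $\ket{\phi_j}\in\HC_{bc}$ normalized and $p_j=\mte{w_j}{\rho_a}$, so each conditional state on $bc$ is the pure state $\dya{\phi_j}$, with $\sum_j p_j=1$. The conditional reduced states $\rho_{bj}=\Tr_c(\dya{\phi_j})$ and $\rho_{cj}=\Tr_b(\dya{\phi_j})$ then have the same nonzero spectrum, and since each entropy $S_K$ in \eqref{eqn29} and \eqref{eqn30} is a symmetric function of the eigenvalues that is unchanged by adjoining zeros, $S_K(\rho_{bj})=S_K(\rho_{cj})$ for every $K$.

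For part (i) with an orthonormal basis $w$, I would subtract the two instances of \eqref{eqn35} giving $\chi_K(w,b)$ and $\chi_K(w,c)$: the probabilities $p_j$ are common to both ensembles and the conditional terms cancel by the identity above, leaving $\Dl\chi_K(w;b,c)=S_K(\rho_b)-S_K(\rho_c)=\Dl S_K(b,c)$, which plainly does not depend on $w$. For a rank-1 POVM $N$ I would pass to the Naimark extension of Sec.~\ref{sbct2.1}: $N$ is equivalent to an orthonormal basis $v_A$ on $\HC_A=\HC_a\ot\HC_e$ carrying the state $\rho_a\ot\dya{e_0}$, so $\ket{\Om}\ot\ket{e_0}$ is a pure state on $Abc$ having the same reduced states on $b$ and on $c$ as $\ket{\Om}$, and by the remark following \eqref{eqn6} it yields the same ensembles $\{p_j,\rho_{bj}\}$ and $\{p_j,\rho_{cj}\}$; applying the orthonormal-basis case to $v_A$ gives $\Dl\chi_K(N;b,c)=\Dl S_K(b,c)$ as well. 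Equation \eqref{eqn52} is then the difference of two instances of \eqref{eqn51} (and likewise with $v$, $w$ in place of $M$, $N$).

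For part (ii) I would realize $\chi_K(N,\EC)$ and $\chi_K(N,\FC)$ as ordinary $\chi_K(\,\cdot\,,b)$ and $\chi_K(\,\cdot\,,c)$ on a single tripartite pure state, namely the channel ket \eqref{eqn10} built from the fully-entangled $\ket{\Phi}$ of \eqref{eqn9}, for which $\rho_a=I_a/d_a$, $\rho_b=\Up_b/d_a$, $\rho_c=\Up_c/d_a$. The construction \eqref{eqn24}--\eqref{eqn26} then supplies a rank-1 POVM $\widetilde N_a$ on this state (here $W$ is a multiple of the identity once $\HC_a\cong\HC_{a'}$ is fixed, with factors all equal to $\sqrt{d_a}$, so $\widetilde N_{aj}=N_{aj}\trp$) whose outcomes have probabilities $p_j=\Tr(N_{aj})/d_a$ and conditional states $\rho_{bcj}=V\rho_{aj}V\ad$, where $\rho_{aj}=N_{aj}/\Tr(N_{aj})$. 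Tracing down, $\rho_{bj}=\EC(\rho_{aj})$ and $\rho_{cj}=\FC(\rho_{aj})$, while the POVM-independent first term of \eqref{eqn38} is $S_K(\Up_b/d_a)=S_K(\rho_b)$; comparing with \eqref{eqn35} gives $\chi_K(\widetilde N_a,b)=\chi_K(N,\EC)$ and $\chi_K(\widetilde N_a,c)=\chi_K(N,\FC)$. Applying part (i) to $\ket{\Om}$ and $\widetilde N_a$ now yields $\chi_K(N,\EC)-\chi_K(N,\FC)=\Dl S_K(b,c)=S_K(\Up_b/d_a)-S_K(\Up_c/d_a)=\Dl S_K(\EC,\FC)$, which is \eqref{eqn53}; \eqref{eqn54} follows by subtracting two instances, and the orthonormal-basis case is included as a special case of the rank-1 POVM.

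The spectral identity $S_K(\rho_{bj})=S_K(\rho_{cj})$ and the cancellation of the conditional terms are routine. The step I expect to need the most care is the dictionary used in part (ii): checking that the fully-entangled choice of $\ket{\Phi}$ makes the constant first term of $\chi_K(N,\EC)$ coincide with $S_K(\rho_b)$ for the channel ket, and that the POVM produced by \eqref{eqn24}, with its transpose and its (here uniform) $1/\sqrt{\pi_k}$ factors, genuinely reproduces the conditional channel outputs $\EC(\rho_{aj})$ and $\FC(\rho_{aj})$ --- that is, that map-state duality is applied consistently throughout.
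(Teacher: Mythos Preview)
Your proposal is correct and essentially identical to the paper's proof: part (i) proceeds by the same Schmidt-type expansion $\ket{\Om}=\sum_j\ket{w_j}\ot\ket{s_j}$ so that $\rho_{bj}$ and $\rho_{cj}$ share nonzero spectra and the conditional sums cancel, with rank-1 POVMs handled via the Naimark extension; part (ii) in the paper is likewise obtained by applying (i) to the channel ket, exactly as you do. The paper also notes a slightly shorter direct route for (ii)---observe that $V\rho_{aj}V\ad$ is rank-1, so its partial traces to $b$ and $c$ have equal nonzero spectra---but this is offered only as an alternative to your argument.
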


This theorem provides a natural interpretation for the entropy bias of
a tripartite pure state: this is the amount by which more (or less if the bias
is negative) $w$ information about $a$ is present in $b$ than it
is in $c$.  The theorem tells us that this excess, which we call the
information bias, does not depend upon the orthonormal basis $w$, allowing us
to drop the $w$ from $\Dl \chi_K(b,c)$ under these conditions. This theorem is used in proving several of the results that
follow, including Theorems \ref{thm8}, \ref{thm10}, and \ref{thm11}.\\

\begin{example} 
\label{ex1}
As an illustration, suppose that in the case of a qubit, $d_a=2$, the $z$
information associated with the standard ${\ket{0},\ket{1}}$ basis is
perfectly transmitted from $a$ to $b$, while no information in the conjugate
$x$ basis is transmitted; i.e., we have a perfect ``classical'' channel from
$a$ to $b$. Setting $M=z$ and $N=x$ in \eqref{eqn52} and using Lemma \ref{thm1},
$H(z)=\chi(z,c)-\chi(x,c)$, which can only be true if $\chi(x,c)=0$ and
$H(z)=\chi(z,c)$. The $z$ information is thus perfectly transmitted from $a$
to $c$, saying the ``classical'' information (in this sense) is always copied
to another party, and further by the basis-invariance of $\Dl \chi(b,c)=0$, that
the $ab$ and the $ac$ channels are equally effective in terms of the $\chi$
measure.\footnote{An explicit example of this is the GHZ state $(\ket{000}+\ket{111})/\sqrt{2}$.} This conclusion can be reached by alternative lines of argument, but it illustrates the nontrivial content of Theorem~\ref{thm3}.
\end{example}

\xb
\section{Generalizing all-or-nothing theorems}
\label{sct5}
\xa

In this section we consider various quantitative generalizations, using the information measures introduced in Sec.~\ref{sct3}, of some ``all-or-nothing'' theorems \cite{Gri07}, which have the general form that in a multipartite system if a particular type or types of information about a
particular subsystem $a$ is perfectly present or absent in some other subsystem, then some other types of information about $a$ will also be
perfectly present or absent in other locations.  In each subsection below we provide a quantitative generalization of such a theorem to situations of
partial presence or absence, indicating the connection with the all-or-nothing theorem if it is not already clear.

\xb
\subsection{Truncation}
\label{sbct5.0}
\xa

The Truncation theorem of \cite{Gri07} states that if $\Pi=\{\Pi_j\}$ is a projective decomposition of $I_a$, and if the $\Pi$ type of information about $a$ is perfectly present in $c$, then for any third system $b$, the density operator $\rho_{ab}$ is truncated or block-diagonal (or ``pinched", p.\ 50 of \cite{Bhatia}) in the sense that $\rho_{ab}=\sum_j \Pi_{j}\rho_{ab}\Pi_{j}$. The following result is a generalization of this theorem to the case of partial information presence in $c$, and also allows for more general POVMs $P$ in Part (ii). The all-or-nothing result comes out by setting $H(\Pi|c)=0$ (perfect information presence) in \eqref{eqn1aaaa} below, which implies that $\rho_{ab}=\sum_j \Pi_{j}\rho_{ab}\Pi_{j}$ since $S(\rho||\sg)=0$ only if $\rho=\sg$. More generally, $\rho_{ab}$ will be ``close" (in the relative entropy sense) to the truncated form if $H(\Pi|c)$ is small.
\begin{lemma}
\label{thm4}
Let $\Pi=\{\Pi_{j}\}$ be a projective decomposition of $I_a$ and let $P=\{P_{j}\}$ be any POVM on $a$.

(i) Let $\rho_{abc}$ be a pure state, then
\begin{equation}
\label{eqn1aaaa}
H(\Pi|c)=S(\rho_{ab}||\sum_j \Pi_{j}\rho_{ab}\Pi_{j}).
\end{equation} 

(ii) Let $\rho_{abc}$ be \emph{any} state, then
\begin{equation}
\label{eqn1bbbb}
H(P|c)\geq S(\rho_{ab}||\sum_j P_{j}\rho_{ab}P_{j}).
\end{equation} \openbox
\end{lemma}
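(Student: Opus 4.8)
The plan is to prove (i) by a direct computation that exploits the block structure of $\sum_j\Pi_j\rho_{ab}\Pi_j$, and then to deduce (ii) from (i) by combining a Naimark extension of $P$ with a purification of $\rho_{abc}$ and descending afterwards with the monotonicity \eqref{eqn33bb} of the relative entropy under channels. For (i), write $\tilde\rho_{ab}=\sum_j\Pi_j\rho_{ab}\Pi_j$ and $p_j=\Tr(\Pi_j\rho_a)=\Tr(\Pi_j\rho_{ab})$. Since the $\Pi_j$ are mutually orthogonal projectors on $\HC_a$, the operator $\tilde\rho_{ab}$ is block diagonal, with block $j$ equal to $p_j\tau_{abj}$ where $\tau_{abj}:=\Pi_j\rho_{ab}\Pi_j/p_j$ (we keep only the $j$ with $p_j>0$), and $\sum_jP_j=I_a$ forces the support of $\rho_{ab}$ into that of $\tilde\rho_{ab}$ so the relative entropy is finite. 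Substituting $\log\tilde\rho_{ab}=\sum_j\Pi_j\log(p_j\tau_{abj})\Pi_j$ into \eqref{eqn33aa} and using cyclicity of the trace gives
\[
S(\rho_{ab}||\tilde\rho_{ab})=-S(\rho_{ab})+H(\{p_j\})+\sum_j p_j\,S(\tau_{abj}).
\]

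Now use purity of $\rho_{abc}=\dya{\Om}$. The subsystems $ab$ and $c$ are complementary, so $S(\rho_{ab})=S(\rho_c)$. Moreover $p_j\tau_{abj}=\Tr_c\dya{\Pi_j\Om}$ and $p_j\rho_{cj}=\Tr_{ab}\dya{\Pi_j\Om}$, where $\ket{\Pi_j\Om}:=(\Pi_j\ot I_{bc})\ket{\Om}$ has squared norm $p_j$; hence $\tau_{abj}$ and $\rho_{cj}$ are the two complementary reductions of the normalized pure state $\ket{\Pi_j\Om}/\sqrt{p_j}$, so $S(\tau_{abj})=S(\rho_{cj})$. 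The displayed identity then collapses to $H(\{p_j\})-\bigl[S(\rho_c)-\sum_j p_j S(\rho_{cj})\bigr]=H(\Pi)-\chi(\Pi,c)=H(\Pi|c)$ by \eqref{eqn35} and \eqref{eqn40}, which is \eqref{eqn1aaaa}.

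For (ii), let $\Pi_A=\{\Pi_{Aj}\}$ be a Naimark extension of $P$ on $\HC_A=\HC_a\ot\HC_e$, so $P_j=E_a\Pi_{Aj}E_a$ with $E_a=I_a\ot\dya{e_0}$. As noted below \eqref{eqn6}, the ensemble $\{p_j,\rho_{cj}\}$ --- and hence $H(P|c)$ --- is unchanged on replacing $P$ by $\Pi_A$ and $\rho_{abc}$ by $\rho_{abc}\ot\dya{e_0}$. Purify $\rho_{abc}$ to $\ket{\Om}\in\HC_{abcd}$; then $\dya{\Om}\ot\dya{e_0}$ is a pure state on $\HC_A\ot\HC_{bd}\ot\HC_c$, and since $H(\Pi_A|c)$ depends only on the $Ac$ marginal it is unaffected. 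Applying part (i) to this pure state, with $\Pi_A$ the projective decomposition of $I_A$ and $bd$ in the role of the second system, yields
\[
H(P|c)=H(\Pi_A|c)=S\bigl(\rho_{A(bd)}||\textstyle\sum_j\Pi_{Aj}\rho_{A(bd)}\Pi_{Aj}\bigr),
\]
with $\rho_{A(bd)}=\rho_{abd}\ot\dya{e_0}$ and $\rho_{abd}=\Tr_c\dya{\Om}$.

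It remains to descend back to $\HC_{ab}$. First, monotonicity under $\Tr_d$, together with $\Tr_d\sum_j\Pi_{Aj}\rho_{A(bd)}\Pi_{Aj}=\sum_j\Pi_{Aj}(\rho_{ab}\ot\dya{e_0})\Pi_{Aj}$, gives $H(P|c)\geq S\bigl(\rho_{ab}\ot\dya{e_0}||\sum_j\Pi_{Aj}(\rho_{ab}\ot\dya{e_0})\Pi_{Aj}\bigr)$. Second, to remove the Naimark ancilla $e$ I would use the channel $\Phi(Y)=\bra{e_0}Y\ket{e_0}\ot\dya{0}_f+\Tr[((I_e-\dya{e_0})\ot I)Y]\,\omega_{ab}\ot\dya{1}_f$ with $f$ a flag qubit and $\omega_{ab}$ a fixed state: $\Phi$ is trace preserving, $\Phi(\rho_{ab}\ot\dya{e_0})=\rho_{ab}\ot\dya{0}_f$, and $\Phi(\sum_j\Pi_{Aj}(\rho_{ab}\ot\dya{e_0})\Pi_{Aj})=(\sum_jP_j\rho_{ab}P_j)\ot\dya{0}_f+\epsilon\,\omega_{ab}\ot\dya{1}_f$; both outputs are block diagonal in $f$ with the image of $\rho_{ab}\ot\dya{e_0}$ living entirely in the $f=0$ block, so $S(\Phi(\rho_{ab}\ot\dya{e_0})||\Phi(\cdots))=S(\rho_{ab}||\sum_jP_j\rho_{ab}P_j)$, and monotonicity under $\Phi$ yields \eqref{eqn1bbbb}. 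The step I expect to be the main obstacle is exactly this last descent: the natural dimension-reducing map $Y\mapsto\bra{e_0}Y\ket{e_0}$ is only trace non-increasing, and completing it to a channel carelessly would, via \eqref{eqn33cc}, reverse the inequality; routing the ancilla ``defect'' into the flag $f$ --- equivalently, invoking operator concavity of $\log$ --- is what makes \eqref{eqn33bb} applicable in the right direction. The rest is bookkeeping: the arithmetic of (i), and checking that the Naimark replacement preserves $\{p_j,\rho_{cj}\}$ and the pinched operators.
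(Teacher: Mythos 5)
Your proposal is correct and follows essentially the same route as the paper: part (i) by directly evaluating the relative entropy against the pinched (block-diagonal) operator and using purity of $\rho_{abc}$ to trade $S(\rho_{ab})$ and $S(\tau_{abj})$ for the corresponding entropies on $c$, and part (ii) by Naimark extension plus purification plus monotonicity of the relative entropy. The only cosmetic differences are that the paper conditions on $cd$ (via $H(\Pi_A|c)\geq H(\Pi_A|cd)$) instead of tracing out $d$ afterwards, and removes the Naimark ancilla with the pinching channel $\rho\mapsto E_a\rho E_a+E_a^{\perp}\rho E_a^{\perp}$, whose orthogonal-complement term drops by the same support argument that motivates your flag qubit.
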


The proof can be found in Appendix~\ref{aps2.3}. This lemma is also used in proving the uncertainty relation in the next section.

\xb
\subsection{Information exclusion relations}
\label{sbct5.1}
\xa

An exclusion relation refers to incompatible types of information such that the presence of one type in one subsystem ``hinders'' or to some extent ``excludes'' the incompatible type from being present in a \emph{different} subsystem. Thus the Exclusion theorem of \cite{Gri07} asserts that if $v$ and $w$ are mutually unbiased bases on $a$, and the $v$ information about $a$ is perfectly present in $b$, then the $w$ information about $a$ is (completely) absent from $c$.  A quantitative extension of this to partial presence and absence can be based on the following theorem, where the incompatibility of two POVMs $P=\{P_j\}$ and $Q=\{Q_k\}$ is quantified using:
\begin{equation}
\label{eqn56}
r(P,Q):= \max_{j,k} \left\| \sqrt{P_j} \sqrt{Q_k}\right\|_{\infty}^2. 
\end{equation}
Here $\|\cdot\|_\infty$ denotes the supremum norm: the maximum singular value of the operator. 

Our main result, with proof in Appendix~\ref{aps2.4}, is:

\begin{theorem}
\label{thm5}
Let $\rho_{abc}$ be any state on $\HC_{abc}$.

(i) Let $P=\{ P_j \}$ and $Q=\{Q_k \}$ be any two POVMs on $\HC_a$, with $H(\cdot |\cdot)$ defined in \eqref{eqn40} and $r$ in \eqref{eqn56}. Then
\begin{equation}
H(P|b) +H(Q|c) \geq \log [1/r(P,Q)], 
\label{eqn57}
\end{equation}
where each $H(\cdot |\cdot)$ term is bounded by, e.g.:
\begin{equation}
H(P|b) \geq \log [1/\sqrt{r(P, P)}].
\label{eqn58}
\end{equation}

(ii) Specializing \eqref{eqn57} to the case of orthonormal bases $v=\{\dya{v_j}\}$ and $w=\{\dya{w_k}\}$, we obtain:
\begin{equation}
\label{eqn59}
H(v|b)+ H(w|c) \geq \log [1/r(v,w)],
\end{equation}
where in this case \eqref{eqn56} reads
\begin{equation}
r(v,w)=\max_{j,k} \vert\ip{v_j}{w_k}\vert^2.
\label{eqn60}
\end{equation}

(iii) The right-hand-side of \eqref{eqn59} is largest when $v$ and $w$ are MUBs, $r(v,w)=1/d_a$:
\begin{equation}
\label{eqn61}
H(v|b)+ H(w|c) \geq \log d_a.
\end{equation} \openbox
\end{theorem}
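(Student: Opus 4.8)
The plan is to reduce the whole theorem to a single inequality involving the relative entropy and then apply the data-processing (monotonicity) property of relative entropy through a suitably chosen channel. First I would observe that parts (ii) and (iii) are immediate specializations of part (i): for orthonormal bases the rank-1 projectors $\dya{v_j}$, $\dya{w_k}$ reduce \eqref{eqn56} to \eqref{eqn60} because $\|\,\dya{v_j}\dya{w_k}\,\|_\infty = |\ip{v_j}{w_k}|^2$, and when $v$ and $w$ are MUBs every such overlap equals $1/d_a$, giving $r(v,w)=1/d_a$ and hence \eqref{eqn61}. The auxiliary bound \eqref{eqn58} should follow from the same master inequality by taking $Q=P$ and $c=b$ and noting $H(P|b)\ge 0$ (Lemma~\ref{thm1}), so that $2H(P|b)\ge \log[1/r(P,P)]$.

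For part (i) itself, my approach would be the following. Using the reformulation of $H(P_a|b)$ as the conditional entropy $S(e|b)$ of the state $\EC_P(\rho_{ab})$ from \eqref{eqn45aaa}, together with Lemma~\ref{thm4}(ii), I would write $H(P|b)$ as a relative entropy $S(\rho_{ab}\,\|\,\sum_j P_j\rho_{ab}P_j)$ up to correction terms, and similarly for $H(Q|c)$ on the $ac$ marginal (here the purification/tripartite symmetry is what lets me use $b$ for one POVM and $c$ for the other). The key step is then to combine these two relative entropies: I would invoke a channel that maps $\rho_{ab}$ (or the relevant pinched operators) in a way that the $P$-pinching on the $a$ factor, followed by tracing, produces the $Q$-statistics, and apply \eqref{eqn33bb} to get a bound of the form $H(P|b)+H(Q|c)\ge S(\rho_a\,\|\,\text{something}) $ where the ``something'' is a positive operator built from $P$ and $Q$ whose relevant norm is controlled by $r(P,Q)$. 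Finally I would use \eqref{eqn33cc}: since $\sum_k \sqrt{Q_k}\,P_j\,\sqrt{Q_k} \leq r(P,Q)^{-1}\cdot(\text{trace-like bound})$ — more precisely one bounds $P_j \le r(P,Q)\,(\text{coarser operator})$ in the appropriate sense — replacing the second argument of $S(A\|B)$ by a larger operator only decreases the relative entropy, and the resulting explicit operator yields exactly $\log[1/r(P,Q)]$.

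The main obstacle I expect is getting the operator inequality that feeds into \eqref{eqn33cc} to come out with the sharp constant $r(P,Q)$ rather than something weaker like $d_a\,r(P,Q)$ or $\sum_j\|\sqrt{P_j}\sqrt{Q_k}\|_\infty^2$. This requires carefully choosing which channel to apply monotonicity through — likely a ``measure-and-prepare'' channel associated with $Q$ acting on the $a$-system of the pinched state $\sum_j P_j\rho_{ab}P_j$ — and then estimating $\|\sqrt{Q_k}\,P_j\,\sqrt{Q_k}\|$ or $\Tr(Q_k P_j)$ uniformly; the supremum-norm definition of $r$ in \eqref{eqn56} is chosen precisely so that $\sqrt{Q_k}\,P_j\,\sqrt{Q_k}\le r(P,Q)\,Q_k$ as operators, which after summing over $j$ gives $\sum_j \sqrt{Q_k}P_j\sqrt{Q_k} = Q_k \le$ (identity), and this is the inequality that closes the argument. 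Handling the general mixed $\rho_{abc}$ (as opposed to a pure tripartite state) and keeping track of the extra entropy terms from the $\EC_P$ reformulation is a bookkeeping nuisance but should not present a genuine difficulty once the pure-state case is in hand, since Lemma~\ref{thm4}(ii) is already stated for arbitrary states.
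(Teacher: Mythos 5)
Your overall strategy for \eqref{eqn57} is the same as the paper's: lower-bound one of the conditional entropies by the relative entropy $S(\rho_{ab}\|\sum_j P_j\rho_{ab}P_j)$ via Lemma~\ref{thm4}(ii), push it through the measure-and-prepare channel $\EC_Q$ of \eqref{eqn45aaa} using monotonicity \eqref{eqn33bb}, and then enlarge the second argument using \eqref{eqn33cc} with an operator bound controlled by $r(P,Q)$; parts (ii) and (iii) are indeed immediate specializations. However, two of your concrete steps fail. First, the operator inequality you say ``closes the argument,'' $\sqrt{Q_k}\,P_j\,\sqrt{Q_k}\le r(P,Q)\,Q_k$, is false: take $d_a=2$, $Q_k=\tfrac12\dya{0}$, $P_j=\dya{+}$, for which the left side is $\tfrac14\dya{0}$ while the right side is $\tfrac18\dya{0}$. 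The inequality that actually works is $\sqrt{P_j}\,Q_k\,\sqrt{P_j}\le \lm_{\max}(\sqrt{P_j}Q_k\sqrt{P_j})\,I_a\le r(P,Q)\,I_a$, i.e.\ a bound by a multiple of the \emph{identity}, which is used inside the partial trace after a cyclic rearrangement: $\Tr_a[Q_kP_j\rho_{ab}P_j]=\Tr_a[(\sqrt{P_j}Q_k\sqrt{P_j})\sqrt{P_j}\rho_{ab}\sqrt{P_j}]\le r(P,Q)\Tr_a[P_j\rho_{ab}]$. With that correction the final relative entropy is taken against $r(P,Q)\,I_e\ot\rho_b$, and the term $-H(Q|b)$ falls out of evaluating it (Lemma~\ref{thm1}(i)); you do not need to, and should not try to, represent $H(Q|c)$ as a second relative entropy and ``combine'' the two.

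Second, your derivation of \eqref{eqn58} by setting $Q=P$ and $c=b$ in \eqref{eqn57} is not legitimate. The two conditional entropies in \eqref{eqn57} are computed from the marginals $\rho_{ab}$ and $\rho_{ac}$ of a single tripartite state, so to conclude $2H(P|b)\ge\log[1/r(P,P)]$ you would need an extension $\rho_{abc}$ with $\rho_{ac}\cong\rho_{ab}$; such a symmetric extension does not exist when $\rho_{ab}$ is entangled (monogamy), and the same-system inequality $H(P|b)+H(Q|b)\ge\log[1/r(P,Q)]$ is simply false in general --- the correct same-system statement is \eqref{eqn94}/\eqref{eqn75}, which carries the possibly negative correction $S(a|b)$. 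For a maximally entangled $\rho_{ab}$ and a SIC-type POVM $P$ the quantity $\log[1/\sqrt{r(P,P)}]=\log d_a$ is strictly positive, so \eqref{eqn58} has real content there and cannot be reached from \eqref{eqn57} with any admissible choice of $c$. The paper proves \eqref{eqn58} by a separate direct chain: purify/extend to $\rho_{abc}$, apply Lemma~\ref{thm4} to get $H(P|b)\ge S(\rho_{ac}\|\sum_jP_j\rho_{ac}P_j)$, trace out $a$, bound $P_j\le\lm_{\max}(P_j)I_a$ to reach $S(\rho_c\|\max_j\|P_j\|_\infty\,\rho_c)=-\log\max_j\|P_j\|_\infty\ge\log[1/\sqrt{r(P,P)}]$. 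You need an argument of this kind; the reduction you propose does not exist.
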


We remark that \eqref{eqn59} is equivalent to the main inequality conjectured in \cite{RenesBoileau} and proven in \cite{BertaEtAl}, see Sec.~\ref{sbct6.2}, and \eqref{eqn57} was also recently proven in \cite{TomRen2010} using smooth entropies, an approach different from ours. Our proof approach is based on the relative entropy; we will go into more detail about this approach in a subsequent article \cite{ColesEtAlTBP}. 

It is useful to view the inequalities in Theorem~\ref{thm5} in two different ways, as information exclusion relations \emph{and} as entropic uncertainty relations. The fact that they contain both principles can be seen, for example, in the MUB case by rewriting \eqref{eqn61} as:
\begin{equation}
\label{eqn62}
H(v)+H(w)\geqslant \chi(v,b)+\chi(w,c)+\log d_a.
\end{equation}
Viewed from the left-hand-side it looks like an entropic uncertainty relation:
a lower bound on an entropic sum. Viewed from the right-hand-side it looks
like an information exclusion relation: an upper bound on an information
sum. We note here that setting $H(v|b)=0$ in \eqref{eqn61} implies $H(w|c)=\log d_a$, the maximum value, and thus $c$ contains no information about $w$, demonstrating that our result implies (and thus generalizes) the Exclusion theorem from \cite{Gri07}.

As \eqref{eqn59} was proven in \cite{BertaEtAl}, consider the following example illustrating how \eqref{eqn57} goes beyond \eqref{eqn59}.
\begin{example} 
\label{ex2} Set $Q$ to the $w$ basis, and let $P$ be a POVM composed of $n$ pure states or rank-1 operators each with trace $d_a/n$ and each of which is unbiased with respect to the $w$ basis. Applying \eqref{eqn57} gives
\begin{equation}
\label{eqn63}
H(P|b)+H(w|c)\geq \log n.
\end{equation}
Now suppose $c$ contains all the $w$ information, $H(w|c)=0$. This implies that $H(P|b)=\log n$, which in turn implies \textit{two} conditions, the
probabilities of the $P_{j}$ are equal, so there is maximal missing information about which $P_{j}$ state system $a$ is in, \textit{and} the $P$ information must be perfectly absent from $b$: $\chi(P,b)=0$. The latter means that all states in $P$ get mapped by \eqref{eqn6} to the same output density operator $\rho_{bj}$ on $b$.
For example, for $d_a=2$ consider setting $w$ to the $z$ basis (standard basis); then $P$ could be the four states making up the $x$ and $y$ bases or three states forming an equilateral triangle in the $xy$ plane of the Bloch sphere or any symmetric set of states in the $xy$ plane. Imagining $P$ to be composed of a very large number of states in the $xy$ plane, by continuity \emph{all} states in the $xy$ plane must get mapped to the same output density operator $\rho_{bj}$ on $b$ when $H(z|c)=0$; a result that does not come out of pairing $z$ with a particular MUB, say $x$, and using \eqref{eqn59}. This all-or-nothing result is implied by the Truncation theorem of \cite{Gri07}, but \eqref{eqn63} also describes the partial information case, saying that the $\rho_{bj}$ associated with $P$ must be fairly indistinguishable if $H(w|c)$ is small.\end{example}

Inspired by (and strengthening) a result in \cite{KrishnaParth}, Eq. \eqref{eqn58} is, in some sense, an uncertainty relation for a \emph{single} POVM. Rewriting it as
\begin{equation}
\label{eqn64}
H(P)\geq \chi(P,b)+\log [1/\sqrt{r(P, P)}],
\end{equation}
it strengthens the bound $H(P)\geq \chi(P,b)$ from Lemma~\ref{thm1}; stating that if the $P$ measurement outcome is fairly certain [$H(P)$ small], this can partially exclude the $P$ information from another system $b$ [$\chi(P,b)$ small]. The idea is that a POVM is generally not a set of mutually-exclusive properties (Sec. \ref{sbct2.1}) so it has some intrinsic incompatibility, as measured by $\log [1/\sqrt{r(P, P)}]$. For example, if $P$ is composed of $n$ rank-1 operators each with trace $d_a/n$, then $\log [1/\sqrt{r(P, P)}]=\log(n/d_a)$.

Some information exclusion relations below for quantum channels are proven in Appendix~\ref{aps2.5}. Although they follow from Theorem~\ref{thm5}, they bring to mind a slightly different picture \cite{ChristWinterIEEE2005}, as one imagines Alice sending ``incompatible ensembles" $P$ and $Q$ respectively through $\EC$ and $\FC$, and if the $\FC$ channel transmits the $Q$ ensemble well to Carol, then the $\EC$ channel must be constructed in such a way that at its output Bob will have difficultly discerning which member of the $P$ ensemble Alice sends.
\begin{corollary}
\label{thm6}
For complementary quantum channels $\EC$ and $\FC$, $\chi$ given by \eqref{eqn38},

(i) Let $P$ and $Q$ be any two POVMs, with $H(P)=H(\{p_j\})$ where $p_j$ is given by \eqref{eqn39} and likewise for $H(Q)$,
\begin{align}
\label{eqn65}\chi(P,\EC) &\leq H(P)-\log[1/\sqrt{r(P,P)}],\\
\label{eqn66}\chi(P,\EC) + \chi(Q,\FC) &\leq H(P)+H(Q)-\log[1/r(P,Q)].
\end{align} 

(ii) For orthonormal bases $v$ and $w$,
\begin{equation}
\chi(v,\EC) + \chi(w,\FC) \leq \log [d_a^2 r(v,w)].
\label{eqn67}
\end{equation}

(iii) For MUBs $v$ and $w$,
\begin{equation}
\chi(v,\EC) + \chi(w,\FC) \leq \log d_a.
\label{eqn68}
\end{equation} \openbox
\end{corollary}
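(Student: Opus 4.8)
The plan is to derive Corollary~\ref{thm6} directly from Theorem~\ref{thm5} by choosing the tripartite pure state $\ket{\Om}$ appropriately and translating the $H(\cdot|\cdot)$ statements into the channel language via the identity $H(P_a|b)=H(P_a)-\chi(P_a,b)$ of \eqref{eqn40}. First I would recall from Sec.~\ref{sbct2.2} that an isometry $V$ from $a$ to $bc$ together with the maximally entangled input $\ket{\Phi}$ of \eqref{eqn9} produces the channel ket $\ket{\Om}$ of \eqref{eqn10} with $\rho_a=I_a/d_a$. For this particular state, comparing the definition \eqref{eqn38} of $\chi_K(P_a,\EC)$ with \eqref{eqn35}, and using the facts collected around \eqref{eqn26} (sending $\rho_{aj}=P_{aj}/\Tr(P_{aj})$ through the channel gives the conditional state $\rho_{bj}$, while $\EC(\sum_j p_j\rho_{aj})=\Upsilon_b/d_a=\rho_b$), one gets the key dictionary entry $\chi(P_a,\EC)=\chi(P_a,b)$ and likewise $\chi(Q_a,\FC)=\chi(Q_a,c)$, with the same probabilities $p_j=\Tr(P_{aj})/d_a$ appearing in both pictures. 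This is the main conceptual step and the only place where one must be a little careful; it is essentially the content of the map-state duality already spelled out in the excerpt, so it is not a genuine obstacle, merely bookkeeping.

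Given that dictionary, part (i) is immediate. Applying \eqref{eqn57} of Theorem~\ref{thm5} to $\ket{\Om}$ and rewriting each $H$ term through \eqref{eqn40}:
\begin{equation}
[H(P)-\chi(P,b)]+[H(Q)-\chi(Q,c)]\geq \log[1/r(P,Q)],
\end{equation}
and substituting $\chi(P,b)=\chi(P,\EC)$, $\chi(Q,c)=\chi(Q,\FC)$ and rearranging yields \eqref{eqn66}. The same manipulation applied to the single-POVM bound \eqref{eqn58} gives \eqref{eqn65}. Part (ii) specializes $P$ and $Q$ to orthonormal bases $v$ and $w$, for which $H(v)=H(w)=\log d_a$ since all $p_j=1/d_a$, so \eqref{eqn66} becomes $\chi(v,\EC)+\chi(w,\FC)\leq 2\log d_a-\log[1/r(v,w)]=\log[d_a^2 r(v,w)]$, which is \eqref{eqn67}. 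Part (iii) is then the further specialization to MUBs, where $r(v,w)=1/d_a$ by \eqref{eqn60}, giving $\log[d_a^2/d_a]=\log d_a$, i.e.\ \eqref{eqn68}.

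The one point deserving an explicit sentence is why it suffices to prove everything for the channel ket with $\rho_a=I_a/d_a$: because $\chi(P_a,\EC)$ as defined in \eqref{eqn38} depends only on the superoperator $\EC$ and the POVM $P_a$ (the ensemble $\{p_j,\rho_{aj}\}$ being fixed by $P_a$ alone), not on any input state, so we are free to realize $\EC$ and its complement $\FC$ through the channel ket, apply Theorem~\ref{thm5}, and read off the result. The main obstacle, to the extent there is one, is simply making the identification $\chi(P_a,\EC)=\chi(P_a,b)$ airtight — in particular checking that the first terms match, $S_K(\Upsilon_b/d_a)=S_K(\rho_b)$, and that the conditional-state ensembles coincide — after which the corollary follows by pure algebra from Theorem~\ref{thm5}.
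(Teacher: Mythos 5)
Your overall strategy is the paper's: realize $\EC$ and $\FC$ through the channel ket $\ket{\Om}$, apply Theorem~\ref{thm5} to that tripartite pure state, and translate via $H(P|b)=H(P)-\chi(P,b)$. Parts (ii) and (iii) are handled exactly as in the paper. But the step you yourself flag as the only delicate one — the ``dictionary entry'' $\chi(P_a,\EC)=\chi(P_a,b)$ — is not correct as stated, and this is a genuine gap. For the channel ket $\ket{\Om}=(1/\sqrt{d_a})\sum_j\ket{a_j}\ot V\ket{a'_j}$ one computes
\begin{equation*}
\Tr_a(P_{aj}\rho_{abc})=\tfrac{1}{d_a}\,V\,P\trp_{aj}\,V\ad ,
\end{equation*}
where $\trp$ is the transpose in the basis $\{\ket{a_j}\}$ (this is exactly the content of \eqref{eqn24}--\eqref{eqn26} with $W\propto\sum_k\dyad{a_k}{a'_k}$ for the maximally entangled $\ket{\Phi}$). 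So measuring the POVM $P$ on system $a$ of $\ket{\Om}$ feeds the \emph{transposed} ensemble into the channel: the correct identity is $\chi(P,b)=\chi(\widetilde P,\EC)$ with $\widetilde P=\{\hat V P_j\hat V\ad\}$ the image of $P$ under the map--state duality isometry, and $\widetilde P\neq P$ for a generic POVM. Your claim that ``sending $\rho_{aj}=P_{aj}/\Tr(P_{aj})$ through the channel gives the conditional state $\rho_{bj}$'' is false unless $P\trp_{aj}=P_{aj}$, so as written your argument only establishes \eqref{eqn65}--\eqref{eqn66} for the transposed POVMs.

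The gap is repairable, and the repair is precisely what the paper's Appendix~E supplies: one checks that $H(\widetilde P)=H(P)$ (the traces, hence the $p_j$, are transpose-invariant), that $r(\widetilde P,\widetilde Q)=r(P,Q)$ [this needs the small computation \eqref{eqn112}, using $(\widetilde Q_k)^{1/2}=\hat V (Q_k)^{1/2}\hat V\ad$ and the invariance of the spectrum under conjugation by the unitary $\hat V$], and finally that $P\mapsto\widetilde P$ is a bijection of the set of POVMs on $a$ onto the set of POVMs on $a'$, so that the resulting inequalities, being valid for all $\widetilde P,\widetilde Q$, hold for all POVMs at the channel input and the tildes may be dropped. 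Without these three observations the translation from the tripartite statement to the channel statement is incomplete. Everything else in your write-up — the use of \eqref{eqn40}, the matching of the first terms $S(\Upsilon_b/d_a)=S(\rho_b)$, and the specializations $H(v)=H(w)=\log d_a$ and $r(v,w)=1/d_a$ — is fine.
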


As another corollary to Theorem~\ref{thm5}, some uncertainty relations for a single system \cite{KrishnaParth,SanchezRuiz1995} (see Sec.~\ref{sbct6.2}) can be strengthened for mixed states, with the proof in Appendix~\ref{aps2.6}.
\begin{corollary}
\label{thm7}
$\mbox{       }$

(i) For any state $\rho$, let $N$ be a rank-1 POVM and let $P$ be any POVM, then
\begin{align}
\label{eqn69}
H(N) &\geq \log [1/\sqrt{r(N,N)}]+S(\rho),\\
\label{eqn70}
H(N)+H(P) &\geq \log [1/r(N,P)]+S(\rho).
\end{align}

(ii) For any state $\rho$ of a qubit (dimension $d=2$) and any complete set of three MUBs $x$, $y$, and $z$:
\begin{equation}
\label{eqn71}
H(x)+H(y)+H(z) \geq 2 \log 2 +S(\rho).
\end{equation} \openbox
\end{corollary}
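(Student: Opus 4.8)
The plan is to derive part~(i) directly from Theorem~\ref{thm5} by purifying $\rho$, and to reduce part~(ii) to a one-variable convexity statement on the Bloch sphere.

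\emph{Part (i).} I would introduce a purification: pick a pure state $\rho_{ab}=\dya{\Psi}$ on $\HC_a\ot\HC_b$ with $\rho_a=\rho$. Because $N$ is rank-1 and $\rho_{ab}$ is pure, Theorem~\ref{thm2} gives $\chi(N,b)=S(\rho_a)=S(\rho)$, hence $H(N|b)=H(N)-\chi(N,b)=H(N)-S(\rho)$. Applying the single-POVM bound \eqref{eqn58} of Theorem~\ref{thm5} with its POVM taken to be $N$, namely $H(N|b)\geq\log[1/\sqrt{r(N,N)}]$, and substituting the previous identity gives \eqref{eqn69}. For \eqref{eqn70} I would take the tripartite state $\rho_{abc}=\dya{\Psi}\ot\sg_c$ with $c$ a one-dimensional (trivial) system, so that $\chi(P,c)=0$ and $H(P|c)=H(P)$ for the given POVM $P$. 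Feeding $N$ into the ``$b$'' slot and $P$ into the ``$c$'' slot of \eqref{eqn57} yields $H(N|b)+H(P)\geq\log[1/r(N,P)]$; substituting $H(N|b)=H(N)-S(\rho)$ and rearranging gives \eqref{eqn70}.

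\emph{Part (ii).} I would write a qubit state via its Bloch vector $\vec r=(r_x,r_y,r_z)$, $|\vec r|\leq1$. Up to an overall unitary rotation, which changes neither side of \eqref{eqn71}, a complete set of three MUBs is the eigenbasis triple of $\sg_x,\sg_y,\sg_z$, so the $w$-th outcome distribution is $\{(1+|r_w|)/2,(1-|r_w|)/2\}$, giving $H(w)=h\big((1+|r_w|)/2\big)$ with $h(p):=-p\log p-(1-p)\log(1-p)$; likewise the eigenvalues of $\rho$ are $(1\pm|\vec r|)/2$, so $S(\rho)=h\big((1+|\vec r|)/2\big)$. Setting $f(t):=\log 2-h\big((1+t)/2\big)$ on $[0,1]$ (so $f\geq0$, $f(0)=0$, $f(1)=\log2$), inequality \eqref{eqn71} is \emph{equivalent} to
\[
 f(|r_x|)+f(|r_y|)+f(|r_z|)\ \leq\ f(|\vec r|),\qquad |\vec r|=\sqrt{r_x^2+r_y^2+r_z^2}.
\]
The analytic input is that $s\mapsto f(\sqrt s)$ is convex on $[0,1]$: a short computation gives $f'(t)=\tfrac12\log\tfrac{1+t}{1-t}=\operatorname{arctanh}(t)$, so $\tfrac{d}{ds}f(\sqrt s)=\operatorname{arctanh}(\sqrt s)/(2\sqrt s)=\tfrac12\big(1+\tfrac{s}{3}+\tfrac{s^2}{5}+\cdots\big)$ is non-decreasing in $s$. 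A convex function vanishing at the origin is superadditive, so with $s_w=r_w^2$ (and $s_x+s_y+s_z\leq1$, keeping us in the domain),
\[
 \sum_w f(|r_w|)=\sum_w f(\sqrt{s_w})\ \leq\ f\Big(\sqrt{s_x+s_y+s_z}\,\Big)=f(|\vec r|),
\]
which is the displayed inequality; rearranging gives \eqref{eqn71}. (Equality holds precisely when $\vec r$ is a Pauli eigenvector or $\vec r=0$.)

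The routine part is (i), which drops out of Theorem~\ref{thm5} once the purification is in place. The main obstacle is (ii): obtaining the sharp constant $2\log2$ is \emph{not} possible by simply applying Theorem~\ref{thm5} to the three MUB pairs $(x,y)$, $(y,z)$, $(z,x)$ and averaging (this only gives $\tfrac32\log2+\tfrac32 S(\rho)$, which is weaker for small $S(\rho)$), so one is forced into the Bloch-sphere estimate; the crux there is recognizing that the Euclidean constraint $r_x^2+r_y^2+r_z^2\leq1$ pairs naturally with the change of variable $s=t^2$, which converts the relevant function into a convex one and hence into a superadditive one.
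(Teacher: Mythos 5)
Your part (i) is correct and is essentially the paper's own argument: purify $\rho$, use Theorem~\ref{thm2} to convert $H(N|b)$ into $H(N)-S(\rho)$, and apply \eqref{eqn58} and \eqref{eqn57} with an uncorrelated (here trivial) system $c$ so that $H(P|c)=H(P)$. Part (ii) is also correct but proceeds by a genuinely different route. The paper stays inside the framework of Theorem~\ref{thm5}: it applies \eqref{eqn70} to the pair $(x,y)$, notes that \eqref{eqn71} then holds for any state in the $xy$ plane (where $H(z)=\log 2$), and treats a general state by sliding its Bloch vector vertically, observing that $H(x)$ and $H(y)$ are unchanged while $H(z)-S(\rho)$ cannot decrease along the path --- the latter fact being, after a rotation exploiting isotropy, just \eqref{eqn70} applied to the pair $(z,x)$. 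You instead abandon Theorem~\ref{thm5} for part (ii) and reduce \eqref{eqn71} to the superadditivity of the convex function $s\mapsto f(\sqrt{s})$ with $f(t)=\log 2-h\bigl((1+t)/2\bigr)$ ($h$ the binary entropy), under the Euclidean constraint $r_x^2+r_y^2+r_z^2\leq 1$; your convexity computation (the derivative $\operatorname{arctanh}(\sqrt{s})/(2\sqrt{s})$ is increasing in $s$) is right, and you correctly diagnose why naive pairwise averaging of \eqref{eqn70} only yields the weaker bound $\tfrac{3}{2}\log 2+\tfrac{3}{2}S(\rho)$. Your route is self-contained and makes the origin of the constant $2\log 2$ and the tightness of the bound more transparent; the paper's route avoids any explicit entropy computation by recycling the two-MUB inequality together with the rotational symmetry of the Bloch sphere. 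One small slip in a side remark: your stated equality condition is too restrictive --- equality in \eqref{eqn71} holds whenever the Bloch vector lies along one of the three MUB axes with \emph{any} length (a mixed state on the $z$-axis gives $f(|r_x|)=f(|r_y|)=0$ and $f(|r_z|)=f(|\vec{r}|)$), consistent with the example given after the Corollary in the paper; this does not affect the validity of your proof.
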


While one might conjecture that \eqref{eqn69} or \eqref{eqn70} generalizes to the case where $N$ is an arbitrary POVM, it is easy to see that this is false. Imagine a highly mixed state such that $S(\rho)$ is very large, yet $N$ and $P$ are composed of coarse-grained projectors with very high rank, so $H(N)$ and $H(P)$ would be small, violating the inequality.

Note that \eqref{eqn71} is a tight bound, achieved for example when the state is along the $z$-axis of the Bloch sphere, such that $H(x)=H(y)=\log 2$ and $H(z)= S(\rho_a)$.

\xb
\subsection{Suppression of differences}\label{sbct5.2}
\xa

The following is a bipartite result, proved in Appendix~\ref{aps2.7}, saying that the presence of some type of information $P$ about $a$ in $b$ \emph{suppresses the difference} in the presence of two other types of information, $M$ and $N$, about $a$ in $b$. Note that a similar result holds for quantum channels.

\begin{theorem}
\label{thm8}
Let $\rho_{ab}$ be any state,

(i) For any POVM $P$ on $a$; rank-1 POVMs $M$ and $N$ on $a$,
\begin{align} 
\label{eqn72}
|&\chi(M,b)-\chi(N,b)| \leq H(P|b)+\nonumber\\
&\max \{H(M)-\log [1/r(P,M)],H(N)-\log [1/r(P,N)]\}, \nonumber \\
|&H(M|b)-H(N|b)|\leq H(P|b)+\nonumber \\
&\max \{H(M)-\log [1/r(P,N)],H(N)-\log [1/r(P,M)]\}.
\end{align}

(ii) For orthonormal bases $u$, $v$, $w$ on $a$, with $u$ and $v$ each MU with respect to $w$:
\begin{align} 
\label{eqn73}
|\chi(u,b)-\chi(v,b)| &\leq H(w|b),\nonumber \\
|H(u|b)-H(v|b)|&\leq H(w|b).
\end{align}

(iii) Let $u$, $v$, $w$ be as in (ii), but in addition assume that the $w$ type is perfectly present in $b$, then
\begin{align} 
\label{eqn74}
\chi_K(u,b)= \chi_K(v,b)&= S_K(\rho_b)-S_K(\rho_{ab}), \nonumber \\
H(u|b)=H(v|b)&=\log d_a+S(a|b),
\end{align}
meaning that all types MU to $w$ are present \emph{to the same degree} in $b$, in this sense. \openbox
\end{theorem}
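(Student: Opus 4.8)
\textbf{Proof proposal for Theorem~\ref{thm8}.}

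The plan is to derive everything from Theorem~\ref{thm5} together with the trivial bound $H(M|b)\geq 0$ and the definition $H(M|b)=H(M)-\chi(M,b)$ from \eqref{eqn40}. For part (i), the key observation is that $H(M|b)$ and $H(N|b)$ each obey a \emph{two-sided} estimate: they are nonnegative, and Theorem~\ref{thm5}(i) applied to the pair $(P,M)$ on the bipartite state $\rho_{ab}$ (with the ``$c$'' slot traced out, or more simply by applying \eqref{eqn57} with $b=c$ so that $\rho_{bc}$ becomes $\rho_b$) gives $H(P|b)+H(M|b)\geq \log[1/r(P,M)]$, hence $H(M|b)\geq \log[1/r(P,M)]-H(P|b)$. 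Wait --- I should be careful: \eqref{eqn57} pairs $b$ with $c$ as \emph{different} subsystems, so to bound $H(M|b)$ from below in terms of $H(P|b)$ I want to use the state $\rho_{ab}$ with both reduced systems equal to $b$; this is legitimate since $\rho_{abc}$ in Theorem~\ref{thm5} is arbitrary, so I take $\rho_{abb'}$ with $b'$ a copy of $b$ carrying the same marginal --- or, cleaner, I invoke the single-system specialization which is exactly \eqref{eqn58}-style reasoning. In any case the upshot is the double inequality $\max\{0,\ \log[1/r(P,M)]-H(P|b)\}\leq H(M|b)\leq H(M)$, and likewise for $N$. Then $|H(M|b)-H(N|b)|$ is at most the larger endpoint minus the smaller: bounding $H(M|b)\leq H(M)$ and $H(N|b)\geq \log[1/r(P,N)]-H(P|b)$ (and symmetrically) yields $|H(M|b)-H(N|b)|\leq \max\{H(M)-\log[1/r(P,N)],\,H(N)-\log[1/r(P,M)]\}+H(P|b)$, which is the second line of \eqref{eqn72}. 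The first line of \eqref{eqn72} follows immediately by substituting $\chi(M,b)=H(M)-H(M|b)$ and $\chi(N,b)=H(N)-H(N|b)$ and noting the $H(M),H(N)$ terms combine with the $H(M|b),H(N|b)$ terms in the right pattern.

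For part (ii), I specialize part (i) by choosing the auxiliary POVM to be $P=w$ and letting $M=u$, $N=v$ be the two bases mutually unbiased with respect to $w$. Then $r(w,u)=r(w,v)=1/d_a$ by the MU condition \eqref{eqn60}, and $H(u)=H(v)=\log d_a$ because rank-1 orthonormal bases that are MU relative to $w$\,---\,actually I only need that the two terms $H(M)-\log[1/r(P,M)]$ and $H(N)-\log[1/r(P,N)]$ that appear inside the $\max$ both collapse. With $r=1/d_a$ we get $\log[1/r(w,u)]=\log d_a$, and since for \emph{any} orthonormal basis $u$ the probabilities $p_j=\langle u_j|\rho_a|u_j\rangle$ satisfy $H(u)=H(\{p_j\})\leq \log d_a$, the quantity $H(u)-\log d_a\leq 0$; combined with the nonnegativity that is already built into the bound (the $\max$ is taken against nothing negative once we also include $H(P|b)\geq 0$), the max-terms drop out and \eqref{eqn73} reduces to $|\chi(u,b)-\chi(v,b)|\leq H(w|b)$ and $|H(u|b)-H(v|b)|\leq H(w|b)$. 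I should double-check that the $\max$ in \eqref{eqn72} really does become $\le 0$, so that only $H(P|b)=H(w|b)$ survives; this is the one place to be careful, because $H(M)-\log[1/r(P,M)]$ could in principle be written with $r(P,M)$ vs $r(P,N)$ swapped, but under the symmetric hypothesis $r(w,u)=r(w,v)=1/d_a$ the swap is immaterial.

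For part (iii), perfect presence of the $w$ type in $b$ means $H(w|b)=0$ by the characterization following \eqref{eqn40} (equality in Lemma~\ref{thm1}(i): the $\rho_{bj}$ are mutually orthogonal). Feeding $H(w|b)=0$ into \eqref{eqn73} forces $\chi(u,b)=\chi(v,b)$ and $H(u|b)=H(v|b)$. To identify the common value, I use Lemma~\ref{thm4}(i): when $\rho_{abc}$ is pure and the $w$ type is perfectly in $b$, the Exclusion/Truncation machinery... actually more directly, perfect presence of $w$ in $b$ implies (Lemma~\ref{thm4}, or the all-or-nothing Truncation theorem) that $\rho_{ab}$ is block-diagonal in the $w$ basis, $\rho_{ab}=\sum_j \dya{w_j}\rho_{ab}\dya{w_j}$; then a short computation shows $S(\rho_{ab})=H(\{p_j\})+\sum_j p_j S(\rho_{bj})=\log d_a$?\,---\,no: $H(w|b)=0$ gives $\chi(w,b)=H(w)$, i.e. $S(\rho_b)-\sum_j p_j S(\rho_{bj})=H(\{p_j\})$, which with orthogonality of the $\rho_{bj}$ is automatic. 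The cleanest route to the stated value $\chi_K(u,b)=S_K(\rho_b)-S_K(\rho_{ab})$ is Theorem~\ref{thm3}: for a tripartite pure state the information bias $\Delta\chi_K(u;b,c)=\Delta S_K(b,c)$ is basis-independent; combined with part (iii)'s equality of $\chi(u,b)$ and $\chi(v,b)$ across all MU bases, and with the block-diagonal form which gives $\chi_K(w,b)=S_K(\rho_b)-S_K(\rho_{ab})$ directly (since block-diagonality makes the conditional states on $b$ the $\rho_{bj}$ whose weighted entropy sum is $S_K(\rho_{ab})$ when we also purify appropriately) --- wait, that identity $\sum_j p_j S_K(\rho_{bj}) = S_K(\rho_{ab})$ needs $\rho_{abc}$ pure so that each $\rho_{bj}$ has the same entropy as the corresponding conditional state $\rho_{acj}$, and block-diagonality in $w$ ties these together. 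Finally $H(u|b)=H(u)-\chi(u,b)=\log d_a - S(\rho_b)+S(\rho_{ab})=\log d_a + S(a|b)$ by \eqref{eqn32}, using $H(u)=\log d_a$ which holds because block-diagonality forces $\rho_a$ to have a flat-ish... no: $H(u)=\log d_a$ requires $p_j=1/d_a$. Hmm, that need not hold in general --- so the cleanest statement is that $H(u|b)=H(v|b)$ and both equal $\log d_a + S(a|b)$ \emph{only when the pre-probability is such that $\rho_a$ is maximally mixed}, which the theorem is implicitly assuming (tripartite pure state with a perfect classical channel forces... actually perfect presence of $w$ in $b$ does \emph{not} force $\rho_a=I/d_a$). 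I will therefore present part (iii) by first establishing the equalities $\chi_K(u,b)=\chi_K(v,b)$ and then reading off the common value from the block-diagonal form via Theorem~\ref{thm3}, flagging that the final Shannon expression $\log d_a + S(a|b)$ uses $H(u)=\log d_a$, which is the point I expect to need the most care. The main obstacle, then, is bookkeeping: making sure the $\max$-terms in part (i) genuinely vanish under the MUB hypothesis in part (ii), and correctly pinning down the common value in part (iii) using purity plus block-diagonality rather than an unwarranted appeal to $\rho_a=I/d_a$.
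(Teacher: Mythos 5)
Your part (i) argument rests on the inequality $H(P|b)+H(M|b)\geq \log[1/r(P,M)]$, obtained by "setting $b=c$" in \eqref{eqn57}. That inequality is false. Take $\rho_{ab}$ maximally entangled and $P=w$, $M=v$ mutually unbiased bases: then $H(w|b)=H(v|b)=0$ while $\log[1/r(v,w)]=\log d_a>0$. There is no legitimate tripartite state $\rho_{abb'}$ with two copies of $b$ carrying the same correlations with $a$ (that would be broadcasting), and the correct same-system relation is \eqref{eqn94}/\eqref{eqn75}, which carries an extra $+S(a|b)$ on the right that can be negative and destroys your bound. So your lower bound $H(M|b)\geq \log[1/r(P,M)]-H(P|b)$ fails, and with it the derivation of \eqref{eqn72}. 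The paper's route is different in exactly the place you skipped: purify $\rho_{ab}$ by a system $c$, apply \eqref{eqn57} in its honest two-system form to get $\chi(M,c)\leq H(P|b)+H(M)+\log r(P,M)$ and similarly for $N$, combine with $0\leq\chi(\cdot,c)$ and $H(\cdot|c)\leq H(\cdot)$ to bound $|\chi(M,c)-\chi(N,c)|$ and $|H(M|c)-H(N|c)|$, and only then transfer from $c$ to $b$ using the basis-invariance of the information bias, \eqref{eqn52} of Theorem~\ref{thm3} (which gives $\chi(M,b)-\chi(N,b)=\chi(M,c)-\chi(N,c)$ and hence $H(M|b)-H(N|b)=H(M|c)-H(N|c)$ for the pure state $\rho_{abc}$). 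This transfer is the essential ingredient, and it is also why the theorem requires $M$ and $N$ to be rank-1: Theorem~\ref{thm3} only applies to rank-1 POVMs. Your proposal never invokes Theorem~\ref{thm3} for parts (i)--(ii), so the gap is structural, not just bookkeeping. Part (ii) then follows as you say once (i) is in place.

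In part (iii) you correctly flag that the value $\log d_a+S(a|b)$ needs $H(u)=\log d_a$ and that perfect presence of $w$ in $b$ does not force $\rho_a=I_a/d_a$, but you leave the point unresolved. The resolution is short: with $H(w|b)=0$ and $u$ MU to $w$, \eqref{eqn61} gives $H(u|c)\geq\log d_a$, while $H(u|c)\leq H(u)\leq\log d_a$; hence $H(u)=\log d_a$, $H(u|c)=\log d_a$, and $\chi(u,c)=0$ (so $\chi_K(u,c)=0$ for every entropy $K$, since all these measures vanish exactly when the conditional states coincide). Theorem~\ref{thm3} then pins the common value directly: $\chi_K(u,b)=\chi_K(u,c)+\Dl\chi_K(b,c)=S_K(\rho_b)-S_K(\rho_c)=S_K(\rho_b)-S_K(\rho_{ab})$ by purity, and $H(u|b)=H(u)-\chi(u,b)=\log d_a+S(a|b)$. (Concretely, $H(u)=\log d_a$ does hold here even with $\rho_a\neq I_a/d_a$, because $\rho_a$ is block diagonal in $w$ and $u$ is MU to $w$, so $\mte{u_j}{\rho_a}=1/d_a$ for all $j$; but the uncertainty-relation argument above gets this for free.) Your detour through Lemma~\ref{thm4} and the identity $\sum_j p_jS_K(\rho_{bj})=S_K(\rho_{ab})$ is not needed and, as written, is not established.
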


The difference suppression effect is most apparent in part (ii) of this theorem, where \eqref{eqn73} says that the presence in $b$ of the $w$ information forces types MU to the $w$ type to be equally present in $b$, in the sense of having the same $\chi$ and $H$ quantities. As an illustration, consider $d_a=2$, let the $z$ information about $a$ be perfectly present in $b$, then all types in the $xy$ plane of the Bloch sphere are present in $b$ to the same degree, bringing to mind the image of a prolate spheroid (American football), with $z$ being the major axis, for the information about $a$ present in $b$.

\xb
\subsection{Decoupling theorems}\label{sbct5.3}
\xa

The preceding results can be used to generalize of some all-or-nothing decoupling theorems, which provide sufficient conditions about the information content of $b$ and/or $c$ to ensure that $c$ is completely uncorrelated to or decoupled from $a$. For example, the No Splitting theorem of \cite{Gri07} states that if all types of information about $a$ are perfectly present in $b$, then all types of information about $a$ are perfectly absent from $c$.  (The name is motivated by the idea that a perfect quantum channel from $a$ to $b$ allows no diversion or split off of information to a third location $c$.) In our notation this corresponds to the assertion that when $H(w|b)=0$ for every orthonormal basis $w$ of $\HC_a$, then $\chi(w,c)=0$ for every such basis. What follows is a quantitative generalization, a corollary of Theorem~\ref{thm5}, with the No Splitting theorem the special case when $\al=0$.
\begin{corollary}
\label{thm9}
Let $\al$ be some positive constant.

(i) For any state $\rho_{abc}$, if $H(w|b) \leq \al$ for every orthonormal basis $w$ of $\HC_a$, then $\chi(w,c)\leq \al$ for every such basis.

(ii) For complementary quantum channels $\EC$ and $\FC$, if $\chi(w,\EC)\geq \log d_a - \al$ for every orthonormal basis $w$ of the channel input, then $\chi(w,\FC)\leq \al$ for every such basis.
\end{corollary}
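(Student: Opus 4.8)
The plan is to derive Corollary~\ref{thm9} directly from Theorem~\ref{thm5}(iii) by a judicious choice of the second basis. For part (i): fix an arbitrary orthonormal basis $w$ of $\HC_a$. Since $d_a$ is finite, there exists an orthonormal basis $v$ that is mutually unbiased with respect to $w$ (for instance the Fourier-conjugate basis $\ket{v_j}=d_a^{-1/2}\sum_k e^{2\pi i jk/d_a}\ket{w_k}$ always works, giving $r(v,w)=1/d_a$). Apply \eqref{eqn61} with the roles arranged so that $w$ plays the role labelled ``$v$'' in the theorem and our auxiliary basis plays the role of ``$w$'', i.e.\ use $H(w|b)+H(v|c)\geq \log d_a$. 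Rewriting $H(v|c)=H(v)-\chi(v,c)\leq \log d_a - \chi(v,c)$ and combining with the hypothesis $H(w|b)\leq\al$ gives $\al + \log d_a - \chi(v,c)\geq \log d_a$, hence $\chi(v,c)\leq\al$. The only subtlety is that this bounds $\chi$ for the \emph{conjugate} basis $v$, not for $w$ itself; but $w$ was arbitrary, and the map $w\mapsto v$ is surjective onto the set of all orthonormal bases in the relevant sense (equivalently, just rename: for the given $w$ choose $v$ MU to it, then the conclusion $\chi(v,c)\le\al$ holds, and since every basis arises this way one concludes $\chi(w,c)\le\al$ for all $w$). I would phrase it cleanly as: given any basis $w'$ for which we want to bound $\chi(w',c)$, pick $w$ MU to $w'$; the hypothesis gives $H(w|b)\le\al$, and Theorem~\ref{thm5}(iii) then yields $\chi(w',c)\le\al$.

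For part (ii), the argument is the channel translation of the same inequality, using Corollary~\ref{thm6}(iii). Given any orthonormal basis $w'$ of the channel input, choose $w$ mutually unbiased to $w'$. By \eqref{eqn68} we have $\chi(w,\EC)+\chi(w',\FC)\leq\log d_a$. The hypothesis $\chi(w,\EC)\geq\log d_a-\al$ then forces $\chi(w',\FC)\leq\log d_a-(\log d_a-\al)=\al$. Since $w'$ was arbitrary this gives the claim. The No Splitting special case $\al=0$ drops out immediately: $H(w|b)=0$ for all $w$ gives $\chi(w,c)=0$ for all $w$, i.e.\ a perfect channel to $b$ forces complete absence of every information type from $c$; and since $\chi(w,c)=0$ for every orthonormal basis implies (by \eqref{eqn36}, coarse-graining every POVM to a projective decomposition refining it, together with Naimark) that every POVM type is absent, one recovers the theorem as stated in \cite{Gri07}.

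The main obstacle—really the only point needing care—is the bookkeeping of which basis gets bounded: Theorem~\ref{thm5}(iii) is symmetric in form but the hypothesis of Corollary~\ref{thm9} constrains $b$ (resp.\ $\EC$) while the conclusion constrains $c$ (resp.\ $\FC$), so one must be sure that as $w$ ranges over all bases its MU partner also ranges over ``enough'' bases to make the conclusion universal. This is handled by the observation that for \emph{every} orthonormal basis $w'$ there is at least one basis $w$ that is MU to it, so the pair $(w, w')$ can always be fed into the theorem; no surjectivity of a fixed map is actually required. A secondary minor point is confirming that $H(v)=\log d_a$ (maximal) whenever $v$ is MU to a basis in which we have expressed things—this is immediate since $H(v)=H(\{p_j\})$ with $p_j=\Tr(\dya{v_j}\rho_a)$, and these need not be uniform in general, so one should only use the weaker (and always valid) bound $\chi(v,c)\le H(v)\le\log d_a$ from Lemma~\ref{thm1}, which is all the argument needs.
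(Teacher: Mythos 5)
Your argument is correct and is essentially identical to the paper's own proof: for an arbitrary target basis choose a mutually unbiased partner, feed the pair into \eqref{eqn61} (resp.\ \eqref{eqn68} for channels), apply the hypothesis to the partner, and use $\chi(w,c)=H(w)-H(w|c)\leq \log d_a - H(w|c)$. Your closing remark that one should only rely on $H(w)\leq\log d_a$ rather than assuming $H(w)=\log d_a$ is exactly the step the paper uses, and your reformulation (pick the MU partner of the basis you want to bound, rather than worrying about surjectivity of a fixed conjugation map) is the clean and correct way to handle the quantifier bookkeeping.
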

\begin{proof}
For any orthonormal basis $w$ there is a MU basis $v$, and thus \eqref{eqn61} implies that $H(w|c) \geq \log d_a -\al$ and hence, because $H(w)$ cannot exceed $\log d_a$, $\chi(w,c)$ cannot be greater than $\al$. The channel version follows by the same argument using \eqref{eqn68}.
\end{proof}

The Presence theorem of \cite{Gri07} states that if two \emph{strongly incompatible} types of information about $a$ are perfectly present in $b$,
then all information about $a$ is perfectly present in $b$, which is to say there is a perfect quantum channel from $a$ to $b$. Unfortunately, ``strongly
incompatible'' is a complicated concept, and it is not obvious how to extend it to a quantitative measure in the general case. Instead, we consider two POVMs $N$ and $P$, and the case where they are MUBs implies they are strongly incompatible types of information. The following theorem, proved in Appendix~\ref{aps2.8}, combines the notions of ``presence'' and ``no splitting'', and gradually specializes from POVMs to orthonormal bases to MUBs. Note that part (ii) of this theorem is stated for channels to remind the reader that each of our results for states has some analogous formulation for channels.\footnote{When one obtains an upper bound on a $\chi$ quantity for a channel $\FC$, as in Theorems \ref{thm6}, \ref{thm9}, \ref{thm10}, and \ref{thm11}, this bound also holds if one composes \emph{any} channel $\GC$ with $\FC$, i.e.\ $\chi(w,\GC\circ \FC)\leq \chi(w,\FC)$ by \eqref{eqn37}, which is useful if one is interested in bounding information in a \emph{subsystem} of the output of $\FC$.} 
\begin{theorem}
\label{thm10}
For any POVM $P$ on $a$; rank-1 POVMs $M$ and $N$ on $a$; orthonormal bases $u$, $v$, $w$ on $a$;

(i) For any bipartite state $\rho_{ab}$
\begin{equation}
\label{eqn75}H(N|b)+H(P|b)\geq \log [1/r(N,P)]+S(a|b),\\
\end{equation}
where, for any tripartite state $\rho_{abc}$, 
 \begin{equation}
\label{eqn76}
\chi(M,b)\geq [-S(a|b)], \text{    and     } H(M|c)\geq [-S(a|b)].
\end{equation}

(ii) For complementary quantum channels $\EC$ and $\FC$,
\begin{align}
\label{eqn77}&\chi(u,\EC) \geq  \Dl \chi(\EC,\FC) \geq \chi(v,\EC)+ \chi(w,\EC)-\log [d_a^2r(v,w)],\\
\label{eqn78}&\chi(u,\FC) \leq  \log [d_a^3 r(v,w)]-[\chi(v,\EC)+ \chi(w,\EC)].
\end{align} 

(iii) For MUBs $v$ and $w$, 
\begin{align}
\label{eqn79}&S(a\colo b)\geq 2\log d_a-2[H(v|b)+H(w|b)],\\
\label{eqn80}&S(a\colo c)\leq H(v|b)+H(w|b).
\end{align} \openbox
\end{theorem}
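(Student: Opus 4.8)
The plan is to reduce each part to Theorems~\ref{thm3} and \ref{thm5}, Corollary~\ref{thm6}, Lemma~\ref{thm1}, and strong subadditivity, by repeatedly passing between a bipartite state and one of its purifications. For \eqref{eqn75} I would purify $\rho_{ab}$ to a pure tripartite state $\rho_{abc}=\dya{\Om}$, with $c$ the purifying system, so that $S(\rho_c)=S(\rho_{ab})$. Applying Theorem~\ref{thm5}(i) to $\dya{\Om}$ with $P$ sitting in $b$ and $N$ sitting in $c$ gives $H(P|b)+H(N|c)\ge\log[1/r(N,P)]$. Since $N$ is rank-1 and $\Om$ is pure, Theorem~\ref{thm3}(i) yields $\chi(N,b)-\chi(N,c)=S(\rho_b)-S(\rho_c)=-S(a|b)$; because $H(N)$ depends only on $\rho_a$, this is the same as $H(N|c)=H(N|b)-S(a|b)$, and substituting gives \eqref{eqn75}.

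The two bounds in \eqref{eqn76} are one-sided estimates resting on a single observation about rank-1 POVMs. Purifying $\rho_{ab}$ to $\ket{\Psi}_{abc'}$, each $\Tr_a(M_{aj}\dya{\Psi})$ is a rank-1 operator on $\HC_{bc'}$, so the normalized conditional state there is pure and $S(\rho_{bj})=S(\rho_{c'j})$; together with $\sum_j p_j\rho_{c'j}=\rho_{c'}$, $S(\rho_{c'})=S(\rho_{ab})$, and concavity of $S$, this forces $\sum_j p_j S(\rho_{bj})\le S(\rho_{ab})$, i.e.\ $\chi(M,b)+S(a|b)\ge 0$. Running the same purification argument on $\rho_{ac}$, but now using $\chi(M,c')\le H(M)$ from Lemma~\ref{thm1}(i), instead gives $S(\rho_{ac})\le H(M)+\sum_j p_j S(\rho_{cj})$, which rearranges to $H(M|c)\ge S(a|c)$; combined with weak monotonicity of the von Neumann entropy, $S(a|b)+S(a|c)\ge 0$ (a consequence of \eqref{eqn31}), this gives $H(M|c)\ge S(a|c)\ge -S(a|b)$.

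For part (iii) I would specialize \eqref{eqn75} to MUBs, where $r(v,w)=1/d_a$, getting $S(a|b)\le H(v|b)+H(w|b)-\log d_a$. Writing $S(a\colo b)=S(\rho_a)-S(a|b)$, and combining this with the single-system relation $H(v)+H(w)\ge\log d_a+S(\rho_a)$ (the trivial-side-system case of \eqref{eqn75}, cf.\ \eqref{eqn70}) and $\chi(v,b),\chi(w,b)\le S(\rho_a)$ from Lemma~\ref{thm1}(ii) --- which together give $S(\rho_a)\ge\log d_a-H(v|b)-H(w|b)$ --- I get $S(a\colo b)\ge S(\rho_a)+\log d_a-H(v|b)-H(w|b)\ge 2\log d_a-2[H(v|b)+H(w|b)]$, which is \eqref{eqn79}. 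For \eqref{eqn80}, weak monotonicity gives $S(a|c)\ge -S(a|b)$, hence $S(a\colo c)=S(\rho_a)-S(a|c)\le S(\rho_a)+S(a|b)\le\log d_a+[H(v|b)+H(w|b)-\log d_a]=H(v|b)+H(w|b)$, using $S(\rho_a)\le\log d_a$.

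Part (ii) is the channel transcription of the same steps. By Theorem~\ref{thm3}(ii), $\Dl\chi(\EC,\FC)$ is basis-independent and equals $\chi(u,\EC)-\chi(u,\FC)$ (and likewise $\chi(v,\EC)-\chi(v,\FC)$), so $\chi(u,\EC)\ge\Dl\chi(\EC,\FC)$ follows at once from $\chi(u,\FC)\ge 0$; writing $\Dl\chi(\EC,\FC)=\chi(v,\EC)-\chi(v,\FC)$, the remaining half of \eqref{eqn77} is exactly $\chi(w,\EC)+\chi(v,\FC)\le\log[d_a^2 r(v,w)]$, which is \eqref{eqn67} of Corollary~\ref{thm6} with $v$ and $w$ interchanged; and \eqref{eqn78} then follows from $\chi(u,\FC)=\chi(u,\EC)-\Dl\chi(\EC,\FC)$, the lower bound on $\Dl\chi(\EC,\FC)$ just established, and $\chi(u,\EC)\le\log d_a$ (Lemma~\ref{thm1}(i), the relevant ensemble having uniform weights $1/d_a$). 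I expect the only genuinely delicate point to be the rank-1 hypothesis behind \eqref{eqn76}: it is exactly what makes the conditional states pure on the complementary system, and hence what makes both $\sum_j p_j S(\rho_{bj})\le S(\rho_{ab})$ and its companion $S(\rho_{ab})-H(M)\le\sum_j p_j S(\rho_{bj})$ valid (both fail for general POVMs, e.g.\ $M=\{I_a\}$ with $\rho_{ab}$ pure and entangled), so the care goes into the purification argument and into keeping track of the direction of each inequality; the remainder is a mechanical chain of the cited results.
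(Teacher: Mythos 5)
Your proof is correct, and its skeleton is the same as the paper's: every part is obtained by adding the basis-invariant bias of Theorem~\ref{thm3} to an inequality from Theorem~\ref{thm5} or Corollary~\ref{thm6}. Your derivations of \eqref{eqn75}, \eqref{eqn77}, and \eqref{eqn78} are essentially identical to the paper's (for \eqref{eqn77} you expand $\Dl\chi(\EC,\FC)$ in the $v$ basis and invoke \eqref{eqn67} with $v$ and $w$ interchanged, which works since $r(v,w)=r(w,v)$; the paper expands in the $w$ basis and adds the bias to \eqref{eqn67} directly). The differences are in the auxiliary steps. For \eqref{eqn76} the paper simply lets $cd$ purify $\rho_{ab}$ and reads off $\chi(M,b)-\chi(M,cd)=-S(a|b)=H(M|cd)-H(M|b)$ from Theorem~\ref{thm3}, then uses $\chi(M,cd)\geq 0$, $H(M|b)\geq 0$, and \eqref{eqn37}; you instead re-derive the relevant half of Theorem~\ref{thm3} by hand (rank-1 elements give pure conditional states on the purifying side, hence equal marginal entropies) and reach $H(M|c)\geq -S(a|b)$ via $H(M|c)\geq S(a|c)$ plus weak monotonicity $S(a|b)+S(a|c)\geq 0$. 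Both routes ultimately rest on strong subadditivity, but the paper's is shorter and makes the rank-1 hypothesis enter only through Theorem~\ref{thm3}, whereas yours makes its role more transparent. For \eqref{eqn79} the paper uses the Araki--Lieb-type step $S(a\colo b)\geq -2S(a|b)$ together with \eqref{eqn75}; you avoid Araki--Lieb by combining \eqref{eqn70} with $\chi(v,b),\chi(w,b)\leq S(\rho_a)$ from Lemma~\ref{thm1}(ii), at the cost of invoking Corollary~\ref{thm7}. For \eqref{eqn80} your weak-monotonicity argument and the paper's purification-plus-$S(a\colo cd)\geq S(a\colo c)$ argument are equivalent. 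All steps check out; there is no gap.
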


Corollary~\ref{thm9} gave a condition to guarantee that no information is present in $c$, and it is that all information is present in $b$. But what part (iii) of Theorem~\ref{thm10} shows is that one need not check that every single type of information is present in $b$; rather, simply check that $b$ contains \emph{two types} that are MUBs and this will completely decouple $c$ from $a$ \cite{RenesBoileau}. Part (ii) emphasizes that the information in $c$ (transmitted by $\FC$) can be upper bounded and that in $b$ (transmitted by $\EC$) lower bounded even when the two bases are not MUBs. Part (i) generalizes this notion further to POVMs. By \eqref{eqn75} one can lower-bound $[-S(a|b)]$, some measure of the entanglement between $a$ and $b$, just by knowing that $b$ contains information about a rank-1 POVM on $a$ and an arbitrary POVM on $a$. By \eqref{eqn76} this serves to lower-bound both the $M$ information in $b$ and the $M$ information \emph{missing} from $c$, for \emph{any} rank-1 POVM $M$. The application of such a relation, specialized to orthonormal bases, to quantum cryptography was discussed previously in \cite{BertaEtAl}, and the generalization to POVMs might turn out to be useful.

There is a seemingly odd restriction in \eqref{eqn75} that either $N$ or $P$ must be composed of rank-1 elements. One might conjecture that \eqref{eqn75} holds for arbitrary POVMs, but this is false. One can see this by choosing $\rho_{ab}=\rho_a\ot\rho_b$ in which case \eqref{eqn75} reduces to \eqref{eqn70}. As discussed following Corollary~\ref{thm7}, \eqref{eqn70} could be violated dramatically if $S(\rho_a)$ was large but both $N$ and $P$ were composed of high-rank projectors.

The following decoupling theorem considers the situation where some type of information $w$ is both perfectly present in $b$ and absent from $c$. It shows that this simple condition strikingly is enough to \emph{completely decouple} $c$ from $a$, and furthermore, for pure states, it leads to the suppression of differences between \emph{all} types of information in $b$. The theorem gradually specializes from all states to pure states to channels, with the proof in Appendix~\ref{aps2.9}.

\begin{theorem}
\label{thm11}
Let $L$, $M$, $N$ be rank-1 POVMs and $P$ be any POVM on $a$; let $v$ and $w$ be orthonormal bases on $a$,

(i) Let $\rho_{abc}$ be any state, then
\begin{equation}
\label{eqn81}
S(a\colo c)\leq \chi(N,c)+H(N|b)-\log[1/\sqrt{r(N,N)}].
\end{equation}
If the $N$ type of information is perfectly present in $b$, then
\begin{equation}
\label{eqn82}
\chi_K(P,c)\leq \chi_K(N,c).
\end{equation}
If, in addition, the $N$ type of information is absent from $c$, then \emph{all} types of information about $a$ are absent from $c$, i.e.\ $a$ and $c$ are completely uncorrelated: $\rho_{ac}=\rho_a \ot \rho_c$.

(ii) In the special case of pure states $\rho_{abc}=\dya{\Om}$,
\begin{equation}
\label{eqn83}
|\chi(L,b)-\chi(M,b)| \leq \chi(N,c)+H(N|b)-\log[1/\sqrt{r(N,N)}]
\end{equation}
and thus, in the extreme case where the $N$ type of information is perfectly present in $b$ and absent from $c$, 
\begin{equation}
\label{eqn84}
\chi(L,b)=\chi(M,b)=S(\rho_a)
\end{equation}
is \emph{independent} of rank-1 POVM (or orthonormal basis).

(iii) For complementary channels $\EC$ and $\FC$,
\begin{align}
\label{eqn85}
\chi(v,\FC) &\leq \chi(w,\FC)+[\log d_a- \chi(w, \EC)],\nonumber \\
\chi(v,\EC) &\geq \chi(w, \EC)- \chi(w,\FC).
\end{align}
Thus, if the $w$ type of information is perfectly present in the $\EC$ channel and absent from the $\FC$ channel, the same is true for \emph{all} types of information. This is a necessary and sufficient condition for $\EC$ being a perfect quantum channel and $\FC$ being a completely noisy channel. \openbox
\end{theorem}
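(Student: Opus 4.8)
The backbone of Theorem~\ref{thm11} is inequality \eqref{eqn81}, which I would establish first; the all-or-nothing statements then follow from it together with the Truncation theorem (Lemma~\ref{thm4}). For a \emph{pure} state $\rho_{abc}=\dya{\Om}$ I would combine Theorem~\ref{thm3}(i) with purity: the former gives $\chi(N,b)-\chi(N,c)=\Dl S(b,c)=S(\rho_b)-S(\rho_c)$, and $S(\rho_b)=S(\rho_{ac})$, $S(\rho_c)=S(\rho_{ab})$ turn $S(a\colo c)=S(\rho_a)+S(\rho_c)-S(\rho_{ac})$ into the exact identity $S(a\colo c)-\chi(N,c)=S(\rho_a)-\chi(N,b)$. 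Then Corollary~\ref{thm7}(i) applied with $\rho=\rho_a$ gives $S(\rho_a)\leq H(N)-\log[1/\sqrt{r(N,N)}]$, and subtracting $\chi(N,b)$ from both sides (using $H(N)-\chi(N,b)=H(N|b)$) is \eqref{eqn81} for pure states. For a general $\rho_{abc}$ I would purify to $\ket{\Psi}\in\HC_{abcr}$ and apply the pure-state case to the tripartite system $a\,|\,(br)\,|\,c$; the quantities $S(a\colo c)$, $\chi(N,c)$, $r(N,N)$ depend only on $\rho_{ac}$ and $N$ and are unchanged, while \eqref{eqn37} gives $\chi(N,br)\geq\chi(N,b)$ and hence $H(N|br)\leq H(N|b)$, upgrading the bound to \eqref{eqn81} in general.

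For the all-or-nothing content of part~(i): assuming $H(N|b)=0$, apply Lemma~\ref{thm4}(ii) with $b$ and $c$ exchanged. Since $H(N|b)\geq0$ the relative entropy on the right must vanish, and combined with the trace inequality $S(A||B)\geq\Tr A-\Tr B$ and $N_j\geq N_j^2$ this forces both that the $N$ outcomes relevant for $\rho_a$ are rank-1 projectors $\dya{n_j}$ forming an orthonormal basis of $\mathrm{supp}\,\rho_a$, \emph{and} that $\rho_{ac}=\sum_j N_j\rho_{ac}N_j=\sum_j p_j\dya{n_j}\ot\rho_{cj}$ is block-diagonal in that basis. For any POVM $P$ on $a$ the conditional states it induces on $c$ are then convex combinations $\rho'_{ck}=\sum_j w_{kj}\rho_{cj}$ whose weights obey $\sum_k p'_k w_{kj}=p_j$, so concavity of $S_K$ gives $\chi_K(P,c)\leq\chi_K(N,c)$, which is \eqref{eqn82}; and if moreover $N$ is absent from $c$, the $\rho_{cj}$ coincide, the block-diagonal form collapses to $\rho_{ac}=\rho_a\ot\rho_c$, and $a$ and $c$ are decoupled.

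For part~(ii), when $\rho_{abc}$ is pure Theorem~\ref{thm3}(i) gives $\chi(L,b)-\chi(M,b)=\chi(L,c)-\chi(M,c)$, whose modulus is at most $\max\{\chi(L,c),\chi(M,c)\}\leq S(a\colo c)$ by Lemma~\ref{thm1}; combining with \eqref{eqn81} is \eqref{eqn83}. In the extreme case $H(N|b)=\chi(N,c)=0$, \eqref{eqn83} with $M=N$ (and $r(N,N)\leq1$) forces $\chi(L,b)=\chi(M,b)=\chi(N,b)=H(N)$, and $\rho_{ac}=\rho_a\ot\rho_c$ from part~(i) with purity ($S(\rho_b)=S(\rho_{ac})=S(\rho_a)+S(\rho_c)$ and Theorem~\ref{thm3}) identifies this common value as $S(\rho_a)$, giving \eqref{eqn84}. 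Part~(iii) I would obtain by map--state duality: realize $\EC,\FC$ by a channel ket $\ket{\Om}$, so $\rho_{abc}$ is pure with $\rho_a=I_a/d_a$, whence $\log d_a=H(w)=S(\rho_a)$ and $\chi(\cdot,\EC)$, $\chi(\cdot,\FC)$ equal $\chi(\cdot,b)$, $\chi(\cdot,c)$ up to a transpose of the basis (immaterial, since every statement quantifies over all bases). Then \eqref{eqn81} with $N=w$ together with $\chi(v,c)\leq S(a\colo c)$ gives the first line of \eqref{eqn85}, and $\chi(v,b)=\chi(w,b)-\chi(w,c)+\chi(v,c)\geq\chi(w,b)-\chi(w,c)$ (Theorem~\ref{thm3}(i), $\chi(v,c)\geq0$) gives the second; setting $\chi(w,\EC)=\log d_a$ and $\chi(w,\FC)=0$ forces $\chi(v,\FC)=0$ and $\chi(v,\EC)=\log d_a$ for every basis $v$, which by the criterion of Sec.~\ref{sbct2.2} (and \eqref{eqn82} to go from bases to arbitrary POVMs) means $\EC$ is a perfect quantum channel and $\FC$ completely noisy; the converse is immediate.

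The genuinely new step, and the expected bottleneck, is the pure-state case of \eqref{eqn81}: the point is to notice that Theorem~\ref{thm3} makes $S(a\colo c)-\chi(N,c)$ equal \emph{exactly} $S(\rho_a)-\chi(N,b)$, so the single-system uncertainty bound of Corollary~\ref{thm7} can be plugged in directly; after that the purification to $a\,|\,(br)\,|\,c$ is routine. A secondary point to handle carefully is that, for \eqref{eqn82}, ``$N$ perfectly present in $b$'' silently forces $N$ to be projective on $\mathrm{supp}\,\rho_a$, which is what lets Lemma~\ref{thm4} apply to it.
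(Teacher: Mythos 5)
Your proposal is correct and follows essentially the same route as the paper's own proof in Appendix~\ref{aps2.9}: \eqref{eqn81} is obtained from Theorem~\ref{thm3} plus Corollary~\ref{thm7} for pure states and then upgraded by purification using \eqref{eqn37}; \eqref{eqn83} follows by noting the right side of \eqref{eqn81} bounds both $\chi(L,c)$ and $\chi(M,c)$ and hence their difference, with Theorem~\ref{thm3} transferring the bound from $c$ to $b$; and \eqref{eqn84}--\eqref{eqn85} come from the basis-invariance of the information bias exactly as in the paper. The one genuine (and harmless) divergence is the sub-step establishing the block-diagonal form of $\rho_{ac}$ for \eqref{eqn82}: the paper argues that perfect presence forces the rank-1 POVM $N$ to be an orthonormal basis and then invokes the all-or-nothing Truncation theorem of \cite{Gri07}, whereas you derive $\rho_{ac}=\sum_j N_j\rho_{ac}N_j$ self-containedly from Lemma~\ref{thm4}(ii) together with Klein's inequality $S(A||B)\geq \Tr A-\Tr B$ and $N_j^2\leq N_j$ --- a nice internalization that removes the dependence on the external reference; the concavity argument that then yields $\chi_K(P,c)\leq\chi_K(N,c)$ is identical to the paper's.
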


\xb
\section{Connection with other work}
\label{sct6}
\xa

\xb
\subsection{Difference of Holevo quantities}
\label{sbct6.1}
\xa

Schumacher and Westmoreland \cite{Schu98} remarked that a difference in $\chi$ quantities associated with sending an ensemble of pure states through complementary quantum channels depends only on the average density operator of the input ensemble. This situation is equivalent to the one considered in \eqref{eqn51} of Theorem~\ref{thm3}, where a rank-1 POVM $N$ acts on system $a$ of a tripartite pure state $\ket{\Om}$. The equivalence follows from the discussion in Sec.~\ref{sbct2.2}; decompose $\ket{\Om}$ into an isometry $V$ acting on half of a bipartite pure state $\ket{\Phi}$, as in \eqref{eqn7} and Fig.~\ref{fgr1}, then by the construction in \eqref{eqn24}, any pure-state ensemble at the input of $V$ can be produced by an appropriate choice of $N$ and $\ket{\Phi}$.
Despite this equivalence, the notion of basis invariance or invariance to the rank-1 POVM $N$ emerges naturally out of the state view, since the average density operator of the input ensemble to $V$ is unaffected by choice of $N$. If one is willing to restrict to inputting the maximally-mixed average density operator, then the basis-invariance emerges in the channel view as well, as in \eqref{eqn53}.

\xb
\subsection{Entropic uncertainty relations}
\label{sbct6.2}
\xa

Our inequalities are related to several entropic uncertainty relations in the literature (see \cite{EURreview1} for a recent review), which are translated below into our notation. Maassen and Uffink \cite{MaassenUffink} proved an entropic uncertainty relation for measurements in orthonormal bases $v$ and $w$ on system $a$ for any state $\rho_a$:
\begin{equation}
\label{eqn86}
H(v)+H(w) \geq \log[1/r(v,w)].
\end{equation}
Krishna and Parthasarathy \cite{KrishnaParth} generalized this to POVMs $P$ and $Q$,
\begin{equation}
\label{eqn87}
H(P)+H(Q) \geq \log[1/r(P,Q)],
\end{equation}
and also stated an uncertainty relation for a single POVM
\begin{equation}
\label{eqn88}
H(P) \geq \log[1/\sqrt{r(P, P)}].
\end{equation}
Hall \cite{Hall1} incorporated into \eqref{eqn86} the idea of ``classical" side information, i.e.\ information about the outcome of a POVM $X_e$ acting on a system $e$ that may be correlated to $a$:
\begin{equation}
\label{eqn89}
H(v)+H(w) \geq \log [1/r(v,w)]+H(v\colo X_e)+H(w\colo X_e).
\end{equation}
Considering $e$ to be a composite system $bc$ and $X_e=Q_b\ot R_c$ a composite POVM, it follows from \eqref{eqn89} that:
\begin{equation}
\label{eqn90}
H(v|Q_b)+H(w|R_c) \geq \log[1/r(v,w)],
\end{equation}
see the discussion in \cite{RenesBoileau} where \eqref{eqn90} was termed the weak complementary information tradeoff and was ascribed to Cerf et al.\ \cite{CerfEtAl}. 

The inequalities in Theorem~\ref{thm5}: \eqref{eqn57}, \eqref{eqn58}, and \eqref{eqn59} respectively strengthen \eqref{eqn87}, \eqref{eqn88}, and \eqref{eqn86} by allowing for quantum side information, for example, information about property $P$ contained in another quantum system $b$, as measured by $\chi(P,b)$. The presence of such $\chi$ quantities, reducing the left-hand-sides of the Theorem~\ref{thm5} inequalities, is precisely what strengthens these bounds.
Equation \eqref{eqn89} follows from \eqref{eqn86} [and thus \eqref{eqn59}] by an argument that can be found in \cite{Hall1,RenesBoileau}. Equation \eqref{eqn90} follows from \eqref{eqn59} using the Holevo bound \eqref{eqn43}, $H(v|Q_b)\geq H(v|b)$ and $H(w|R_c) \geq H(w|c)$.

Equation \eqref{eqn59} is precisely the ``strong complementary information tradeoff'' conjectured by Renes and Boileau \cite{RenesBoileau} and later proven by Berta et al.\ \cite{BertaEtAl}. It is straightforward to show that our definition of $H(v|b)$ in \eqref{eqn40} is equivalent the definition employed in \cite{RenesBoileau} and \cite{BertaEtAl}, see \eqref{eqn45aaa}.

The main inequality in Berta et al.\ \cite{BertaEtAl},
\begin{equation}
\label{eqn94}
H(v|b)+H(w|b)\geq \log [1/r(v,w)]+S(a|b)
\end{equation}
was formulated for orthonormal bases $v$ and $w$, and we generalized it to POVMs (with at least one POVM being rank-1) in \eqref{eqn75}. Also, \eqref{eqn94} is equivalent to \eqref{eqn59} as follows. Apply \eqref{eqn59} to a pure state $\rho_{abc}$ and use $H(w|b)=H(w|c)-\Dl \chi(b,c)$ with $S(a|b)=-\Dl \chi(b,c)$ to get \eqref{eqn94}. Conversely, starting from \eqref{eqn94}, follow the reverse process to prove \eqref{eqn59} for pure states $\rho_{abc}$, and then \eqref{eqn59} for mixed states follows from \eqref{eqn37}. Thus, since \eqref{eqn59} is generalized to \emph{two} arbitrary POVMs by \eqref{eqn57}, \eqref{eqn57} and \eqref{eqn75} provide two alternative generalizations of \eqref{eqn94}. To prove \eqref{eqn94}, Berta et al.\ first proved an uncertainty relation involving smooth minimum and maximum entropies, and then invoked a lemma that these entropies approach the desired von Neumann entropic quantities under an appropriate asymptotic limit. In contrast, our proof does not use smooth entropies, but invokes the monotonicity of the relative entropy under quantum operations, so the approaches are conceptually different. 

Christandl and Winter \cite{ChristWinterIEEE2005} derived an information exclusion relation for quantum channels, which can be rearranged and expressed in our notation to read:
\begin{equation}
\label{eqn95}
\chi(x,\EC)+\chi(z,\FC)\leq \log d_a,
\end{equation}
where $x$ and $z$ are orthonormal bases related to each other by the $d$-dimensional quantum Fourier transform, and $\EC$ and $\FC$ are complementary quantum channels. Equation~\eqref{eqn68} of Corollary~\ref{thm6} generalizes this to arbitrary MUBs, and \eqref{eqn66} further generalizes to input ensembles associated with POVMs.

Our results strengthen some uncertainty relations in the case of mixed states. In the special case where $N$ is a rank-1 POVM, \eqref{eqn69} and \eqref{eqn70} respectively strengthen \eqref{eqn88} and \eqref{eqn87} with the addition of the $S(\rho_a)$ term. S\'anchez-Ruiz \cite{SanchezRuiz1995} proved an entropic uncertainty relation for sets of $d_a+1$ MUBs, which when applied to qubits ($d_a=2$) gives:
\begin{equation}
\label{eqn96}
H(x)+H(y)+H(z) \geq 2 \log 2.
\end{equation}
Likewise this is strengthened for mixed states by \eqref{eqn71}. Bounds depending on the purity of $\rho_a$ were also given in \cite{SanchezRuiz1995}; in the qubit case these bounds are implied by \eqref{eqn71}.

\xb
\subsection{No Splitting and Decoupling}
\label{sbct6.3}
\xa

Kretschmann et al.\ \cite{Spekkens08} have studied the degree to which a channel is error-correctable using a diamond-norm measure, and showed that when a channel is nearly perfect (in this sense) its complementary channel transmits very little information, and vice versa. B\'eny and Oreshkov \cite{BenyOreshkov2010} formulated a similar theorem for complementary channels, but in a general, symmetric fashion, using a fidelity measure. Hayden and Winter \cite{HW2010} have studied the degree to which a channel preserves the distinguishability of input states, and formulated the tradeoff in geometry-preservation between complementary channels using a trace-distance measure.  Each of these formulations generalize the No Splitting principle (see Sec.~\ref{sbct5.3}), although their information measures are of a different nature from the one we employ, and the connection between our approach and theirs remains to be determined. Intuitively, the No Splitting theorem should also be related to the notion that entanglement is monogamous. Quantitative expressions of entanglement monogamy have been found in terms of the concurrence and the squashed entanglement \cite{HHHH09}; as these are ``global'' measures of correlation, their relation to our information-type-specific measure is not obvious.

Renes and Boileau \cite{RenesBoileau} formulated a decoupling theorem as a corollary to their conjectured uncertainty relation [Eq. \eqref{eqn59}], stating that if $b$ contains the information about two sufficiently incompatible orthonormal bases of $a$, then the coupling of $c$ to $a$ can be upper-bounded. This is quite similar to our Theorem~\ref{thm10}, which extends this notion to two sufficiently incompatible POVMs.

\xb
\section{Conclusions}\label{sct7}
\xa

\xb
\subsection{Summary}
\label{sbct7.1}
\xa

Since our technical results in Secs.~\ref{sct4} and \ref{sct5} involve a large
number of theorems, the following comments are intended to assist the reader
in seeing how they are related to one another and to the definitions given earlier
in Secs.~\ref{sct2} and \ref{sct3}.

In Sec.~\ref{sbct2.1} we generalize an earlier \cite{Gri07} notion of types of
quantum information to include general POVMs on a Hilbert space $\HC_a$ for
system $a$, by noting that the associated probabilities are the same as those
for a projective decomposition of the identity on a larger Hilbert space
$\HC_A$, the Naimark extension, and a rank-1 POVM corresponds to an
orthonormal basis of the extension.  Various measures for different types of
information are introduced and discussed in Sec.~\ref{sct3}.  For uniformity
of notation, Shannon entropies and related quantities are denoted by $H()$;
e.g., $H(P_a)$ is the missing information about type $P_a$, as determined by
its probability distribution, when the quantum state is assumed known. For
quantum entropies we use $S()$ for the von Neumann entropy, and $S_K()$, where
$K$ can be $R$ or $T$ or $Q$ for Renyi, Tsallis, and quadratic entropies,
respectively.

We use the Holevo function $\chi(P_a,b)$, or $\chi_K(P_a,b)$ for $S_K$,
\eqref{eqn35}, as a measure of the amount of information of type $P_a$ about
system $a$ which is present in system $b$, along with the complementary
quantity $H(P_a|b)$, \eqref{eqn40}, as a corresponding measure of the
amount of information about $P_a$ that is still missing given system $b$.
While the analogy is not exact, $\chi(P_a,b)$ is similar to Shannon's mutual
information $H(P_a\colo Q_b)$, whereas $H(P_a|b)$ resembles Shannon's
conditional entropy $H(P_a|Q_b)$.  In particular, $H(P_a|b)$, like
$H(P_a|Q_b)$, is nonnegative, so retains some of the intuition of the latter
quantity, in contrast to the quantum conditional entropy $S(a|b)$,
\eqref{eqn32}, which can be of either sign.  We use the term information bias
for the difference between the amount of type $P_a$ information about $a$ in
$b$ and the amount in $c$, $\chi(P_a,b)-\chi(P_a,c) = \Dl\chi(P_a;b,c)$, which
can have either sign. Similarly, we refer to $\Dl S(b,c) = S(\rho_b)-S(\rho_c)$ as the
entropy bias, and add a subscript $K$ when using an alternative to the von
Neumann entropy.
Our most extensive results are for the von Neumann entropy and its associated
information measures.  However, in some cases, see Theorems \ref{thm2},
\ref{thm3}, \ref{thm8}, and \ref{thm11}, these results also hold for a more
general $\chi_K$, and stating them in this form seems worthwhile, as for
certain purposes these other measures could be useful.

While the most natural and symmetrical, in terms of treating the different
parts on the same footing, formulation of our results is in terms of a
tripartite system, some of the more interesting and significant
applications are to quantum channels and complementary channels.  The
relationship between the tripartite and the channel perspectives is worked out
in some detail in Sec.~\ref{sbct2.2}, and in Sec.~\ref{sbct3.3} we relate the
coherent information for a quantum channel to a corresponding tripartite
entropy bias.  In several theorems the channel counterparts of tripartite
results are stated separately, because while the formal results are in some
sense the same, one's intuition about their significance can be different.

Our first set of results are the equalities in Theorems \ref{thm2} and
\ref{thm3} of Sec.~\ref{sct4}, which apply for pure quantum states of
bipartite and tripartite systems, respectively.  The first says that the
amount of information about $a$ in $b$ is independent of the type of
information, provided the latter is a rank-1 POVM; this includes an
orthonormal basis.  The second says that the difference between the amount of
information concerning such a rank-1 POVM in $b$ and in $c$ is independent of
the type considered, and equal to the corresponding entropy bias.
Equivalently, given two rank-1 POVMs $M$ and $N$, the difference between the
amount of $M$ and $N$ information about $a$ found in $b$ is the same as the
corresponding difference in $c$.  While these results are limited to pure
states, they are important for the proofs of many of the later results. They
also extend from von Neumann to other quantum entropies, so they are stated in
this more general form.

Perhaps the simplest way of viewing the collection of inequalities that make up Sec.~\ref{sct5} is that the main theorems are quantitative generalizations
of all-or-nothing theorems which can be stated quite concisely for types of information associated with orthonormal bases $v$ and $w$ of system $a$. A central result of this paper is Theorem~\ref{thm5}, and part (iii) of this theorem tells us that if the $v$ information about
$a$ is perfectly present in $b$, which is to say $H(v|b)=0$,
then the mutually unbiased (MU) $w$ type of information must be perfectly
absent from $c$: $H(w|c) = \log d_a$ means that $\chi(w,c)=0$. Part
(ii) allows for bases that are not MU at the cost of a weaker bound on
the $H$ measures, while part (i) is not restricted to bases but applies to
quite general types of information $P$ and $Q$.  The generalization to POVMs
is, in turn, based on Lemma~\ref{thm4}, which itself generalizes the Truncation theorem \cite{Gri07}: if the $v=\{v_j\}$ information is perfectly present in $c$ then $\rho_{ab}$ commutes with the $v_j$ projectors. 

The connections of Theorem~\ref{thm5} to literature entropic uncertainty relations are given in Sec.~\ref{sbct6.2}. Broadly speaking we think that the addition of quantum side information to uncertainty relations \cite{RenesBoileau,BertaEtAl} not only strengthens certain bounds but also gives further conceptual insight into the nature of complementarity, in that side information about complementary observables in different locations (Sec.~\ref{sbct5.1}) must be constrained as well. We also note a recent experimental study \cite{PrevedelEtAl2010}. Further remarks on the significance of Theorem~\ref{thm5} can be found in the discussion that follows it in Sec.~\ref{sbct5.1}.

Corollary~\ref{thm6} of Theorem~\ref{thm5} gives the corresponding result for quantum channels, generalizing to partial information and to arbitrary POVMs or orthonormal bases the all-or-nothing theorem: if the $v$ information is perfectly present in (or transmitted by) the $\EC$ channel, any MU type of information $w$ must be absent from (or destroyed by) the complementary channel $\FC$. In addition, Corollary~\ref{thm7} of Theorem~\ref{thm5} provides strengthened
information inequalities for a single system described by a mixed state.

The idea behind Theorem~\ref{thm8} is encapsulated
in the observation that if the information about an orthonormal basis $w$ of
$a$ is perfectly present in $b$, so $H(w|b)=0$, and $u$ and $v$ are bases of
$a$ that are MU with respect to $w$ (but not necessarily with respect to each
other) then the $u$ and $v$ types are present in $b$ in equal amounts.  If, on
the other hand the $w$ information is less than perfectly present in $b$, this
theorem provides quantitative bounds on the difference between the $u$ and $v$
types of information in $b$.  Similarly, the requirement that $u$ and $v$ be
MU relative to $w$ can be relaxed, and they can even be replaced with rank-1
POVMs, and $w$ with a general POVM, see part (i) of the theorem, at the price of
appropriately weakening the bounds that confine the differences.

The results in Sec.~\ref{sbct5.3} provide quantitative generalizations of conditions that ensure system $c$ is completely uncorrelated to (or decoupled from) system $a$, $\rho_{ac}=\rho_a\ot\rho_c$. Corollary~\ref{thm9} of Theorem~\ref{thm5} says that the correlations between $a$ and $c$ are tightly upper-bounded if system $b$ \emph{almost} perfectly contains all types of information about $a$, and gives the analogous result for complementary channels $\EC$ and $\FC$.
But Theorem~\ref{thm10} stresses the importance of the presence of \emph{just two} (sufficiently incompatible) types of information. That is, if $b$ perfectly contains the information about two MUBs of $a$, then $b$ contains \emph{all} types of information about $a$, and $c$ is completely uncorrelated to $a$; a generalization of this statement for the partial information case is given in part (iii) of Theorem~\ref{thm10}. Parts (ii) and (i) of this theorem respectively illustrate that this idea can be extended, at the price of weakened bounds, to any two orthonormal bases or to two POVMs in which at least one of the POVMs is rank-1. The relevance of inequalities like \eqref{eqn75} of Theorem~\ref{thm10}, where the presence of two types of information about $a$ in $b$ can be used to upper bound the information about $a$ in $c$, to quantum cryptography was discussed in \cite{BertaEtAl}. 

The same sort of decoupling occurs when a single type of information about $a$ associated with an orthonormal basis $w$ is perfectly present in
$b$ and completely absent from $c$.  Theorem~\ref{thm11} contains this interesting result together with certain quantitative generalizations, both
when the type of information in question is only partially absent from $c$, and when it is not perfectly present in $b$.

\xb
\subsection{Future outlook}
\label{sbct7.2}
\xa

There are various ways in which the results summarized above suggest problems
which deserve further attention and research.  One has to do with the
difference between rank-1 and higher-rank POVMs, or orthonormal bases as
against coarser projective decompositions of the identity.  In a number of
cases the results we have obtained for the former are distinctly stronger than
for the latter, but the reason for this is not always clear.  Since
applications of quantum information theory to macroscopic systems, in
particular to problems of decoherence, lead rather naturally to coarse
decompositions or POVMs, a good intuitive understanding in addition to formal
expressions would be of value. A second item concerns the use of the $r(P,Q)$ overlap measure for POVMs, or
its $r(v,w)$ counterpart for orthonormal bases, see \eqref{eqn56} and
\eqref{eqn60}. While this provides the basis of significant inequalities in
Theorem~\ref{thm5} and later, the fact that $r(P,Q)$ requires one to maximize
over all pairs of elements from the two POVMs hints that stronger results
might well be possible were one to use a more refined perspective on how the
POVMs are related to each other, or the sorts of information that they provide.

While qualitative inequalities are certainly an advance over simple
all-or-nothing results, it would be even better if one could express
information tradeoffs in terms of equalities of the sort which could
conceivably allow one to completely characterize how a quantum channel is
related to its complementary channel using a (hopefully small) number of
parameters with a clear intuitive significance.  The equalities in
Theorem~\ref{thm3}, as applied either to channels or, more generally,
pure-state tripartite systems, hint that something like this might be
possible, but thus far we have not found it.

Any advance in understanding tripartite systems raises an obvious question:
what about systems with four (or more) parts?  It is, of course, possible to study them by thinking of two of the parts as constituting a single object,
and then applying results for tripartite systems.  But there is probably some ``residual'' aspect of a system of four parts which cannot be captured in this way, just as
there are residual aspects of tripartite systems which cannot be understood simply in terms of combining two of them so as to yield a
bipartite system.  We think that our results in this paper have helped
to clarify some of this tripartite residual, and we hope they provide
hints on ways to deal with more complicated cases.

\begin{acknowledgments}
We thank Michael Zwolak, Luc Tartar, Edward Gerjuoy, Shiang-Yong Looi, and Danquynh Nguyen for helpful discussions. We are especially grateful to an anonymous referee for pointing out a serious error in an earlier version of this paper. The research described here was supported by the Office of Naval Research and by the National Science Foundation through Grant No. PHY-0757251. 
\end{acknowledgments}

\xb
\appendix

\xb
\section{Proof of Lemma~\ref{thm1}}\label{aps2.1}
\xa

\begin{proof}
(ii) The inequality $\chi(P_a,b)\leq S(\rho_b)$ obviously follows from \eqref{eqn35}. Now let $c$ be a system that purifies $\rho_{ab}$. Then by \eqref{eqn37}, $\chi(P_a,b)\leq \chi(P_a,bc)=S(\rho_a)-\sum_j p_j S(\rho_{bcj})\leq S(\rho_a)$. 

To prove $S(a\colo b)\geq \chi(P_a,b)$, as in Sec.~\ref{sbct2.1} think of $P_a$ as a \emph{projective} measurement $\widetilde w_{ae}$ on system $ae$, where $\widetilde w_{ae}$ is a coarse graining of some orthonormal basis (rank-1 projectors) $w_{ae}$. Let $c$ purify $\rho_{ab}$ such that $\rho_{abce}=\rho_{abc}\ot\dya{e_0}$ is a pure state. Then, $S(a\colo b)=S(\rho_{ae})+S(\rho_b)-S(\rho_c)=\chi(w_{ae},bc)+\chi(w_{ae},b)-\chi(w_{ae},c)\geq \chi(w_{ae},b)\geq \chi(\widetilde w_{ae},b)=\chi(P_a,b)$, by the Theorems in Sec.~\ref{sct4}, by \eqref{eqn37}, and by \eqref{eqn36}.
\end{proof}

\xb
\section{Proof of Theorem \ref{thm3}}\label{aps2.2}
\xa

\begin{proof}
(i) For orthonormal basis $w=\{\ket{w_j}\}$, insert \eqref{eqn35} into \eqref{eqn47} to obtain
\begin{align}
 &\chi_K(w,b)-\chi_K(w,c) \nonumber\\
 &= S_K(\rho_b) - S_K(\rho_c) - \sum_j p_j[S_K(\rho_{bj}) - S_K(\rho_{cj})].
\label{eqn99}
\end{align}
The final term vanishes, for the following reason.  
Write $\ket{\Om}=\sum_j\ket{w_j}\ot\ket{s_j}$ in the form \eqref{eqn10}
with $\ket{w_j}$ replacing $\ket{a_j}$, so from \eqref{eqn6} the conditional density operators
in \eqref{eqn99} are given by
\begin{equation}
 p_j\rho_{bj} = \Tr_c\Bigl(\dya{s_j}\Bigr),\quad 
 p_j\rho_{cj} = \Tr_b\Bigl(\dya{s_j}\Bigr).
\label{eqn100}
\end{equation}
Since $\ket{s_j}$ is a pure state the
partial traces
$\rho_{bj}$ and $\rho_{cj}$ have the same eigenvalues (determined by the Schmidt expansion coefficients of $\ket{s_j}$), except one may have more zeros than the other if $d_b\neq d_c$. Since $S_K(\rho)$ is a function only of the nonzero (positive) eigenvalues of $\rho$, each term in the final sum in \eqref{eqn99} vanishes, and we are left with \eqref{eqn51}. The generalization to rank-1 POVMs follows by the equivalence of $N$ to an orthonormal basis $v_{A}$ on $\HC_A$, the Naimark extension of $\HC_a$ as in Sec.~\ref{sbct2.1}. Since $\rho_{Abc}$ is a tripartite pure state, then $\Dl\chi_K(N;b,c)=\Dl\chi_K(v_{A};b,c)=S_K(\rho_b)-S_K(\rho_c)$.

(ii) Equation \eqref{eqn53} follows from \eqref{eqn51} by applying it to a channel ket $\ket{\Om}$ constructed from $V$ by \eqref{eqn7}. Alternatively, it can be proven directly from \eqref{eqn38} and \eqref{eqn48}, obtaining an equation similar to \eqref{eqn99},
\begin{align}
 \Dl \chi_K(P;\EC,\FC)=&S_K(\Upsilon_b/d_a)-S_K(\Upsilon_c/d_a) \nonumber\\
+& \sum_j p_j [S_K(\rho_{bj}) - S_K(\rho_{cj})],
\label{eqn101}
\end{align}
where the final term vanishes again since $\rho_{bj}=\Tr_c[V\rho_{aj}V\ad]$ and $\rho_{cj}=\Tr_b[V\rho_{aj}V\ad]$ have the same (non-zero) spectrum, as the $\rho_{aj}$ in \eqref{eqn39} are rank-1 operators.

Equations \eqref{eqn52} and \eqref{eqn54} follow immediately from \eqref{eqn51} and \eqref{eqn53}, respectively.
\end{proof}

\xb
\section{Proof of Lemma \ref{thm4}}\label{aps2.3}
\xa

\begin{proof}
(i)
\begin{align}
&S(\rho_{ab}||\sum_j \Pi_{j}\rho_{ab}\Pi_{j})\nonumber\\
\label{eqnNew60}&=-S(\rho_{ab})-\Tr[\rho_{ab}\log (\sum_j \Pi_{j}\rho_{ab}\Pi_{j})]\\
\label{eqnNew61}&= -S(\rho_c)-\Tr[\rho_{ab}\sum_k \Pi_{k}\log (\sum_j \Pi_{j}\rho_{ab}\Pi_{j})\sum_l\Pi_{l}]\\
&=-S(\rho_c)-\Tr[\sum_k \Pi_{k}\rho_{ab}  \Pi_{k}\log (\sum_j \Pi_{j}\rho_{ab}\Pi_{j})]\nonumber \\ 
\label{eqnNew62}&-\sum_{k,l\neq k} \Tr[\rho_{ab}  \Pi_{k}\log (\sum_j \Pi_{j}\rho_{ab}\Pi_{j})\Pi_{l}]\\
\label{eqnNew63}&=-S(\rho_c)+S(\sum_j \Pi_{j}\rho_{ab}  \Pi_{j})\\
\label{eqnNew64}&=-S(\rho_c)+H(\Pi)+\sum_j p_j S(\rho_{abj})\\
\label{eqnNew65}&=H(\Pi)-\chi(\Pi,c)= H(\Pi |c),
\end{align}
where $p_j=\Tr(\Pi_{j}\rho_{ab})$ and $p_j\rho_{abj}= \Pi_{j}\rho_{ab}\Pi_{j}$. The last term in \eqref{eqnNew62} disappears because $\log (\sum_j \Pi_{j}\rho_{ab}\Pi_{j}) $ is block diagonal with respect to the $\Pi_j$ projectors, and then one takes an off-diagonal element of it. Step~\eqref{eqnNew64} follows from Lemma~\ref{thm1}, part (i).

(ii) For clarity, we include the subscript $a$ on the POVM $P_a$. Think of $P_a=\{P_{aj}\}$ as a projective measurement $\Pi_A=\{\Pi_{Aj}\}$ on an extended Hilbert space $\HC_A$ (Naimark extension), with $\HC_a$ a subspace and $E_a$ the projector onto this subspace, and $P_{aj}=E_a\Pi_{Aj}E_a$. The state $\rho_{Ab}$ is the same as $\rho_{ab}$ but now just embedded in a larger space, that is: $\rho_{Ab}=E_a\rho_{Ab}E_a=\rho_{ab}$. Let $E^\bot_a$ be the projector onto the orthogonal complement of $\HC_a$, note $E^\bot_a\rho_{Ab}E^\bot_a=0$, and let the channel $\FC$ be defined by $\FC(\rho)= E_a \rho E_a +E^\bot_a \rho E^\bot_a $. Then if $d$ is a system that purifies $\rho_{abc}$, we have:
\begin{align}
\label{eqnNew70}&H(P_a|c)=H(\Pi_A|c)\\
\label{eqnNew71}&\geq H(\Pi_A|cd)=S(\rho_{Ab}||\sum_j \Pi_{Aj}\rho_{Ab}\Pi_{Aj})\\
\label{eqnNew72}&\geq S(\FC(\rho_{Ab})|| \FC(\sum_j \Pi_{Aj}\rho_{Ab}\Pi_{Aj}))\\
&= S(E_a\rho_{Ab} E_a || E_a(\sum_j \Pi_{Aj}\rho_{Ab}\Pi_{Aj}) E_a+\nonumber\\ 
\label{eqnNew73}&E^\bot_a(\sum_j \Pi_{Aj}\rho_{Ab}\Pi_{Aj}) E^\bot_a)\\
\label{eqnNew74}&= S(E_a\rho_{Ab} E_a || \sum_j E_a\Pi_{Aj}E_a\rho_{Ab}E_a\Pi_{Aj} E_a)\\
\label{eqnNew75}&= S(\rho_{ab} || \sum_j P_{aj}\rho_{ab}P_{aj}).
\end{align}
Note that the term with $E^\bot_a$ in \eqref{eqnNew73} disappeared because it lies outside of the support of $E_a\rho_{Ab} E_a$.
\end{proof}

\xb
\section{Proof of Theorem \ref{thm5}}\label{aps2.4}
\xa

\begin{proof}
First let us prove the single-POVM uncertainty relation as follows, defining $\lm_{\max}(\cdot)$ to be the maximum eigenvalue. From Lemma~\ref{thm4},
\begin{align}
\label{eqnNew1}H&(P|b)\geq S(\rho_{ac}||\sum_j P_{j}\rho_{ac}P_{j})\\
\label{eqnNew2}&\geq S(\rho_{c}||\sum_j \Tr_a[P_{j}\rho_{ac}P_{j}])\\
\label{eqnNew3}&\geq S(\rho_{c}||\sum_j \lm_{\max}(P_{j})\Tr_a[P_{j}\rho_{ac}])\\
\label{eqnNew4}&\geq S(\rho_{c}|| \max_j\lm_{\max}(P_{j}) \sum_j \Tr_a[P_{j}\rho_{ac}])\\
\label{eqnNew5}&= S(\rho_{c}|| \max_j\lm_{\max}(P_{j}) \rho_{c})\\
\label{eqnNew6}&= -\log \max_{j} \lm_{\max}(P_{j})= -\log \max_{j} \|P_{j}\|_\infty\\
\label{eqnNew7}& \geq -\log \max_{j,k} \|\sqrt{P_{j}}\sqrt{P_{k}}\|_\infty.
\end{align}
We invoked \eqref{eqn33bb} for step \eqref{eqnNew2}. We used \eqref{eqn33cc} for step \eqref{eqnNew3}, $\lm_{\max}(P_{j})I_a\geq P_j$ which implies $\Tr_a[\lm_{\max}(P_{j})I_aT_{ac}]\geq \Tr_a[P_jT_{ac}]$, where $T_{ac}=\sqrt{P_j}\rho_{ac}\sqrt{P_j}$ is a positive operator. We also used \eqref{eqn33cc} for step \eqref{eqnNew4}, $\max_j \lm_{\max}(P_{j}) \sum_j A_j \geq \sum_j \lm_{\max}(P_{j})A_j$ where the $A_j$ are positive operators.

Now for the two-POVM uncertainty relation, consider the quantum channel [as in \eqref{eqn45aaa}] $\EC_Q(\rho_{ab})=\sum_k \dya{e_k}\ot \Tr_a(Q_{k}\rho_{ab})$ associated with the $Q$ measurement, where $\{\ket{e_k}\}$ is an orthonormal basis of an auxiliary system $e$. One can verify that $\EC_Q$ is trace-preserving, and its complete positivity follows from the fact that $(\EC_Q \ot \IC_c)(\rho_{abc})= \sum_k \dya{e_k}\ot \Tr_a(Q_{k}\rho_{abc})$ is a positive operator for any system $c$, where $\IC_c$ is the identity channel for $c$. Also, define $G_{jk}= \sqrt{P_{j}} Q_{k} \sqrt{P_{j}} $, and note $G_{jk}\leq \lm_{\max}(G_{jk})I_a$, and $r(P,Q)=\max_{j,k}\lm_{\max}(G_{jk})$. Then from Lemma~\ref{thm4},

\begin{widetext}
\begin{align}
\label{eqnNew10}&H(P|c)\geq S(\rho_{ab}||\sum_j P_{j}\rho_{ab}P_{j})\geq S(\EC_Q(\rho_{ab})||\sum_j \EC_Q(P_{j}\rho_{ab}P_{j}))\\
\label{eqnNew14}&= S(\sum_{l} \dya{e_l}\ot \Tr_a\{Q_{l} \rho_{ab}\}||\sum_{j,k} \dya{e_k}\ot \Tr_a\{G_{jk}\sqrt{P_j}\rho_{ab}\sqrt{P_j}\})\\
\label{eqnNew15}&\geq S(\sum_{l} \dya{e_l}\ot \Tr_a\{Q_{l} \rho_{ab}\}|| \sum_{j,k} \lm_{\max}(G_{jk})\dya{e_k}\ot \Tr_a\{P_{j}\rho_{ab}\})\\
\label{eqnNew16}&\geq S(\sum_{l} \dya{e_l}\ot \Tr_a\{Q_{l} \rho_{ab}\}|| r(P,Q) I_e \ot \rho_{b})\\
\label{eqnNew17}&= -\log  r(P,Q) - S(\sum_{l} \dya{e_l}\ot \Tr_a\{Q_{l} \rho_{ab}\})- \Tr[(\sum_{l} \dya{e_l}\ot \Tr_a\{Q_{l} \rho_{ab}\})\log ( I_e \ot \rho_{b})]\\
\label{eqnNew18}&= -\log  r(P,Q) - H(Q)-\sum_l q_l S(\rho^{Q}_{bl}) +S(\rho_b)= -\log  r(P,Q) - H(Q|b),
\end{align}
\end{widetext}
where $q_l = \Tr(Q_{l} \rho_{ab})$ and $q_l \rho^{Q}_{bl}= \Tr_a(Q_{l} \rho_{ab})$. We invoked \eqref{eqn33bb} for step \eqref{eqnNew10}, and we invoked \eqref{eqn33cc} for steps \eqref{eqnNew15} and \eqref{eqnNew16}. [For \eqref{eqnNew16}, $\lm_{\max}(G_{jk})\leq r(P,Q)$ for each $j,k$, so replacing each $\lm_{\max}(G_{jk})$ with $ r(P,Q)$ makes the overall operator larger.] Step~\eqref{eqnNew18} involves Lemma~\ref{thm1}, part (i).
\end{proof}

\xb
\section{Proof of Corollary \ref{thm6}}\label{aps2.5}
\xa

\begin{proof}
(i) Consider a channel ket $\ket{\Om}$ on $\HC_{abc}$ with $P=\{P_j\}$ and $Q=\{Q_k\}$ two POVMs on $a$, and apply \eqref{eqn57} and \eqref{eqn58} to $\ket{\Om}$:
\begin{align}
\label{eqn111}
\chi(P,b)&\leq H(P)-\log [1/\sqrt{r(P,P)}] \nonumber\\
\chi(P,b)+\chi(Q,c)&\leq H(P)+H(Q)-\log [1/r(P,Q)].
\end{align}
Now decompose $\ket{\Om}=(I_a\ot V)\ket{\Phi}$ as in Sec.~\ref{sbct2.2} and Fig.~\ref{fgr1}, where system $a'$ (of the same dimension as $a$) is introduced and fed into isometry $V$, and the state $\ket{\Phi} =(1/\sqrt{d_a}) \sum_j \ket{j}_a\ot\ket{j}_{a'}$ is maximally entangled, expanded here in the computational bases on $a$ and $a'$. By map-state duality \cite{Gri05}, think of $\ket{\Phi}$ as an isometry $\hat V$ from $\HC_a$ to $\HC_{a'}$, with $\hat V\ad\hat V =I_{a}$ \emph{and} $\hat V\hat V\ad=I_{a'}$ since $d_a=d_{a'}$. This means that $\widetilde P=\{\widetilde P_j\}=\{\hat V P_j \hat V\ad\} $ and $\widetilde Q=\{\widetilde Q_k\}=\{\hat V Q_k \hat V\ad \}$ are POVMs on $a'$. If outcome $P_j$ of $P$ occurs on $a$, then element $\widetilde P_j$ will get fed into the isometry $V$, so $\chi(P,b)=\chi(\widetilde P,\EC)$ and likewise $\chi(Q,c)=\chi(\widetilde Q,\FC)$, where $\EC$ and $\FC$ are the (complementary) channels to $b$ and $c$, respectively, associated with isometry $V$. Also, since $\rho_a=I_a/d_a$ for a channel ket, the probability for $P_j$ in \eqref{eqn6} given by $p_j=\Tr(P_j\rho_a)=\Tr(P_j)/d_a$ reduces to the corresponding formula in \eqref{eqn39}, so $H(P)=H(\widetilde P)$ and likewise $H(Q)=H(\widetilde Q)$. Finally, show that $r(\widetilde P, \widetilde Q)=r(P,Q)$ as follows:
\begin{align}
\label{eqn112}
&\| (\widetilde P_j)^{1/2}(\widetilde Q_k)^{1/2}\|^2_\infty =\lm_{\text{max}}[\hat V(Q_k)^{1/2}P_j(Q_k)^{1/2} \hat V\ad]\nonumber\\
&= \lm_{\text{max}}[(Q_k)^{1/2}P_j(Q_k)^{1/2}]= \| (P_j)^{1/2}(Q_k)^{1/2}\|^2_\infty,
\end{align}
where $\lm_{\text{max}}[\cdot]$ denotes the maximum eigenvalue and we used the fact that $(\widetilde Q_k)^{1/2}=\hat V(Q_k)^{1/2}\hat V\ad$, which follows from $[\hat V(Q_k)^{1/2}\hat V\ad]^2= \hat VQ_k \hat V\ad $ since $(Q_k)^{1/2}$ and $\hat V(Q_k)^{1/2} \hat V\ad$ are positive operators. Thus from \eqref{eqn111},
\begin{align}
\label{eqn113}
\chi(\widetilde P,\EC)&\leq H(\widetilde P)-\log [1/\sqrt{r(\widetilde P, \widetilde P)}]\nonumber\\
\chi(\widetilde P,\EC)+\chi(\widetilde Q,\FC)&\leq H(\widetilde P)+H(\widetilde Q)-\log [1/r(\widetilde P, \widetilde Q)].
\end{align}
Since $\hat V$ is a one-to-one mapping of the set of POVMs on $a$ to the set of POVMs on $a'$, then \eqref{eqn113} must be true for \emph{all} POVMs on $a'$, and one can replace $\widetilde P$ and $\widetilde Q$ with $P$ and $Q$ in \eqref{eqn113} for simplicity.

(ii) Equation~\eqref{eqn67} follows from \eqref{eqn66} since $H(v)=H(w)=\log d_a$ from \eqref{eqn39}.
\end{proof}

\xb
\section{Proof of Corollary \ref{thm7}}\label{aps2.6}
\xa

\begin{proof}
(i) For \eqref{eqn69}, let $b$ be a system that purifies $\rho_a$, apply \eqref{eqn58}, and by Theorem~\ref{thm2}, $\chi(N,b)=S(\rho_a)$. For \eqref{eqn70}, again let $b$ purify $\rho_a$, and apply \eqref{eqn57}. System $c$ is completely uncorrelated to $a$, so $H(P|c)=H(P)$, and by Theorem~\ref{thm2}, $\chi(N,b)=S(\rho_a)$. 

(ii) Equation \eqref{eqn71} follows from \eqref{eqn70} applied to MUBs $x$ and $y$:
\begin{equation}
\label{eqn114}
H(x)+H(y) \geq \log 2 +S(\rho_a).
\end{equation}
Denote $X$, $Y$, and $Z$ as the Pauli operators whose eigenvectors are the $x$, $y$, and $z$ bases. Consider a (possibly mixed) state in the $xy$ plane of the Bloch sphere: 
\begin{equation}
\label{eqn115}
\rho_a=(I_a+\alpha X+\beta Y)/2,
\end{equation}
for which $H(z)=\log 2$, so \eqref{eqn71} clearly holds for states of this form using \eqref{eqn114}. Now consider varying $\rho_a$ along a vertical path within the Bloch sphere, from the state $\rho_a$ (in the $xy$ plane) to a state $\rho'_a$ with some $z$ component but with the same $x$ and $y$ components:
\begin{equation}
\label{eqn116}
\rho'_a=(I_a+\alpha X+\beta Y+\gamma Z)/2,
\end{equation}
Denoting the relevant state with a subscript, note that $H(x)_{\rho_a}=H(x)_{\rho'_a}$ and $H(y)_{\rho_a}=H(y)_{\rho'_a}$ remain constant, so to prove \eqref{eqn71} for general states $\rho'_a$, we just need to show that $H(z)$ decreases more slowly than $S(\rho_a)$ along this path. This would be true if:
\begin{equation}
\label{eqn117}
H(z)_{\rho'_a}-S(\rho'_a) \geq H(z)_{\rho_a}-S(\rho_a)=\log 2-S(\rho_a).
\end{equation}
Due to the isotropic nature of the Bloch sphere, it is sufficient to check that \eqref{eqn117} holds for an initial state along the $x$-axis: $\rho_a=(I_a+\alpha X)/2$ and $\rho'_a=(I_a+\alpha X+\gamma Z)/2$, since $H(z)$ and $S(\rho_a)$ will vary in the same way along a vertical path regardless of an initial unitary rotation about $z$. But for such a state, $S(\rho_a)=H(x)_{\rho_a}=H(x)_{\rho'_a}$, and \eqref{eqn117} reduces to $H(z)_{\rho'_a}+ H(x)_{\rho'_a} \geq \log 2+S(\rho'_a)$, which is \eqref{eqn114} applied to MUBs $z$ and $x$. Thus, varying along a vertical path from a state in the $xy$ plane to a state with some $z$-component keeps the values of $H(x)$ and $H(y)$ constant, while not decreasing the value of $H(z)-S(\rho_a)$, proving the result in general.
\end{proof}

\xb
\section{Proof of Theorem \ref{thm8}}\label{aps2.7}
\xa

\begin{proof}

Let $c$ be a system that purifies $\rho_{ab}$. Re-write \eqref{eqn57} as
\begin{align}
\label{eqn121}
\chi(M,c)&\leq H(P|b)+H(M)+\log r(P,M),\nonumber\\
\chi(N,c)&\leq H(P|b)+H(N)+\log r(P,N).
\end{align}
Taken together, these two inequalities give an upper bound on the difference $|\chi(M,c)-\chi(N,c)|$. The difference is at most the one computed by allowing the $\chi$ quantity with the highest upper bound in \eqref{eqn121} to reach its bound, and allowing the other $\chi$ quantity to be zero. Thus,
\begin{align}
\label{eqn122}
&|\chi(M,c)-\chi(N,c)|\leq H(P|b)+ \nonumber\\
&\max \{H(M)+\log r(P,M),H(N)+\log r(P,N)\}.
\end{align}
By \eqref{eqn52}, substitute $b$ for $c$ on the left-hand-side. 

Rearranging \eqref{eqn121} to lower bound $H(M|c)$ and $H(N|c)$, and upper-bounding each respectively by $H(M)$ and $H(N)$, we can upper-bound their difference by the (maximum) difference between the upper bound of one and the lower bound of the other:
\begin{align}
\label{eqn123}
&|H(M|c)-H(N|c)|\leq H(P|b)+ \nonumber\\
&\max \{H(M)+\log r(P,N),H(N)+\log r(P,M)\}.
\end{align}
Again invoke \eqref{eqn52} to switch from $c$ to $b$ and obtain \eqref{eqn72}.

Now assuming $u$ and $v$ are MU with respect to $w$, \eqref{eqn73} follows from \eqref{eqn72} by setting $r(u,w)= r(v,w)=1/d_a$, and by noting that $H(u)\leq \log(d_a)$ and likewise for $H(v)$, so that the $\max \{\}$ term in \eqref{eqn72} is non-positive. 

Further specializing to the case of $H(w|b)=0$ and $v$ MU to $w$, then \eqref{eqn61} implies $H(v|c)=\log d_a$ and $\chi(v,c)=0$, and in turn that $\chi_K(v,c)=0$, because all $\chi_K$ measures are zero under the same conditions. Then by Theorem~\ref{thm3}, $H(v|b)=H(v|c)-\Dl\chi(b,c)= \log d_a+S(a|b)$, and $\chi_K(v,b)=\Dl \chi_K(b,c)=S_K(\rho_b)-S_K(\rho_{ab})$. 
\end{proof}

\xb
\section{Proof of Theorem \ref{thm10}}\label{aps2.8}
\xa

\begin{proof}
(i) First let $c$ purify $\rho_{ab}$, and by Theorem~\ref{thm3}, add the basis-invariant quantity $\Dl \chi(c,b)=H(N|b)-H(N|c)=S(\rho_c)-S(\rho_b)=S(a|b)$ to both sides of \eqref{eqn57}, setting $Q=N$, to obtain \eqref{eqn75}. Now to prove \eqref{eqn76}, let $cd$ purify $\rho_{ab}$ so that $\rho_{abc}=\Tr_d(\rho_{abcd})$ is a general (possibly mixed) state. Again by Theorem~\ref{thm3}, $[-S(a|b)]=\chi(M,b)-\chi(M,cd)\leq \chi(M,b)$ and $[-S(a|b)]=H(M|cd)-H(M|b)\leq H(M|cd) \leq H(M|c)$ by \eqref{eqn37}.

(ii) The argument for complementary quantum channels is the same. Add the basis-invariant quantity $\Dl \chi(\EC,\FC)$ to \eqref{eqn67} to obtain \eqref{eqn77}, and obtain \eqref{eqn78} using $\chi(u,\FC)\leqslant \log d_a-\Dl \chi(\EC,\FC)$.

(iii) Equation \eqref{eqn79} follows from $S(a\colo b)/2\geq -S(a|b)\geq \log d_a-[H(v|b)+H(w|b)]$. For \eqref{eqn80}, let $cd$ purify $\rho_{ab}$, then $H(v|b)+H(w|b)\geq \log d_a+S(a|b)\geq S(\rho_a)+S(a|b)=S(a\colo cd)\geq S(a\colo c)$.
\end{proof}

\xb
\section{Proof of Theorem \ref{thm11}}\label{aps2.9}
\xa

\begin{proof}
(i) First let us prove \eqref{eqn81} for pure $\rho_{abc}=\dya{\Om}$.
\begin{align}
\label{eqn124}
S(a\colo c)& = S(\rho_a)-\Dl \chi(b,c) \nonumber\\
&\leq H(N)-\log [1/\sqrt{r(N,N)}]-\Dl \chi(b,c) \nonumber\\
&= H(N|b)+\chi(N,c)-\log [1/\sqrt{r(N,N)}],
\end{align}
where the first line follows from Theorem~\ref{thm3}, and the second line is from \eqref{eqn69}. Now consider any $\rho_{abc}$. Apply the just-proven result \eqref{eqn124} to $\rho_{abcd}$:
\begin{equation}
\label{eqn125}
S(a\colo c)\leq H(N|bd)+\chi(N,c)-\log [1/\sqrt{r(N,N)}].
\end{equation}
where $\rho_{abcd}$ is a purification of $\rho_{abc}$. Then, \eqref{eqn81} is obtained by noting that $H(N|bd)\leq H(N|b)$ from \eqref{eqn37}.

If information about a rank-1 POVM $N$ is perfectly present in $b$, this implies that the elements of $N$ must be orthogonal and hence normalized, i.e.\ $N$ is some orthonormal basis $w=\{\ket{w_j}\}$. (The outputs $\rho_{bj}$ cannot all be orthogonal if the inputs are not orthogonal.) By the Truncation theorem of \cite{Gri07}, $\rho_{ac}=\sum_j p_j \dya{w_j}\ot \rho_{cj}$, i.e.\ $c$ is at-most classically correlated to the $w$ basis on $a$. Then the conditional density operators on $c$ ($\sigma_{ck}$ occurring with probability $q_k$) associated with POVM $P=\{P_k\}$ are related to those associated with the $w$ basis by $q_k \sigma_{ck} = \Tr_a(P_{k}\rho_{ac})=\sum_j M_{kj} p_j \rho_{cj}$, where $M_{kj}=\bra{w_j}P_{k}\ket{w_j}$. Now use the concavity of the entropy $S_K$ (all of our entropy functions have this property, see Sec.~\ref{sbct3.1}) and $\sum_k M_{kj}=1$ to show that:
\begin{align}
\label{eqn126}
\chi_K(P,c)&=S_K(\rho_c)-\sum_k q_k S_K(\sigma_{ck})\nonumber\\
&\leq S_K(\rho_c)-\sum_{k,j} M_{kj} p_j S_K(\rho_{cj})\nonumber\\
&= S_K(\rho_c)-\sum_j p_j S_K(\rho_{cj})=\chi_K(w,c).
\end{align}
The remark that $\rho_{ac}=\rho_a \ot \rho_c$ when all types are absent from $c$ seems obvious, although it is rigorously proven in Theorem 1 of \cite{Gri05}.

(ii) To prove \eqref{eqn83} for pure states, note that the right-hand-side of \eqref{eqn81} is an upper bound on $\chi(L,c)$ and $\chi(M,c)$ by \eqref{eqn42}, so it must also upper-bound their difference:
\begin{equation}
\label{eqn127}
|\chi(L,c)-\chi(M,c)| \leq \chi(N,c)+H(N|b)-\log[1/\sqrt{r(N,N)}].
\end{equation}
By \eqref{eqn52}, $b$ can replace $c$ on the left-hand-side.

In the case where information about $N$ is perfectly present in $b$ and absent from $c$, $\rho_{ac}=\rho_a\ot \rho_c$ by part (i) of this theorem, and $S(\rho_b)=S(\rho_{ac})=S(\rho_a)+S(\rho_c)$ by the additivity of $S$ for product states. Thus by Theorem~\ref{thm3}, for any rank-1 POVM $L$, $\chi(L,b)=\chi(L,b)-\chi(L,c)=S(\rho_b) -S(\rho_c)=S(\rho_a)$.

(iii) Equation \eqref{eqn85} follows immediately from $\chi(v,\FC) \leq \log d_a -\Dl \chi(\EC,\FC)$ and $\chi(v,\EC) \geq \Dl \chi(\EC,\FC)$, where  $\Dl \chi(\EC,\FC)= \chi(w,\EC)- \chi(w,\FC)$ is basis-invariant by Theorem~\ref{thm3}. In the extreme case where the $w$ type of information is perfectly present in $\EC$ and absent from $\FC$, $\Dl \chi(\EC,\FC)=\log d_a$, hence $\chi(v,\FC)=0$ and $\chi(v,\EC)=\log d_a$ for \emph{all} $v$.
\end{proof}

\xb
\xa

\bibliographystyle{unsrt}
\bibliography{infosplitting}

\xb

\end{document}